%% file: paper.tex
\documentclass[9pt,journal,compsoc]{IEEEtran}
\IEEEoverridecommandlockouts

\ifCLASSOPTIONcompsoc
  \usepackage[nocompress]{cite}
\else
  \usepackage{cite}
\fi

\title{Codes for Metastability-Containing Addition
}

\author{\IEEEauthorblockN{Johannes Bund\IEEEauthorrefmark{1}, \and Christoph Lenzen\IEEEauthorrefmark{2}, \and Moti Medina\IEEEauthorrefmark{3}}\\\vspace{1em}
\IEEEauthorblockA{\IEEEauthorrefmark{1}\textit{Université Paris-Saclay, CNRS, ENS Paris-Saclay, Laboratoire Méthodes Formelles}\\
91940 Gif-sur-Yvette, France \\
jobund@lmf.cnrs.fr}\\\vspace{.5em}
\IEEEauthorblockA{\IEEEauthorrefmark{2}\textit{CISPA Helmholtz Center for Information Security}\\
 66123 Saarbr\"ucken, Germany \\
 lenzen@cispa.de} \\\vspace{.5em}
 \IEEEauthorblockA{\IEEEauthorrefmark{3}\textit{Alexander Kofkin Faculty of Engineering, Bar-Ilan University}\\
5290002 Ramat Gan, Israel \\
moti.medina@biu.ac.il}
\thanks{
{This work has been accepted for publication at IEEE Transactions on Computers.
\emph{Funding:} This research was supported by the Israel Science Foundation under Grants
867/19 and 554/23.}
{This research was funded in part
by the French Agence Nationale de la Recherche (ANR) under the project DREAMY (ANR-21-CE48-0003).}
{Johannes Bund was affiliated with CISPA Helmholtz Center for Information Security and Bar-Ilan University when working on the technical parts of this paper.}
{During parts of this work Christoph Lenzen was affiliated with Aalto University.}
}}

\usepackage{amsmath}
\usepackage{amsthm}
\usepackage{amssymb}
\usepackage{mathtools}

\usepackage{relsize}
\usepackage{multirow}

\usepackage{hyperref}
\usepackage[capitalize]{cleveref}
\usepackage{comment}
\usepackage{soul}
\usepackage{color}

\usepackage{tikz}
\usetikzlibrary{automata,positioning}

\usepackage{colortbl}
\definecolor{Gray}{gray}{.8}

\graphicspath{{./figures/}}

\newtheorem{theorem}{Theorem}[section]
\newtheorem{corollary}[theorem]{Corollary}
\newtheorem{definition}[theorem]{Definition}

\newtheorem{lemma}[theorem]{Lemma}
\newtheorem{observation}[theorem]{Observation}

\theoremstyle{remark}
\newtheorem*{remark}{Remark}

\DeclareMathOperator{\res}{res}
\DeclareMathOperator{\map}{map}

\DeclareMathOperator{\parity}{par}

\newcommand{\ovf}{\textsf{ovf}}
\newcommand{\starr}{*}
\newcommand{\bigstarr}{\mathop{\raisebox{-.7pt}{\ensuremath{\mathlarger{\mathlarger{\mathlarger{*}}}}}}}
\newcommand{\mfu}{\mathtt{M}}
\newcommand{\mfx}{\mathtt{X}}

\newcommand{\IN}{\mathbb{N}}
\newcommand{\IB}{\mathbb{B}}
\newcommand{\IT}{\mathbb{T}}

\newcommand{\AND}{\textsf{and}}
\newcommand{\XOR}{\textsf{xor}}
\newcommand{\OR}{\textsf{or}}
\newcommand{\NOT}{\textsf{not}}
\newcommand{\MUX}{\textsf{MUX}}

\newcommand{\ceil}[1]{\lceil{#1}\rceil}
\newcommand{\floor}[1]{\lfloor{#1}\rfloor}
\newcommand{\ang}[1]{\langle{#1}\rangle}

\newcommand{\intvl}[1]{\ang{#1}}

\newcommand{\gBin}[1]{\gamma_{#1}^{\operatorname{b}}}

\newcommand{\gBRGC}[1]{\gamma_{#1}^{\operatorname{g}}}

\newcommand{\gUnary}[1]{\gamma_{#1}^{\operatorname{u}}}
\newcommand{\gbUnary}[1]{\overline{\gamma}_{#1}^{\operatorname{u}}}
\newcommand{\gtUnary}{(\tilde{\gamma}^{\operatorname{u}})^{-1}}
\newcommand{\gtHyb}{(\widetilde{\gamma}^{\operatorname{h}})^{-1}}

\newcommand{\gHyb}[1]{\gamma_{#1}^{\operatorname{h}}}
\newcommand{\mUnary}[1]{\map_{#1}^{\operatorname{u}}}

\newcommand{\BLM}{{\textsf{ADD}}_{n,k}}

\newcommand{\lmax}[1]{\ell^{\max}_{#1}}
\newcommand{\lmin}[1]{\ell^{\min}_{#1}}

\newcommand{\BO}{\mathcal{O}}

\allowdisplaybreaks

\begin{document}

\maketitle
\begin{abstract}

\input{abstract}
\end{abstract}
\begin{IEEEkeywords}
  interval addition, metastability-containment, recoverable codes
\end{IEEEkeywords}
\input{intro}
\input{definitions}
\input{lower_bound}
\input{codes}

\input{old_code}
\input{addition}
\input{related}
\input{conclusion}

\ifCLASSOPTIONcompsoc
 \section*{Acknowledgments}
\else
 \section*{Acknowledgment}
\fi
We thank Matthias F\"ugger for valuable discussions.

\ifCLASSOPTIONcaptionsoff
  \newpage
\fi

\bibliographystyle{IEEEtran}
\bibliography{IEEEabrv, comp}

\end{document}

%% file: abstract.tex
We investigate the fundamental task of addition under uncertainty, namely, addends that are represented as intervals of numbers rather than single values. One potential source of such uncertainty can occur when obtaining discrete-valued measurements of analog values, which are prone to metastability. Naturally, unstable bits impact gate-level and, consequently, circuit-level computations.  
Using Binary encoding for such an addition produces a sum with an amplified imprecision. 


Hence, the challenge is to devise an encoding that does not amplify the imprecision caused by unstable bits. We call such codes \emph{recoverable.} 
While this challenge is easily met for unary encoding, no suitable codes of high rates are known.    
%
%
In this work, we prove an upper bound on the rate of preserving and recoverable codes for a given bound on the addends' combined uncertainty. We then design an asymptotically optimal code that preserves the addends' combined uncertainty.

We then discuss how to obtain adders for our code. The approach can be used with any known or future construction for containing metastability of the inputs. We conjecture that careful design based on existing techniques can lead to significant latency reduction.


%% file: intro.tex
\section{Introduction}\label{sec:intro}

At first glance, addition of inputs carrying uncertainty is straightforward:
representing the inputs as intervals, one simply adds the lower and upper bounds of these input intervals, respectively, yielding an interval specifying the possible outputs.
However, this drastically changes when instability of inputs interferes with computation.
For example, such interference can be caused by metastability.
Metastable signals disrupt the binary representation of inputs, making it impossible to represent the lower and upper bounds of an input interval.
Previous results show that metastability-containing circuits can mitigate the loss of information due to metastable signals~\cite{friedrichs18containing}.
We say that a combinational circuit is \emph{metastability-containing (mc)} if it does not deviate from the best possible response to inputs with metastable bits (for a formal definition, see Section~\ref{sec:meta}). 
In \cite{bund2025small}, we observe that the choice of encoding matters when facing metastable inputs.
In this work, we show that the careful design of encoding and metastability-containing addition circuits makes it possible to maintain the implicit lower and upper bounds of input intervals, preventing the loss of information.

In the following example, we show that \emph{encoding matters} when facing metastable bits, i.e., even if one uses a metastability-containing binary adder, then information is still lost. 

\begin{table}
  \caption{Addition of binary codewords, using a standard VHDL implementation of a serial adder, with input signals $(\operatorname{X}, \operatorname{Y})$ and output signal ($\operatorname{SUM}$). Metastable signals are represented by the value $\mfx$.}
  \label{tab:simu_bin}
  \centering
  \begin{tabular}{rll}
    $\operatorname{X}$ & $0001\,1001$ $(\{25\})$ & $0001\,101\mfx$ $(\{26,27\})$ \\
    $\operatorname{Y}$ & $0010\,0101$ $(\{37\})$ & $0010\,0101$ $(\{37\})$ \\ \hline
    $\operatorname{SUM}$ & $0011\,1110$ $(\{62\})$ & $0\mfx\mfx\mfx\,\mfx\mfx\mfx\mfx$ $(\{0,\hdots,127\})$ 
  \end{tabular}
\end{table}

\subsubsection*{Addition Under Uncertainty: First Example}
A classic example of loss of information due to metastability
  is addition under standard binary encoding.
In \cref{tab:simu_bin} we simulate a classical combinational adder implemented in VHDL. For stable inputs (binary encodings of $25$ and $37$), the circuit produces the correct output (binary encoding of $62$). However, for a single metastable input bit, the circuit amplifies the uncertainty. When switching between the encoding of $26$ and $27$, the least significant bit may become metastable. 
This example causes $7$ out of $8$ output bits to become metastable.
Resolving each $\mfx$ to either $0$ or $1$ can produce any codeword from $0$ to 
  $127$.

Let us consider a naive solution where Gray code is used, which has at most one metastable bit in each codeword, representing intervals of size $2$. These arise when using Time-to-Digital converters (TDCs) to generate the addends, as these can be designed to deterministically ensure at most one metastable bit in a Gray code output~\cite{fuegger17aware}.
In this case, adding two or more values with a single metastable bit each must aggregate the corresponding uncertainty. In turn, uncertainty is again increased, e.g., adding $0\mfx1$ and $11\mfx$ that corresponds to $\{1,2\}$ and $\{4,5\}$ yields $1\mfx\mfx$ which corresponds to $\{4,\ldots, 7\}$, instead of the desired $\{5,6,7\}$. 

Informally, we use the term \emph{precision} to describe how sharply an encoded value is determined: the fewer consecutive integer values it may represent, the higher its precision. We parameterize limited precision by an \emph{imprecision} $p \geq 0$: a codeword with imprecision $p$ may
resolve to at most $p+1$ consecutive integer values. In the following sections, we make this notion gradually more formal.

\subsubsection*{Our Main Research Question}
Arguably, the main challenge we overcome lies in formalizing the concept of addition under uncertainty by
introducing the notion of ``precision'' and pitching it against a code's rate.
Intuitively, the precision of an input is the range of numbers it represents; for instance, the range $\{7,8,9
\}$ has precision~$2$.
A suitable encoding must preserve this precision in the sense that after metastability is resolved, we can still
reliably recover a number from this range.

In prior work (cf.~\cite{bund20optimal} and \cite{bund2025small}), we observe that the amount of imprecision that can be tolerated is restricted by the
  encoding of the inputs.
In the Boolean world, the choice of encoding may influence the complexity
  of an operation, but not its precision.
In the face of metastability, however, the encoding might cause unnecessary loss of
  information (cf.\ Table~\ref{tab:simu_bin}).
In classic designs, the consequence is that synchronizers must ensure that \emph{no} metastability is propagated into the addition circuit.

Note that an ``inefficient'' precision-preserving encoding and a corresponding addition circuit are attainable.  
To see this, consider the simple solution based on a unary thermometer code combined with an mc sorting circuit.
The unary thermometer code preserves precision, and mc sorting of the union of the input words' bits implements the addition. 
Optimal mc sorting circuits are known~\cite{bund20optimal}.
Therefore, this naive mc solution is efficient in $n$. 
However, this is misleading: using $n$ bits, it can only add numbers from the range $\{0,\ldots,n\}$, while a classic adder operates on exponentially smaller encodings, i.e., handles the range $\{0,\ldots,2^n-1\}$ using the same number of bits.

Indeed, as we show later on (cf.~\cref{lem:recoverability}), without a bound on the 
  number of metastable bits, nothing can be done to improve this situation, because $n$ bits are required for a code preserving precision~$n$. 
Conversely, Gray codes have optimal rate and can handle inputs that correspond to adjacent numbers, e.g., $\{2,3\}$ or $\{62,63\}$ (cf.~\cref{obs:gray_code}). 
This gives rise to the following question. 
\begin{quote}
    \emph{Is there a high rate code that (i) preserves precision when adding codewords with a bounded number of metastable bits and (ii) allows for efficient implementation by a circuit?}
\end{quote}

\begin{table}
  \caption{Addition of hybrid codewords in VHDL, using a metastability-containing implementation of the addition scheme in 
           \cref{sec:addition}. Codewords are of length $8$, where $n=5$ and $k=3$. Metastable signals are represented by the value $\mfx$.}
  \label{tab:simu_mc}
  \centering
  \begin{tabular}{rll}
    $\operatorname{X}$ & $00101\,1\mfx0$ $(\{25,26\})$ & $01\mfx10\,\mfx00$ $(\{47,48,49\})$ \\
    $\operatorname{Y}$ & $01101\,011$ $(\{37\})$ & $00111\,011$ $(\{21\})$ \\ \hline
    $\operatorname{SUM}$ & $01000\,00\mfx$ $(\{62,63\})$ & $11001\,\mfx\mfx1$ $(\{68,69,70\})$ 
  \end{tabular}
\end{table}

\subsection{Our Contributions}\label{sec:contrib}
In this paper, we establish the following results. 

\medskip
\noindent
\textbf{Formulating Preserving and Recoverable Codes.}
We formalize the concept of addition under uncertainty by introducing the code properties \emph{preservation} and \emph{recoverability}. Informally, in the context of addition, we say that a code is preserving if the range of codewords that can be obtained by resolving metastability in the output of an mc addition circuit matches the sum of the addends' uncertainties. Furthermore, a preserving code is deemed recoverable if one can fix a mapping from \emph{all} strings to codewords such that after resolving metastability, applying this mapping yields a codeword from the above range.

\medskip
\noindent
\textbf{Upper Bound on The Rate of Recoverable Codes.}
We prove that the rate of an $n$-bit code preserving $\ell$-precision is bounded by $\BO(2^{n-\ell}\ell/n)$, where $\ell\in\IN$ is a bound on the imprecision. 

\medskip
\noindent
\textbf{An Asymptotically Optimal Code.}
We present an $(n+\ell)$-bit code that is $\ell$-precise.
In fact, the code is $\ell$-preserving and $\ceil{\ell/2}$-recoverable, 
indicating that intervals of size $\ell$ can be represented by a codeword and 
we can recover intervals of size $\ceil{\ell/2}$ from metastable signals
(see Definition~\ref{def:preserve} and Definition~\ref{def:recover}).

\medskip
\noindent
\textbf{MC Addition Circuit for Our Code.}
For the $\ell$-precise code we present an addition circuit of size $\BO(n+\ell)$ and depth $\BO(\log n + \log \ell)$.
However, this circuit implements addition only for stable codewords.
Two more steps are required: (i) mapping (stable) non-codewords to codewords, which is straightforward; and (ii) creating an mc circuit with the same behavior on stable inputs.
A purely combinational mc circuit of size $(n+\ell)^{\BO(\ell)}$ and depth $\BO(\ell\log n)$ is obtained using a construction from~\cite{ikenmeyer18complexity}.
Relying on so-called masking registers, a construction from~\cite{friedrichs18containing} yields a circuit of size $\BO(\ell(n + \ell))$ and depth $\BO(\log n + \log \ell)$.

\medskip
The following example shows how an mc adder of codewords from our code does not lose precision. 
\subsubsection*{Addition Under Uncertainty: Second Example}
In \cref{tab:simu_mc} we show the addition of codewords from our ``hybrid'' code (cf.~\cref{sec:hybcode}).
In the example of \cref{tab:simu_mc}, we use an $8$-bit hybrid encoding with a
5-bit binary reflected Gray code (BRGC) part and a $3$-bit unary variant. We start from the all-zero
word $00000\,000$ and define an up-counting sequence as follows. While the unary part is neither all $0$ nor all $1$, an up-count changes only the unary part, following the pattern
\[
000 \to 100 \to 110 \to 111 \to 011 \to 001 \to 000,
\]
so that consecutive words differ in a single unary bit. Whenever the unary part is $111$ or $000$, the next up-count flips only the BRGC
part (advancing one step in the 5-bit BRGC) and leaves the unary part unchanged; after this, the unary pattern continues as above. This yields a global Gray sequence on the $8$-bit codewords.

For the concrete integer values that appear in \cref{tab:simu_mc}, the corresponding hybrid codewords (with BRGC and unary parts separated by
a space) are:
\begin{align*}
21 &\mapsto 00111\ 011,
25 \mapsto 00101\ 100,
26 \mapsto 00101\ 110, \\
37 &\mapsto 01101\ 011,
47 \mapsto 01110\ 000,
48 \mapsto 01010\ 000, \\
49 &\mapsto 01010\ 100,
62 \mapsto 01000\ 001,
63 \mapsto 01000\ 000, \\
68 &\mapsto 11001\ 111,
69 \mapsto 11001\ 011,
70 \mapsto 11001\ 001.
\end{align*}
Here, the BRGC part determines the coarse position, and the unary part refines it within a small range; the integer intervals shown in \cref{tab:simu_mc} are obtained by interpreting these words via the hybrid
encoding.

Metastable signals are represented by the value $\mfx$.
The simulated circuit is a VHDL metastability-containing implementation of the circuit presented
  in \cref{sec:addition}. 
Adding an input that may represent $25$ or $26$ and the encoding 
  of $37$ correctly results in a codeword that may represent $62$ or $63$.
Our hybrid code is parameterized by $k$.
For $k=3$ the code allows imprecision up to $2$, i.e., a codeword that 
  represents $3$ consecutive numbers that can be added correctly.
For example, the addition of the codeword that represents $47$, $48$, 
  and $49$ plus the encoding of $21$ correctly results in the 
  codeword that represents $68$, $69$, and $70$.

\subsection{Dealing with Metastability: From Synchronizers to Metastability Containing Circuits}

Metastability is a spurious mode of operation in digital circuits.
Any clock domain crossing, analog-to-digital, or time-to-digital 
  conversion inherently bears the risk of metastable signals~\cite{ginosar11tutorial}.
The phenomenon involves bistable storage elements that are in a volatile state.
The output of a metastable storage element may be stuck at a voltage level between logical $0$ and
  logical $1$, it may be oscillating, or it can have late transitions. 
Metastable signals may break timing constraints and input specifications of gates
  and storage elements. 
Hence, they propagate through the digital circuit, possibly affecting other signals. 
Chances of circuit malfunction are usually denoted by the mean time between failures (MTBF).

By their nature, metastable signals cause soft errors with potentially critical 
  consequences.
At any point in time, a metastable signal may be read as logical $0$ or logical $1$.
Hence, metastability causes uncertainty regarding whether the encoded value is $0$ or $1$.
No digital circuit 
  can deterministically avoid, resolve, or detect metastability~\cite{marino1977effect}.

Trends in VLSI circuit technology further boost the risk of metastable upsets in storage elements.
Factors like increased complexity, higher clock frequency, more 
  unsynchronized clock domains, lower operating voltage, and smaller
  transistor technologies reduce the MTBF~\cite{ginosar2003fourteen,beer2013metastability}.

A common approach to lowering the risk of propagating metastability is to employ 
  synchronizers~\cite{kinniment08}.
A synchronizer is a device that amplifies the resolution of a metastable signal towards logical $0$
  or logical $1$.
Essentially, synchronizers trade time for reliability. 
The more time is allocated for the resolution of metastable signals, the smaller
  the probability of soft errors induced by metastability gets.
Even though achieving possibly large MTBF, errors 
  can never be avoided deterministically.
Addressing the risks by adding synchronizers increases synchronization
  delays and hence latency.
Therefore, it is desirable to handle the threat of metastable upsets by techniques without these drawbacks.

Interestingly, the study of switching networks shows that careful design
  can \emph{deterministically} avoid soft errors~\cite{huffman57design}, seemingly defying Marino's impossibility result~\cite{marino1977effect}. 
The key insight is that circuits can be designed to contain the uncertainty induced by
  metastable signals without amplification~\cite{friedrichs18containing};
  we call such circuits \emph{metastability-containing (mc)}.
An mc circuit deterministically minimizes the propagation of metastable signals.
That is, assuming worst-case propagation of metastability, it produces the most precise output possible.

Albeit avoidable in some applications~\cite{bund20optimal,fuegger17aware}, in general constructing mc circuits incurs an 
  exponential blow-up in the circuit size~\cite{ikenmeyer18complexity}.
  However, this overhead can be reduced to linear using so-called \emph{masking} storage elements, which translate internal metastability into late output transitions~\cite{friedrichs18containing}.
  Recently, practical high-quality masking latches have been devised~\cite{srinivas25latch}. 

\subsection{Application: Fault-tolerant Clock Synchronization}
A concrete application where metastability-containing addition naturally
arises is fault-tolerant hardware clock synchronization in the style of
Lynch and Welch~\cite{welch88fault_tolerant}, as implemented in a metastability-containing way in~\cite{friedrichs18containing}. There, time-to-digital
converters (TDCs) produce phase measurements in a precision-1 code
(typically Gray or unary), i.e., each measurement may differ from the
true integer by at most one and contains at most one metastable bit.
In a basic step of the Lynch-Welch algorithm, each node collects a
number of such phase measurements, discards outliers,
sums the remaining values, and then divides the sum by $2$ to obtain a small correction that is applied to its local clock.

The computational core therefore needs to add two encoded
phase measurements, each with at most one metastable bit. The total
input imprecision of the addition is at most $k=2$, and we can work with the corresponding hybrid code (referred to as $\gHyb{n,3}$ in Section~\ref{sec:hybcode}), which uses only $2$ redundant bits, i.e., requiring little overhead compared to
a standard $n$-bit Gray code for the concatenated unary encoding variant.

The remaining operation in the Lynch-Welch update is division by $2$.
The same template---designing a Boolean circuit implementing this for our code, mapping non-codewords to codewords, and deriving an mc variant of the circuit---could be applied for this task.
However, in all likelihood the analog correction to the clock can be simply scaled down by construction, whereas it cannot be taken for granted that addition is straightforward in the analog domain.
In summary, all non-trivial
computations in the control loop can be
performed in a fully metastability-containing digital fashion, without
falling back to highly inefficient unary adders.

\subsection{Outline}

In \cref{sec:prelim} we define the basics of metastability-containing circuits.
Key definitions regarding precision-preserving codes are given in \cref{sec:codes}.
We show formally in \cref{sec:lowerbound} that uncertainty requires redundancy.
That is, given an (upper) bound on the uncertainty, we prove a lower
  bound on the number of redundant bits.
We provide a simple and natural $(n+k)$-bit code that uses $k$ redundant bits,
  called the ``hybrid'' code $\gHyb{n,k}$ 
  (the superscript $h$ stands for ``hybrid''). 
The $\gHyb{n,k}$ code is $k$-preserving and $\ceil{k/2}$-recoverable.
It is of broad interest to study this code as it arises naturally in the
  context of mc circuits~\cite{fuegger17aware}.
Preliminary definitions of common code schemes are given in \cref{sec:enc_examplecodes}.
The $\gHyb{n,k}$ code and the according proofs are presented in \cref{sec:hybcode}.
Informally, the first part of the code is a binary reflected Gray code of $n$ bits,
  and the second part is a unary thermometer code of $k$ bits.
The obtained code has rate $1-\BO\left(\frac{1}{n}\right)$
  for any constant $k$.
The same code $\gHyb{n,k}$ can be made $k$-recoverable by using $2k-1$ redundant
  bits.
We establish a notion of $k$-recoverable addition and show that it can be performed
  on the $\gHyb{n,k}$ code in \cref{sec:addition}.
Next, in \cref{sec:related} we discuss the connection of our code to \cite{fuegger17aware} and to error-correcting codes.

\section{Preliminaries}\label{sec:prelim}

\subsection{Ternary Logic and Metastability Containing Circuits}\label{sec:meta}

In this work, we make use of a classic extension of Boolean logic due to
  Kleene~\cite[\S64]{kleene52meta}, which allows for the presence of unspecified
  signals.
In the following, we refer to the Boolean values $\IB\coloneqq\{0,1\}$ as
\emph{stable}, while the additional third logical value $\mfu$ is the
\emph{metastable} value.
The resulting ternary set of logic values is denoted by
$\IT\coloneqq\{0,1,\mfu\}$.
Intuitively, we regard $\mfu$ as a ``superposition'' of $0$ and $1$.
Using Kleene logic is a classic way of modeling faults due to unstable
  events.

In Kleene logic, the basic gates\footnote{The specific choice of basic gates
does not matter, see~\cite{ikenmeyer18complexity}; hence, we stick to
$\AND$, $\OR$, and $\NOT$.} output a stable value if and only if the stable
inputs already determine this output. The natural extension of the basic gates
$\AND$, $\OR$, and $\NOT$ is given in \cref{tab:gates}.

\begin{table}
  \caption{Behavior of basic gates $\AND$, $\OR$, and $\NOT$.}
  \label{tab:gates}
  \centering
  \begin{tabular}{c|ccc}
    \AND & 0 & 1 & $\mfu$\\ \hline
    0 & 0 & 0 & 0\\
    1 & 0 & 1 & $\mfu$\\
    $\mfu$ & 0 & $\mfu$ & $\mfu$
  \end{tabular}
  \qquad
  \begin{tabular}{c|ccc}
    \OR & 0 & 1 & $\mfu$\\ \hline
    0 & 0 & 1 & $\mfu$\\
    1 & 1 & 1 & 1\\
    $\mfu$ & $\mfu$ & 1 & $\mfu$
  \end{tabular}
  \qquad
  \begin{tabular}{c|c}
    \NOT &\\ \hline
    0 & 1\\
    1 & 0\\
    $\mfu$ & $\mfu$
  \end{tabular}
\end{table}

A circuit $C$ is a directed acyclic graph, where each node corresponds to a basic logic gate.
By induction over the circuit structure, the behavior of basic gates defines for
any circuit the function $C\colon \IT^n\to \IT^m$ it implements.
We aim to design circuits that behave similarly to basic gates when receiving
uncertain inputs. That is, given all inputs, if changing an input bit from $0$ to $1$ does not affect the output, then setting this input bit to $\mfu$
should also not affect the output.
A circuit that satisfies this concept is called a \emph{metastability-containing (mc)} circuit.

The importance of this fundamental concept is illustrated by its surfacing in
areas as diverse as digital circuit
design~\cite{caldwell58circuits,friedrichs18containing,huffman57design},
logic~\cite{kleene52meta,korner66experience}, and
cybersecurity~\cite{hu12complexity,tiwari09flow}. Accordingly, the design of
mc circuits has received significant attention over the years. Early
on, Huffman established that all Boolean functions admit an mc
circuit~\cite{huffman57design}. However, recently unconditional exponential
lower bounds for explicit functions have been
shown~\cite{ikenmeyer18complexity}, including 
circuits of polynomial size that admit metastable upsets. Moreover, unless (the circuit variant of)
P$\,=\,$NP holds, the verifier functions of NP-complete languages have no
polynomially-sized mc circuits. These lower bounds are complemented by
a general construction by Ikenmeyer et al.~\cite{ikenmeyer18complexity} yielding circuits of size $n^{\BO(k)}|C|$, where $k$ is an upper 
bound on the number of metastable bits in its input,  and $C$ is an arbitrary circuit implementing the desired function.

\subsection{Superposition and Resolution}
To reason about the above notion of \emph{precision} (i.e., the range of values an encoded word may represent; cf.\ Section~\ref{sec:contrib}), we use the operations \emph{resolution} and \emph{superposition} as defined in~\cite{friedrichs18containing}. A precise notion of the imprecision of extended codewords
is formalized in Definition~\ref{def:code_ext}.

The superposition results in the uncertain value $\mfu$ whenever its inputs do not agree on a stable input value.

\begin{definition}[Superposition]\label{def:superposition}
Denote the \emph{superposition} of two bits by the operator $*\colon\IT \times \IT\rightarrow\IT$.
For $x,y\in \IT$, $x*y=x$ if $x=y$ and $x*y=\mfu$ otherwise. We extend the
$*$-operation to $x,y\in \IT^n$ by applying it to each index $i\in\{1,\ldots,n\}$, such that
\[ (x*y)_i = \begin{cases} x_i &\text{if } x_i=y_i,\\\mfu &\text{otherwise.} \end{cases}\]
\end{definition}

The $*$-operation is
associative and commutative, hence for $X\subseteq \IT^n$ we define $\bigstarr
X \coloneqq x_1*\hdots *x_n$, where $x_1,\ldots,x_n$ is an arbitrary
enumeration of the elements of $X$.
The second operation, the resolution, maps from ternary words to
sets of Boolean words by replacing each $\mfu$ with both stable values.

\begin{definition}[Resolution]\label{def:resolution}
Denote the \emph{resolution} by $\res\colon\IT\rightarrow\mathcal{P}(\IB)$.
For $x\in\IT$, $\res(x)=\{0,1\}$ if $x=\mfu$ and $\res(x)=\{x\}$ otherwise.
We extend this to words of length $n$ in the natural way, by setting 
\begin{equation*}
\res(x)\coloneqq \{y\in
\IB^n\,|\,\forall i\in \{1,\ldots,n\}\colon x_i\neq \mfu \Rightarrow
y_i=x_i\}\,,
\end{equation*}
for $x\in \IT^n$.
\end{definition}

In terms of this intuition, superposition may decrease precision, as the resolution of a superposition can add new words. For example, the
superposition of the words $100$ and $111$ is $1\mfu\mfu$. Resolution of this ternary word yields
$\res(1\mfu\mfu)=\{100,101,110,111\}$, so the set of possible values has grown from $\{100,111\}$ to four elements.

\subsection{Precise Codes}

Before we formally define the required properties on codes in \cref{sec:codes}, we illustrate
  them by describing the desired behavior.
Suppose that we are given two inputs of precision $k$ in $n$-bit unary encoding, which
  means that input $x\in \IT^n$ satisfies that $\res(x)$ contains at most $k+1$
  codewords, all of which are consecutive. 
  In other words, the possible values represented by $x$ form an interval of size at most $k+1$. 
Such inputs are of the form $1^*\mfu^*0^*$, where the number of $\mfu$'s is at
most $k$ and the total length of the word is $n$.
Denoting by $\gUnary{}:[n+1]\rightarrow \IT^n$ the encoding function
  for unary thermometer codes. 
  Throughout, for $M \in \IN$ we write $[M] \coloneqq \{0,\ldots,M-1\}$; 
  in particular, $[n+1] = \{0,\ldots,n\}$. 
We get that inputs are of the form
\begin{equation*}
  \bigstarr_{\ell=i}^{j}\gUnary{}(\ell)=\bigstarr_{\ell=i}^{j}1^{\ell}0^{n-\ell}
  =1^i\mfu^{j-i}0^{n-j}\,,
\end{equation*}
  where $i,j\in [n+1]$ and $0\leq j-i\leq k$.

The sum of two $n$-bit unary codes can be represented by a $(2n)$-bit unary code.
Feeding inputs $\bigstarr_{\ell=i}^{j}\gUnary{}(\ell)=1^i\mfu^{j-i}0^{n-j}$ and
  $\bigstarr_{\ell=i'}^{j'}\gUnary{}(\ell)=1^{i'}\mfu^{j'-i'}0^{n-j'}$, where
  $i,i',j,j'\in [n+1]$, into an adder that maintains precision, outputs
  $\bigstarr_{\ell=i+i'}^{j+j'}\gUnary{}(\ell)=1^{i+i'}\mfu^{j+j'-i-i'}0^{2n-j-j'}$.
This addition maintains information to the best possible degree:
We denote the set $\{i,\ldots,j\}$ by $\intvl{i,j}$.
The codewords that are resolutions of these words represent the sets
  $\intvl{i,j}$, $\intvl{i',j'}$, and $\intvl{i+i',j+j'}$, 
  respectively.~\footnote{Not all resolutions of $1^*\mfu^k 0^*$ are 
  Unary codewords. Only $k+1$ out of the $2^k$ resolutions are codewords.}
In other words, the uncertainty in the output regarding which value is
  represented matches the uncertainty of the inputs. The reader may think of this
  as a faithful representation of interval addition.
  
The above unary thermometer encoding has another crucial property.
We can consistently map any resolution of the output back to a codeword.
We define extension $\gtUnary:\IB^n\rightarrow [n+1]$ to the decoding function 
  $(\gUnary{})^{-1}:\gUnary{}([n+1])\rightarrow [n+1]$
  as follows:\footnote{
  A function $f\colon X\rightarrow Y$ is applied to a set $X'\subseteq X$ by
  application of $f$ to each element of $X'$; $f(X')\coloneqq\{f(x)|x\in X'\}$.}
For $x\in \IB^n$, let $i\in [n+1]$ be maximal such that $x_j=1$ for all $j\in [i+1]$.
Then $\gtUnary(x)\coloneqq i$ and then $
  \gUnary{}(\gtUnary(x))=1^i0^{n-i}$.
Observe that this guarantees that
$\gtUnary(\res(\bigstarr_{\ell=i}^{j}\gamma(\ell)))=\{i,\ldots,j\}$, i.e.,
  after stabilization, 
  (when each metastable bit has resolved to $0$ or $1$) %
  we can recover a value from the range that was represented
  by the word $1^i\mfu^{j-i}0^{n-j}$ using $\gtUnary$.
This property is very similar to the ability of error correction codes to recover the correct
  codeword in face of bit flips; here, we recover a ``correct'' value in the sense
  that it is consistent with the available information.
We refer to this as the code being \emph{recoverable}.
We will see that this is both highly useful and the best we can expect.

Naturally, the above positive example is of limited interest, as it is extremely
  inefficient due to the exponential overhead of unary encoding.
So, what happens if we use a non-redundant encoding? It is not hard to see that standard binary
  encoding results in a disaster.
For instance, denoting by $\gBin{n}\colon [2^n]\to\IB^n$ an $n$-bit binary 
  code,\footnote{The binary code is defined in \cref{sec:enc_examplecodes}.}
  we have that $\gBin{4}(7)=0111$ and $\gBin{4}(0)*\gBin{4}(1)=000\mfu$;
  for this pair of words, it is unclear whether their sum should be $7$ or $8$.
In contrast, $\gBin{4}(7+0)*\gBin{4}(7+1)=0111*1000=\mfu\mfu\mfu\mfu$, i.e., the resulting
  word holds no information whatsoever.
This is an extreme example, of which
  (even given all other bits) a single input bit affects all output bits.
In Kleene logic, the code provides no non-trivial guarantees in the face of uncertainty
  at its input.

The above examples show that we should aim for codes that behave similarly to
  unary thermometer encoding in the face of unstable inputs, yet have a high rate.
However, conflicting with this goal, uncertainty in the inputs requires redundancy
  of the code.
For example, a code that has no redundant bits can not recover from more
  than one $\mfu$ (see \cref{col:optrate}).
More generally speaking, a code needs to add a redundant bit for each uncertain bit
  in the input, i.e., we need $k$ redundant bits for $k$-recoverability.
We prove this claim in \cref{lem:recoverability}.

%% file: definitions.tex
\section{Codes in Ternary Logic}\label{sec:codes}

In this section, we formally define codes and related terms in ternary logic.
The second part of this section formally specifies the concepts of $k$-preserving
  and $k$-recoverable codes.
All definitions are accompanied by illustrations.

\subsection{Definition of Codes}

\begin{figure}
  \centering
  \includegraphics[width=.7\linewidth]{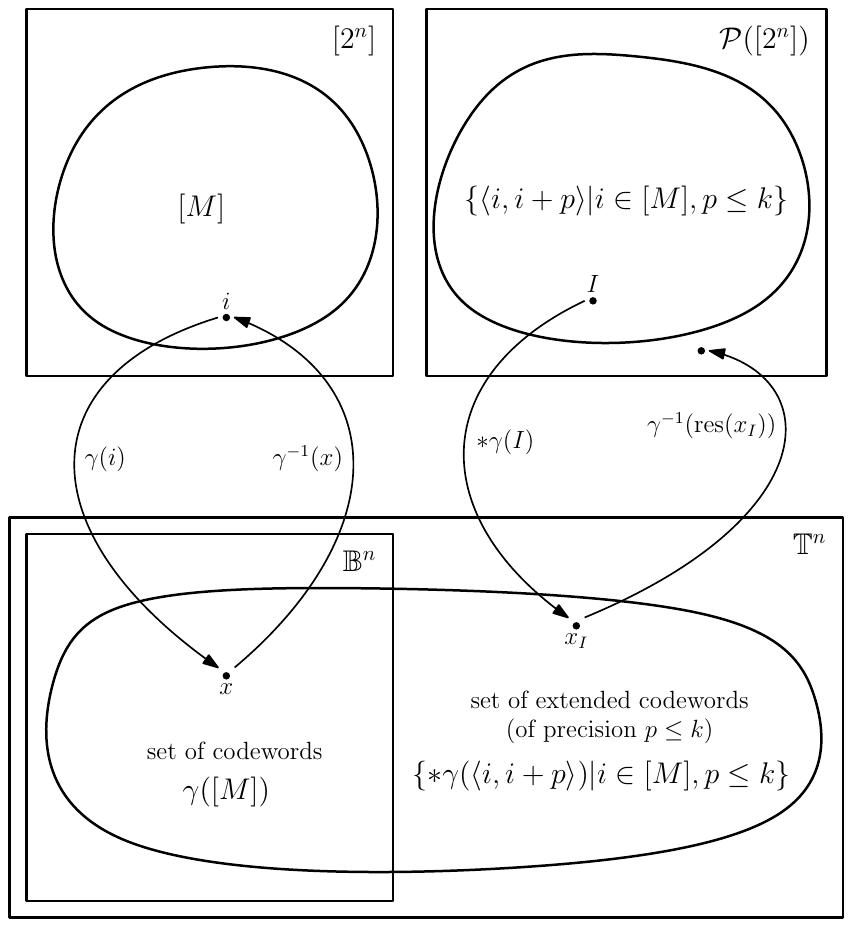}
  \caption{Overview of encoding and decoding functions and their extension to the
  ternary world.}
  \label{fig:roadmap}
\end{figure}

A code is a mapping between a finite set of numbers and binary words of a fixed length.
Any code is defined by its encoding function. The encoding function defines the set of codewords
and the decoding function.

\begin{definition}[Encoding Function]\label{def:code}
  A \emph{code} is given by an injective function $\gamma\colon[M]\rightarrow\IB^n$ that maps
  integers from a set $[M]$ to $n$-bit words. We call $\gamma$ the \emph{encoding
  function} and its image $\gamma([M])$ the \emph{set of codewords}.
  The inverse of $\gamma$ on the set of codewords, denoted by
  $\gamma^{-1}\colon\IB^n\rightarrow[M]$, is called the \emph{decoding function}.
  Every encoding function defines a decoding function.
  The decoding function is only defined on the domain $\gamma([M])$.
\end{definition}

A subset of the domain can be mapped to a ternary word
  by taking the bitwise superposition over all codewords.
We call this the extended codeword of the subset.
In this work, we always consider subsets of consecutive integers, which we call \emph{intervals}.
For integers $i \leq j$, we write $\langle i, j \rangle$ for the
interval $\{i,\ldots,j\}$. Its size is the usual cardinality $|\langle i,j \rangle| = j - i + 1$. For example, for $I = \langle i, i+p \rangle$ we have  $|I| = p+1$.
An overview of sets and mappings defined by a code is part of~\cref{fig:roadmap}.

\begin{definition}\label{def:code_ext}
  For a code $\gamma\colon[M]\rightarrow\IB^n$ and an interval $I=\intvl{i,i+p}\subseteq[M]$ we define
  \begin{center}
    \begin{tabular}{ll}
      the \emph{extended codeword} $x_I$ & by $x_I\coloneqq\bigstarr\gamma(I)$,\\
      the \emph{range} $r_{x_I}$ of $x_I$ & by $r_{x_I}\coloneqq I$, and\\
      the \emph{imprecision} of $x_I$ & by $p=|I|-1$.
    \end{tabular}
  \end{center}
\end{definition}

Depending on the encoding function, such an extended codeword might not preserve
  the information about the original set. 
It cannot be mapped back in every case.
For example, the standard binary encoding is given by 
  $\gBin{n}\colon[2^n]\rightarrow\IB^n$.
Consider the interval $\intvl{3,4}=\{3,4\}$, where codewords are given by 
  $\gBin{n}(3)=0011$ and $\gBin{n}(4)=0100$.
The extended codeword is the superposition of all codewords, it is $0\mfu\mfu\mfu$.
Resolution of the extended codeword yields more codewords,
  such that $\res(0\mfu\mfu\mfu)\neq\{\gBin{n}(3),\gBin{n}(4)\}$.

The superposition is not reversible in general because it is not an injective 
  function.
For codes, this means that there may be multiple intervals that map to the same 
  extended codeword.
Thus, mapping an interval to ternary words loses information about the original 
  interval.

The binary encoding $\gBin{n}$ maps intervals $\intvl{5,6}$ and $\intvl{4,7}$
  to the same ternary word. 
From the binary encoding, we obtain the following codewords:
  $\gBin{n}(4)=0100$,$\gBin{n}(5)=0101$,$\gBin{n}(6)=0110$, and$\gBin{n}(7)=0111$.
Hence, extended codewords of intervals $\intvl{5,6}$ and $\intvl{4,7}$ are
  given by $\intvl{5,6}=\bigstarr\{0101,0110\}=01\mfu\mfu$ and 
  $\intvl{4,7}=\bigstarr\{0100,0101,0110,0111\}=01\mfu\mfu$.

We remark that the range (as given in \cref{def:code_ext}) is not well defined.
For an extended codeword, there may be multiple ranges that map to the codeword,
  as shown in the above example. 
We use the range notation to denote the interval only when the interval is clear 
  from context.

\subsection{Preserving and Recoverable Codes}

We now define two important properties of codes that prevent loss of information when mapping intervals
to extended codewords. Parameter $k$ describes the maximum size of an interval
that can be mapped to an extended codeword without loss of information.

First, preservation ensures that the original interval is preserved, i.e.,
a resolution of an extended codeword will not add new codewords outside of the original interval.
The resolution may, however, add words that are non-codewords.
A positive and a negative illustration are given in \cref{fig:preserving}.

\begin{definition}[$k$-preserving Codes]\label{def:preserve}
  A code $\gamma$ on domain $[M]$ is $k$-preserving iff for each interval $I=\intvl{i,i+p}$,
  where $p\leq k$, the resolution $\res(x_I)$ does not contain codewords which
  are not in $\gamma(I)$.
  Formally,
  \begin{equation*}
    \gamma([M])\cap\res(x_I)=\gamma(I)\,.
  \end{equation*}
\end{definition}

\begin{figure}
  \centering
  \includegraphics[width=.7\linewidth]{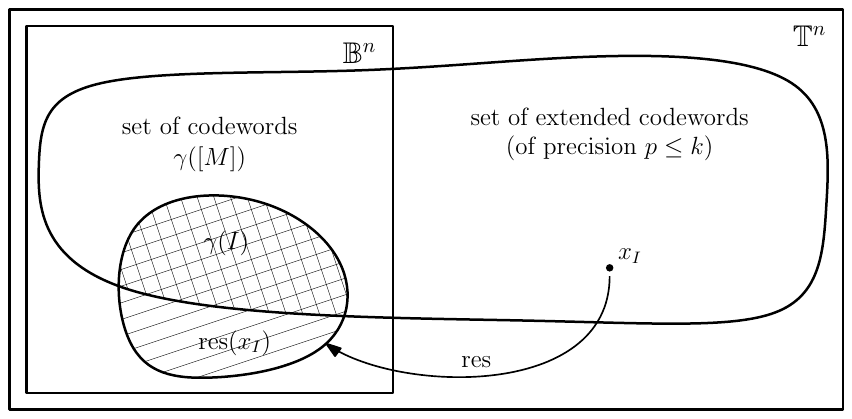}
  \includegraphics[width=.7\linewidth]{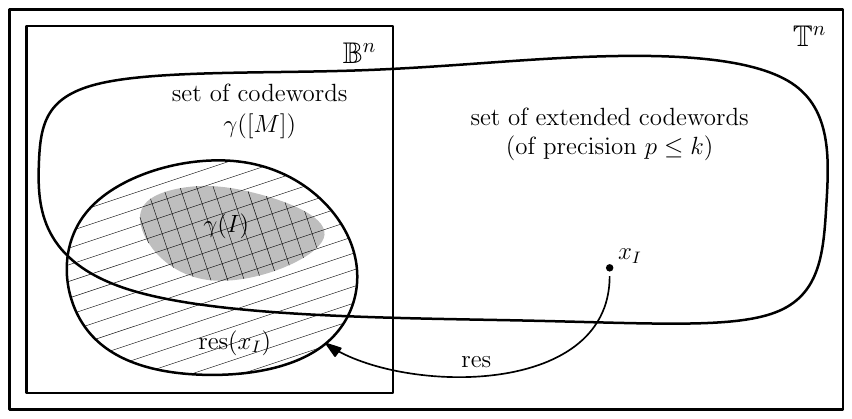}
  \caption{Illustration of a code that is preserving (top) and a code that is not
  preserving (bottom). Hatched areas denote $\res(x_I)$ and checkered areas denote $\gamma(I)$.
  For a code that is preserving, the resolution of extended codeword $x_I$ must not
  add new codewords. It may add binary words that are not codewords.}
  \label{fig:preserving}
\end{figure}

Second, the recoverability property makes sure that every non-codeword in the resolution
of an extended codeword can be mapped back to the original interval.
The property is illustrated in \cref{fig:recoverable} and defined as follows.

\begin{definition}[$k$-recoverable Codes]\label{def:recover}
  Let $x$ be an extended codeword of imprecision $p_x\leq k$.
  A code $\gamma$ is $k$-recoverable iff there exists an extension of
  $\gamma^{-1}$ from codewords to all binary words,
  such that each resolution of $x$ is mapped to the range of $x$.
  Formally, denote the extension of $\gamma^{-1}\colon\gamma([M])\rightarrow[M]$ by
  $\widetilde{\gamma}^{-1}\colon\IB^n\rightarrow[M]$.
  Then $\gamma$ is $k$-recoverable, if for $I=\intvl{i,i+p}$ and $p\leq k$,
  \begin{equation*}
    \widetilde{\gamma}^{-1}(\res(x_I))\subseteq I\,.
  \end{equation*}
\end{definition}

\begin{figure}
  \centering
  \includegraphics[width=.7\linewidth]{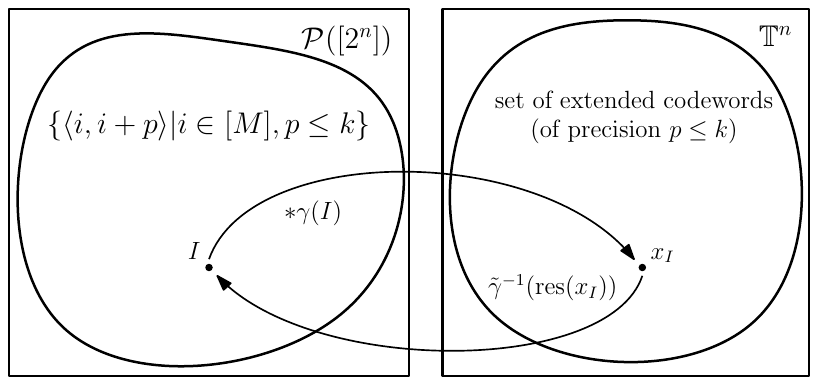}
  \caption{Illustration of the recoverability property.}
  \label{fig:recoverable}
\end{figure}
 
Preservation is a necessary condition for recoverability, that is, every
  recoverable code is also preserving.
This is shown in the following proof.

\begin{lemma}\label{lem:preserve_necessary}
  Every code that is $k$-recoverable is also $k$-preserving.
\end{lemma}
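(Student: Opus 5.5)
I will argue directly from the two definitions, showing that a recoverability violation would follow from any preservation violation. Fix a $k$-recoverable code $\gamma\colon[M]\rightarrow\IB^n$ with an extension $\widetilde{\gamma}^{-1}\colon\IB^n\rightarrow[M]$ as in Definition~\ref{def:recover}. Fix an interval $I=\intvl{i,i+p}\subseteq[M]$ with $p\leq k$. I need to show $\gamma([M])\cap\res(x_I)=\gamma(I)$, which I will do by proving both inclusions.

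The inclusion $\gamma(I)\subseteq\gamma([M])\cap\res(x_I)$ is routine. Every $\gamma(j)$ with $j\in I$ is a codeword. Moreover, $x_I=\bigstarr\gamma(I)$ agrees with $\gamma(j)$ on every stable position: by Definition~\ref{def:superposition}, a stable bit of the superposition equals the common value of all superposed words in that position. Hence $\gamma(j)\in\res(x_I)$ by Definition~\ref{def:resolution}.

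For the reverse inclusion, take any $y\in\gamma([M])\cap\res(x_I)$. Since $y$ is a codeword, $y=\gamma(m)$ for a unique $m\in[M]$, by injectivity of $\gamma$. The extension $\widetilde{\gamma}^{-1}$ extends $\gamma^{-1}$, so it agrees with $\gamma^{-1}$ on codewords, giving $\widetilde{\gamma}^{-1}(y)=m$. Because $y\in\res(x_I)$ and $p\leq k$, recoverability gives $m\in\widetilde{\gamma}^{-1}(\res(x_I))\subseteq I$. Hence $y=\gamma(m)\in\gamma(I)$.

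There is no real obstacle here. The only point needing care is that the extension is required to coincide with the true decoding function on codewords, which is what forces $m\in I$ rather than merely some value in $I$.
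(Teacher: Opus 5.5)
Your proof is correct and rests on the same observation as the paper's: a codeword $\gamma(m)\in\res(x_I)$ with $m\notin I$ is decoded to $m$ by every extension of $\gamma^{-1}$, which contradicts recoverability. The paper argues by contrapositive and takes the inclusion $\gamma(I)\subseteq\gamma([M])\cap\res(x_I)$ for granted, whereas you argue directly and also check that inclusion explicitly.
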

\begin{IEEEproof}
  We prove the contrapositive, i.e., no code is $k$-recoverable if it is not
  $k$-preserving.
  Assume that a code $\gamma$ is not $k$-preserving.
  By \cref{def:preserve} there is an interval $I$ such that there is some
  $x\in \gamma([M])\cap\res(x_I)$ that is not an element of $\gamma(I)$.
  By Definition~\ref{def:code} the decoding function satisfies $\gamma^{-1}(\gamma(I))=I$.
  Hence, since $x\notin \gamma(I)$ it follows that $\gamma^{-1}(x)\notin I$.
  Thus, no extension to $\gamma^{-1}$ can satisfy the requirement
  $\widetilde{\gamma}^{-1}(\res(x_I))\subseteq I$.
\end{IEEEproof}

\subsection{Efficiency of Codes}

The efficiency of a code is often measured by its \emph{rate}.
The inverse of the rate is a measure of how many redundant bits are
  added to the code.
In this work, we use the \emph{redundancy} as a measure for the 
  efficiency of a code.
The redundancy of a code is the ratio of the number of bits in use 
  compared to the minimum number necessary to encode the same 
  information~\cite{hamming:error}.
Formally, the redundancy and rate are as follows.

\begin{definition}[Redundancy and rate]\label{def:redun}
Let $\gamma \colon [2^m] \rightarrow \IB^\ell$ be a code. 
The \emph{redundancy} of $\gamma$ is given by $\ell/m \geq 1$, and the
\emph{rate} is defined as $m/\ell \leq 1$.
\end{definition}
 
A code that has redundancy $1$ is a bijection. 
Each binary word is a codeword that can be mapped to a natural number. 
From \cref{def:preserve} we can infer that no code with redundancy $1$ 
  can be more than $1$-preserving.
Consider an arbitrary interval of three elements, the corresponding 
   extended codeword must have at least $2$ unstable bits. The 
   resolution of the extended codeword thus has four elements, each 
   of them is a codeword as there are no redundant non-codewords.
As preservation is necessary for recoverability, no such code can be more than $1$-recoverable.

\begin{corollary}\label{col:optrate}
  Any code with (optimal) redundancy~$1$ is at most $1$-preserving and $1$-recoverable.
\end{corollary}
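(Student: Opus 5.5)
The plan is to show the stronger claim that a code with redundancy $1$ is not $2$-preserving, and then pass to recoverability using \cref{lem:preserve_necessary}. With redundancy $1$, \cref{def:redun} gives $\gamma\colon[2^m]\to\IB^m$. Since $\gamma$ is injective and $|[2^m]|=2^m=|\IB^m|$, it is a bijection, so every binary word of length $m$ is a codeword. The domain is non-trivial only when $2^m\geq 3$, so assume $m\geq 2$; otherwise there is no interval of imprecision $2$ and the claim is vacuous.

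First, take any interval $I=\intvl{i,i+2}\subseteq[2^m]$ of imprecision $2$. The extended codeword $x_I=\bigstarr\gamma(I)$ superposes three distinct words.
\begin{itemize}
\item The number $u$ of positions where $x_I$ equals $\mfu$ is at least $2$. If $u=0$, all three words would coincide. If $u=1$, all three words would agree outside one position, and only two such binary words exist. Both contradict injectivity.
\item Hence $|\res(x_I)|=2^u\geq 4>3=|\gamma(I)|$.
\item Since every binary word is a codeword, $\gamma([2^m])\cap\res(x_I)=\res(x_I)\neq\gamma(I)$.
\end{itemize}
This violates \cref{def:preserve} for $p=2$, so $\gamma$ is not $2$-preserving, and therefore not $k$-preserving for any $k\geq 2$.

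Second, by the contrapositive of \cref{lem:preserve_necessary}, $\gamma$ is not $k$-recoverable for any $k\geq 2$. So the code is at most $1$-preserving and at most $1$-recoverable, as the corollary states.

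No step is a real obstacle. The only point needing care is the counting argument that three distinct words force at least two unstable positions, and the remark that the statement is vacuous for tiny domains.
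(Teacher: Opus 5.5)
Your proof is correct and follows essentially the paper's own argument. A redundancy-$1$ code is a bijection onto $\IB^m$, so any three-element interval gives an extended codeword with at least two $\mfu$'s, whose resolution holds at least four codewords; this breaks $2$-preservation, and \cref{lem:preserve_necessary} carries the failure over to recoverability. Your justification of the two $\mfu$'s is more explicit than the paper's, and ``at least four'' is more accurate than its ``four''.

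One small wording fix: for $m=1$ the corollary is not vacuously true but actually fails. With no three-element interval, the code is vacuously $k$-preserving for every $k$, so you (like the paper) implicitly need $M\geq 3$.
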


A class of codes that is of special interest to us are Gray codes.
A code is a Gray code if two consecutive codewords have Hamming 
  distance~$1$.
The Hamming distance between two words is the number of positions at which the
  bits are different.

\begin{definition}[Gray codes]\label{def:grayprop}
  A code $\gamma\colon[M]\rightarrow\IB^n$ is a Gray code if
  for $i\in[M]$ the codewords $\gamma(i)$ and $\gamma(i+1 \bmod M)$ have
  Hamming distance $1$.
\end{definition}

Given a Gray code, every extended codeword of an interval,
  of size two, has exactly one $\mfu$ bit.
The resolution of the extended codeword contains only the
  two numbers from the original interval.
We can conclude that every Gray code is at least $1$-preserving 
  and $1$-recoverable.

%% file: lower_bound.tex
\section{Recoverability Requires Redundancy}\label{sec:lowerbound}

One of our key contributions is to prove that higher recoverability requires
  higher redundancy.
We present a lower bound on the redundancy required for
  $k$-recoverability\footnote{A lower bound on the redundancy is an upper bound on the rate of a code.}
  in this section.

We show that an $n$-bit code that is $k$-recoverable can encode at most $2^{n-k}(k+1)$ values.
The redundancy of a $k$-recoverable code is bounded below by $n/(n-k+\log(k+1))$.
We first show that an extended codeword of imprecision $p\leq k$  has at least $p$ many $\mfu$'s. Recall that $M$ denotes the size of the value domain, i.e., we encode values in $[M] = \{0,\dots,M-1\}$. 

\begin{lemma}\label{lem:bitdist}
  For $k,n\in\IN$, let $\gamma$ be an $n$-bit, $k$-preserving code.
  Let $p\leq k$ and $i\in[M-p]$, then $\bigstarr
    \{\gamma(i),\ldots,\gamma(i+p)\}$ has at least $p$ many 
    $\mfu$'s.~\footnote{One would be tempted to read this as Hamming 
    distance $p$. Consider the code $100,010,001$. It is $3$-preserving, 
    all codewords have Hamming distance 2 from each other, but $\bigstarr\{100, 010,001\}=\mfu \mfu \mfu$.}
\end{lemma}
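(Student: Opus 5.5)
Since $\gamma$ is $k$-preserving and $p\le k$, the extended codeword $x=\bigstarr\{\gamma(i),\ldots,\gamma(i+p)\}$ satisfies $\gamma([M])\cap\res(x)=\gamma(\intvl{i,i+p})$ by \cref{def:preserve}. The plan is to show by induction on $q\in\{0,\ldots,p\}$ that the prefix superposition $x^{(q)}\coloneqq\bigstarr\{\gamma(i),\ldots,\gamma(i+q)\}$ has at least $q$ many $\mfu$'s. The base case $q=0$ is trivial.

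For the inductive step, suppose $x^{(q)}$ has at least $q$ metastable positions, and consider $x^{(q+1)}=x^{(q)}*\gamma(i+q+1)$. Superposition never turns an $\mfu$ back into a stable bit, so every $\mfu$ of $x^{(q)}$ survives. It remains to show that at least one new $\mfu$ appears, i.e., that $\gamma(i+q+1)$ disagrees with $x^{(q)}$ on some stable position of $x^{(q)}$. Suppose not. Then $\gamma(i+q+1)$ agrees with $x^{(q)}$ on all its stable positions, which means $\gamma(i+q+1)\in\res(x^{(q)})$.

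Now apply preservation to the interval $I=\intvl{i,i+q}$. Its size is $q+1\le p+1$, so its imprecision $q\le k$ and \cref{def:preserve} applies. It gives $\gamma([M])\cap\res(x^{(q)})=\gamma(\intvl{i,i+q})$. Hence $\gamma(i+q+1)$ would equal $\gamma(j)$ for some $j\le i+q$, contradicting injectivity of $\gamma$ (\cref{def:code}). Therefore $x^{(q+1)}$ has at least $q+1$ many $\mfu$'s, and taking $q=p$ proves the lemma.

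The main point requiring care is applying preservation to the prefix intervals rather than to the whole interval $\intvl{i,i+p}$. This works because the definition quantifies over all intervals of imprecision at most $k$, and each prefix $\intvl{i,i+q}\subseteq[M]$ is such an interval.
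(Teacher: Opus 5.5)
Your proof is correct and follows the paper's argument. Both induct on the length of the prefix interval, use $k$-preservation of $\langle i,i+q\rangle$ to rule out $\gamma(i+q+1)\in\res(x^{(q)})$, and conclude that some stable position of $x^{(q)}$ becomes $\mfu$. You also make explicit two points the paper leaves implicit: existing $\mfu$'s persist under superposition, and injectivity of $\gamma$ is what excludes $\gamma(i+q+1)$ from $\gamma(\langle i,i+q\rangle)$.
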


\begin{IEEEproof}
   We prove the claim by induction on $p$. The base case $p=0$ is
     trivial as codewords $\gamma(i)$ and $\gamma(i+p)$ are identical.
     They have Hamming distance $0$ and no $\mfu$.

   Next, we show the induction step from $p-1$ to $p$, where $0<p\leq k$.
   Intuitively, when going from $\gamma(i+p-1)$ to $\gamma(i+p)$, a 
     $k$-preserving code has to flip a bit that remains unchanged when 
     going from $\gamma(i)$ to $\gamma(i+p-1)$.

   Let $x_I$ be an extended codeword, with $I=\intvl{i,i+p-1}$. 
   Code $\gamma$ is $k$-preserving, thus the codeword $\gamma(i+p)$ 
     cannot be an element of $\res(x_I)$.
   There is at least one stable bit that differs for $x_I$ and $\gamma(i+p)$.
   Hence, the superposition $\bigstarr\{x_I,\gamma(i+p)\}$ has at least one 
     $\mfu$ more than $x_I$.
   This concludes the induction step.
\end{IEEEproof}

With \cref{lem:bitdist} at hand, we can show that any $k$-recoverable code can have
  at most $2^{n-k}(k+1)$ codewords.

\begin{theorem}\label{lem:recoverability}
  For $k,n\in\IN$, the domain of a code
  $\gamma\colon[M]\rightarrow\IB^n$ has at most size $M \leq 2^{n-k}(k+1)$, if $\gamma$
  is $k$-recoverable.
\end{theorem}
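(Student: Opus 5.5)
We prove $M\le 2^{n-k}(k+1)$ by a counting argument: we split the domain into disjoint blocks of $k+1$ consecutive values, and show that each block must own $2^{n-k}$ binary words that no other block can own. Two earlier results do the work. By \cref{lem:preserve_necessary}, a $k$-recoverable $\gamma$ is $k$-preserving, so \cref{lem:bitdist} applies. Also fix the extension $\widetilde{\gamma}^{-1}\colon\IB^n\to[M]$ guaranteed by \cref{def:recover}.

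First assume $M\ge k+1$; otherwise $M\le k<2^{n-k}(k+1)$ is trivial, using $n\ge k$, which holds whenever an interval of size $k+1$ exists by \cref{lem:bitdist}. Partition $[M]$ into the $t=\floor{M/(k+1)}$ disjoint intervals $I_j=\intvl{j(k+1),j(k+1)+k}$ for $j\in[t]$, plus a leftover set of fewer than $k+1$ values. Each $I_j$ has imprecision exactly $k$, so \cref{lem:bitdist} says $x_{I_j}$ has at least $k$ metastable bits. Hence $|\res(x_{I_j})|\ge 2^k$.

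Recoverability gives $\widetilde{\gamma}^{-1}(\res(x_{I_j}))\subseteq I_j$. Since the $I_j$ are pairwise disjoint and $\widetilde{\gamma}^{-1}$ is a function, the sets $\res(x_{I_j})$ are pairwise disjoint subsets of $\IB^n$. Summing their sizes gives $t\cdot 2^k\le 2^n$, i.e.\ $t\le 2^{n-k}$. Therefore
\[
M\le t(k+1)+k\le 2^{n-k}(k+1)+k .
\]

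The main obstacle is removing this additive leftover $k$. My first attempt is to absorb it with a finer cover. Let $I'$ be the last interval of size $k+1$, ending at $M-1$. Its resolution is disjoint from every $\res(x_{I_j})$ with $I_j\cap I'=\emptyset$. It overlaps only the final block, where a more careful count is needed. In particular, the $2^k$ words of $\res(x_{I'})$ decode into $I'$, and those decoding into the leftover values are excluded from the earlier blocks. This should force $2^n\ge t2^k+1$ whenever there is a leftover, giving $t\le 2^{n-k}-1$ and hence $M<(t+1)(k+1)\le 2^{n-k}(k+1)$.

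If that bookkeeping fails, I would argue more coarsely instead. Every codeword $\gamma(v)$ decodes to $v$. For each leftover value $v$, this gives a word $\gamma(v)$ outside all $\res(x_{I_j})$, because it decodes outside $\bigcup_j I_j$. So the leftover values exhaust spare words, and the count $t2^k+(M-t(k+1))\le 2^n$ yields the bound.
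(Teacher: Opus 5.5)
Your proof is correct and follows essentially the paper's route: disjoint blocks $I_j$ of $k+1$ consecutive values, \cref{lem:bitdist} (via \cref{lem:preserve_necessary}) giving $|\res(x_{I_j})|\ge 2^k$, and pairwise disjoint resolutions from recoverability; your fallback count $t2^k+(M-t(k+1))\le 2^n$ is exactly the paper's ``remaining values require further words, but all words are already allocated'' step, since each leftover codeword $\gamma(v)$ decodes to $v$ and so lies outside every $\res(x_{I_j})$, and you make the integrality step $t\le 2^{n-k}-1$ explicit (which makes the hedged first attempt unnecessary). One caveat: for $M\le k$ your appeal to $n\ge k$ is unjustified, because then no interval of size $k+1$ exists, and in fact the statement fails for $n<k$ (the one-bit code $\gamma(0)=0$, $\gamma(1)=1$ is $2$-recoverable since no interval of size $3$ exists, yet $M=2>2^{-1}\cdot 3$), so, as the paper does tacitly, you should simply assume $n\ge k$.
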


\begin{IEEEproof}
  Consider a $k$-recoverable code $\gamma$ on domain $[M]$ for some $M\in \IN$.
  We show that for a fixed $M = 2^{n-k}(k+1)$ the codomain must contain at least $2^n$ distinct words.
  Consider a family of intervals $I_\ell$ of size $k+1$.
  For $\ell\in[2^{n-k}]$ define $I_\ell=\intvl{\ell(k+1),\ell(k+1)+k}$.
  The family of intervals $I_\ell$ is a partition of $[M]$.
  Since $\gamma$ is $k$-recoverable, there is an extension $\widetilde{\gamma}^{-1}$
  such that $\widetilde{\gamma}^{-1}(\res(x_{I_\ell}))\subseteq I_\ell$, for all $\ell$.
  As $I_\ell \cap I_{\ell'} =\emptyset$, for $\ell,\ell'\in[2^{n-k}]$ and $\ell \neq\ell'$,
  we obtain that $\res(x_{I_\ell})\cap\res(x_{I_{\ell'}})=\emptyset$.
  Hence, there are $2^{n-k}$ many distinct intervals that have extended
  codewords with non-overlapping resolutions.

  We show that for each $I_\ell$ the resolution $\res(x_{I_\ell})$ has size at least $2^k$.
  The resolution $\res(x_{I_\ell})$ contains codewords $\gamma(i)$ and $\gamma(i+k)$.
  By~\cref{lem:bitdist}, $x_I$ has at least $k$-many $\mfu$'s. It follows that $|\res(x_I)|\geq 2^k$.
  Hence, there are $[2^{n-k}]$ many distinct intervals $I_\ell$, with resolutions $\res(x_{I_\ell})$ of size at least $2^k$.
  Accordingly, there have to be at least $2^{n-k}\cdot 2^k=2^n$ possible words in the codomain.
  Thus, if $M = 2^{n-k}(k+1)$ the codomain has at least size $2^n$.

  Finally, fix codomain $\IB^n$, i.e., a codomain of size $2^n$.
  Assume for contradiction a domain $M > 2^{n-k}(k+1)$.
  We partition $M$ into intervals of size $k+1$. There are more than $2^{n-k}$ intervals
  with one interval possibly smaller than $k+1$. By the conclusion above we use
  $2^n$ words in the codomain for the first $2^{n-k}$ intervals.
  The remaining intervals require further words, but all words in the codomain
  are allocated already.
\end{IEEEproof}

\Cref{lem:recoverability} shows that, for $k$-recoverability, an $n$-bit code $\gamma$
can encode at most $2^{n-k}(k+1)$ codewords.
We conclude that any code with domain $[2^{n-k}(k+1)]$ and codomain $\IB^n$
  is at most $k$-recoverable.

\begin{corollary}\label{col:recoverability}\label{thm:recoverability}
  For $k,n\in\IN$, no code $\gamma\colon[M]\rightarrow\IB^n$ with $|M|>2^{n-k}(k+1)$ is
  $k$-recoverable.
\end{corollary}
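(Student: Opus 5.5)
The corollary is the contrapositive of \cref{lem:recoverability}, so the plan is to reduce to it directly. Suppose, for contradiction, that $\gamma\colon[M]\rightarrow\IB^n$ is $k$-recoverable with $M>2^{n-k}(k+1)$. Then \cref{lem:recoverability} applies to $\gamma$ verbatim and gives $M\leq 2^{n-k}(k+1)$, a contradiction. (The statement writes $|M|$; we read this as the domain size $M$, i.e.\ $|[M]|=M$.)

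To make the argument self-contained, I would also spell out the counting behind the theorem, since the theorem's proof is phrased somewhat loosely for the case $M>2^{n-k}(k+1)$. First, restrict $\gamma$ to the prefix $[2^{n-k}(k+1)]\subseteq[M]$ and partition this prefix into the $2^{n-k}$ intervals $I_\ell=\intvl{\ell(k+1),\ell(k+1)+k}$. Second, use \cref{lem:preserve_necessary} to conclude that $\gamma$ is $k$-preserving. Then \cref{lem:bitdist} shows that each $x_{I_\ell}$ has at least $k$ metastable bits, so $|\res(x_{I_\ell})|\geq 2^k$. Third, recoverability with a single extension $\widetilde\gamma^{-1}$ maps $\res(x_{I_\ell})$ into $I_\ell$. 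Since the $I_\ell$ are pairwise disjoint, the resolutions $\res(x_{I_\ell})$ are pairwise disjoint, and together they exhaust all $2^n$ words of $\IB^n$.

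The final step is the one that needs care. Because $M>2^{n-k}(k+1)$, the value $m=2^{n-k}(k+1)$ lies in $[M]$, and its codeword $\gamma(m)$ is some word of $\IB^n$. Every word of $\IB^n$ lies in some $\res(x_{I_\ell})$, so $\gamma(m)\in\res(x_{I_\ell})$ for some $\ell$. Recoverability then forces $\widetilde\gamma^{-1}(\gamma(m))\in I_\ell$. But $\widetilde\gamma^{-1}$ extends $\gamma^{-1}$, so $\widetilde\gamma^{-1}(\gamma(m))=\gamma^{-1}(\gamma(m))=m\notin I_\ell$, a contradiction.

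This last step, that the codeword of a value outside the prefix must land in an already covered resolution, is the only point where the theorem's proof ("all words are allocated") is informal. Making it precise via $\widetilde\gamma^{-1}$ extending $\gamma^{-1}$ closes the gap; everything else is direct citation of earlier results.
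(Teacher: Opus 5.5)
Your proposal is correct and matches the paper, which obtains this corollary directly as the contrapositive of \cref{lem:recoverability}. Your self-contained recount also tightens that theorem's proof in two places: you invoke \cref{lem:preserve_necessary} before applying \cref{lem:bitdist}, and you replace ``all words are allocated'' with the clean argument that $\gamma(m)$ lies in some $\res(x_{I_\ell})$ even though $\widetilde\gamma^{-1}(\gamma(m))=m\notin I_\ell$.

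One caveat applies to the paper as well: partitioning into $2^{n-k}$ intervals tacitly requires $k\le n$, and as literally stated the claim fails for $k>n$. For example, take $n=1$ and $k=2$. The identity code $\gamma\colon[2]\rightarrow\IB^1$ is $2$-recoverable, since $[2]$ contains no interval of imprecision $2$, yet $2>2^{-1}\cdot 3$.
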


A direct result of \cref{thm:recoverability} is that no code with $M = 2^n$ can
be more than $1$-recoverable.
In the following section, we present two optimal rate codes, one that is
\emph{not} $1$-recoverable (standard binary code) and one that is $1$-recoverable (binary reflected Gray code).

%% file: codes.tex
\section{Common Code Schemes}\label{sec:enc_examplecodes}

We dedicate this section to introducing the following (well-known) codes: 
  standard binary code, 
  unary thermometer code, and
  binary reflected Gray code (BRGC).
They differ in their redundancy, preservation, and recoverability 
  properties.
Each property is discussed for the respective code.
The standard binary code serves as an example of a 
  well-known code that is unsuitable for $k$-recoverable codes.
It will be used in the addition; see \cref{sec:addition}.
The unary thermometer code and the BRGC are used in \cref{sec:hybcode}
  to construct the hybrid code.

  In terms of notation, we use superscripts to distinguish specific code families while keeping
$\gamma$ as a generic code symbol: $\gBin{n}$, $\gBRGC{n}$, and $\gUnary{n}$ denote the binary, binary reflected Gray code, and unary code families, respectively, and $\gHyb{n,k}$ denotes our hybrid code.

\subsection{Standard Binary Code}

We denote the encoding function of the standard binary code by $\gBin{n}$.
The standard binary code is also known as the base-$2$ numeral system.

\begin{definition}
  For $n\in\IN_{>0}$ and $i\in[2^n]$ we define the binary 
    encoding function $\gBin{n}\colon[2^n]\rightarrow\IB^n$ by
  \begin{equation*}
    \gBin{n}(i)\coloneqq \gBin{n-1}(\floor{i / 2}) \circ x_n\,,
  \end{equation*}
  where $x_n \coloneqq (i \bmod 2)$ and ``$\circ$''  denotes concatenation of bit strings. 
\end{definition}

The standard binary code has (optimal) redundancy~$1$, each number in $[2^n]$
  can be mapped to a binary word in $\IB^n$ and vice versa.
From examples in \cref{sec:intro} and \cref{sec:codes}, we observe that the code is neither $1$-preserving nor $1$-recoverable.

\subsection{Unary Thermometer Codes}

Next, we define the unary thermometer codes.
The code comes in two different flavors that we name $\gUnary{n}$
  and $\gbUnary{n}$.
The code $\gUnary{n}$ indicates the encoded value by a number of consecutive $1$'s.
Vice versa $\gbUnary{n}$ indicates the value by a number of consecutive $0$'s.
The code $\gbUnary{n}$ is obtained from $\gUnary{n}$ by flipping all
  bits. 
Both codes $\gUnary{n}$ and $\gbUnary{n}$ have redundancy $n/\log(n+1)$.

\begin{definition}[Unary Thermometer Codes]\label{def:unary}
  For $n\in\IN$ and $i\in[n+1]$ we define the encoding functions 
 \begin{align*}
   \gUnary{n}(i)&\coloneqq 1^i 0^{n-i}\,\text{, and}\\
   \gbUnary{n}(i)&\coloneqq \overline{\gUnary{n}(i)} = 0^i 1^{n-i}\,.
 \end{align*}
\end{definition}

Two example words of these codes are given by
  $\gUnary{4}(3) = 1110$, $\gbUnary{4}(3) = 0001$.

We observe that every extended codeword of imprecision $p$ has exactly
  $p$ many $\mfu$ bits.
The observation shows that in this case, each extended codeword has $i$-many
  stable bits in the front part and $(n-i-p)$-many stable bits in the rear part.

\begin{observation}\label{obs:unarystarr}
  Let $n,p\in\IN$ and $i\in[n+1]$. For any $I=\intvl{i,i+p}$, where $i+p\leq n$,
  it holds that
  \begin{align*}
    \bigstarr\gUnary{n}(I)&=\gUnary{n}(i)\starr\gUnary{n}(i+p)=1^i \mfu^p 0^{n-i-p}\,\text{, and}\\
    \bigstarr\gbUnary{n}(I)&=\gbUnary{n}(i)\starr\gbUnary{n}(i+p)=0^i \mfu^p 1^{n-i-p}\,.
  \end{align*}
\end{observation}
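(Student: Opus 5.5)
The plan is a direct computation of the bitwise superposition over the interval $I=\intvl{i,i+p}$, first for $\gUnary{n}$ and then for $\gbUnary{n}$ by complementation.

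For $\ell\in I$, the codeword $\gUnary{n}(\ell)=1^\ell 0^{n-\ell}$ has bit $j$ (indexing positions $1,\ldots,n$) equal to $1$ iff $j\le \ell$. Fix a position $j$ and split into three cases.
\begin{itemize}
\item If $j\le i$, then $j\le\ell$ for every $\ell\in I$, so all words in $\gUnary{n}(I)$ agree on a $1$ there. By Definition~\ref{def:superposition} the superposition has a $1$ in position $j$.
\item If $j>i+p$, then $j>\ell$ for every $\ell\in I$, so all words agree on a $0$, and the superposition has a $0$ there.
\item If $i<j\le i+p$, then $\gUnary{n}(i)$ has a $0$ and $\gUnary{n}(i+p)$ has a $1$ at position $j$. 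These two words disagree, so the superposition has $\mfu$ there.
\end{itemize}
Collecting positions gives $1^i\mfu^p0^{n-i-p}$. The condition $i+p\le n$ makes the interval lie in the domain $[n+1]$ and makes the three blocks have nonnegative lengths summing to $n$.

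The same case analysis also gives the equality with $\gUnary{n}(i)\starr\gUnary{n}(i+p)$. In the first two cases the endpoint words agree, in the third they disagree, so their superposition is already $1^i\mfu^p0^{n-i-p}$. Alternatively, this follows from associativity and commutativity of $*$ together with monotonicity: every intermediate word agrees with both endpoints wherever the endpoints agree, so including it adds no new $\mfu$.

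For $\gbUnary{n}$, I would use that bitwise negation commutes with superposition. If $x_j=y_j$ then $\bar x_j=\bar y_j$, and if $x_j\ne y_j$ then $\bar x_j\ne\bar y_j$. Since $\mfu$ is fixed under negation, $\overline{x*y}=\bar x*\bar y$ holds bitwise and extends to $\bigstarr$ over sets. Applying negation to the first identity then yields $0^i\mfu^p1^{n-i-p}$.

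There is no real obstacle here. The only care needed is the indexing convention at the block boundaries, namely whether position $i$ belongs to the $1$-block, and the degenerate case $p=0$, where the $\mfu$-block is empty. Both are covered by the case split above.
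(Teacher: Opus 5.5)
Your proof is correct and follows essentially the approach the paper relies on. The paper states this observation without a formal proof and justifies it only by remarking that the extended codeword has $i$ stable bits in front, $p$ metastable bits, and $n-i-p$ stable bits at the end; your positionwise case split establishes exactly that, and your complementation argument for $\gbUnary{n}$ matches the reasoning the paper uses in Observation~\ref{obs:neg_mapping}.
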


In particular, each resolution has the same front and rear part as the extended 
  codeword.
It follows that the resolution of an interval does not add new codewords.
Hence, both flavors of the unary thermometer codes are $n$-preserving.

\begin{corollary}\label{col:unarypreserv}
  Codes $\gUnary{n}$ and $\gbUnary{n}$ are $n$-preserving.
\end{corollary}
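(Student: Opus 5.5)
The corollary follows directly from \cref{def:preserve} and \cref{obs:unarystarr}. I will carry out the argument for $\gUnary{n}$ and then transfer it to $\gbUnary{n}$ by complementation.

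Fix an interval $I=\intvl{i,i+p}\subseteq[n+1]$ with $p\leq n$ (any such interval automatically has $i+p\leq n$). By \cref{obs:unarystarr}, $x_I=1^i\mfu^p0^{n-i-p}$. Hence every $y\in\res(x_I)$ satisfies $y_j=1$ for the first $i$ positions and $y_j=0$ for the last $n-i-p$ positions.

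The inclusion $\gamma(I)\subseteq\gamma([n+1])\cap\res(x_I)$ is immediate, since each $\gUnary{n}(\ell)$ with $\ell\in I$ is one of the superposed words. For the reverse inclusion, suppose a codeword $1^\ell0^{n-\ell}$ lies in $\res(x_I)$. Its first $i$ bits are $1$, so $\ell\geq i$. Its last $n-i-p$ bits are $0$, so $\ell\leq i+p$. Therefore $\ell\in I$, which gives $\gamma([n+1])\cap\res(x_I)=\gamma(I)$.

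For $\gbUnary{n}$ the same argument applies with the roles of $0$ and $1$ swapped, using the second identity of \cref{obs:unarystarr}. Alternatively, bitwise negation commutes with both superposition and resolution, and $\gbUnary{n}$ is the bitwise negation of $\gUnary{n}$.

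There is no real obstacle here. The only point needing care is the boundary: intervals must lie inside the domain $[n+1]$, which is exactly what guarantees $i+p\leq n$ so that \cref{obs:unarystarr} applies.
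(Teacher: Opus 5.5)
Your proof is correct and follows the paper's argument. The paper also derives the corollary from \cref{obs:unarystarr}: every resolution of $x_I$ keeps the stable prefix and suffix, so no codeword outside $\gamma(I)$ can appear. You simply spell out both inclusions and the complementation step for $\gbUnary{n}$ more explicitly than the paper does.
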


Next, we show that both flavors of the unary thermometer
  code are $n$-recoverable.
To this end, we first define a mapping function 
  $\mUnary{n}\colon\IB\times\IB^n\rightarrow\IB^n$.
Given an indicator bit $\pi$, the mapping maps a binary word $x$
  to a unary thermometer codeword.
If $\pi=0$, we map to $\gUnary{k}$, i.e., a codeword 
  of the form $1^\ell0^{k-\ell}$.
It is left to decide whether to choose $\ell$ according 
  to the first $0$ or the last $1$ in $x$.
Analogously, if $\pi=1$ we map to $\gbUnary{k}$ and need to
  decide whether to choose $\ell$ according to the first $1$ or
  the last $0$ in $x$.

\begin{figure}
  \centering
  \includegraphics[width=.7\linewidth]{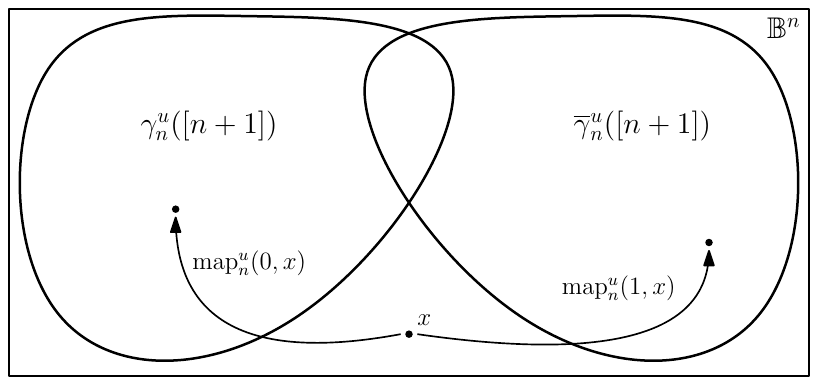}
  \caption{Illustration of the mapping $\tilde{u}$ from words to unary encoding.}\label{fig:mapping}
\end{figure}

For a binary word $x\in\IB^n$, we define indices
  $\ell_b^{\max}, \ell_b^{\min}\in [n+2]=\{0,\ldots,n+1\}$ 
  as follows.
Index $\ell_b^{\max}$ is the maximal index such that $x_{\ell_b^{\max}}=b$
  and $\ell_b^{\max}=0$ if there is no such index.
Index $\ell_b^{\min}$ is the minimal index such that $x_{\ell_b^{\min}}=b$
  and $\ell_b^{\min}=n+1$ if there is no such index.
For example, given the word $00110011$, we obtain $\ell^{min}_0=1$, $\ell^{min}_1=3$, $\ell^{max}_0=6$ and $\ell^{max}_1=8$.

\begin{definition}[Unary Thermometer Mapping]\label{def:map}
  Let $\pi\in\IB$ and $x\in\IB^n$, indices $\lmax{b}$, $\lmin{b}$ 
    are indices of $x$.
  The mapping $\mUnary{n}\colon\IB\times\IB^n\rightarrow\IB^n$ is defined by
  \begin{align*}
    &\mUnary{n}(\pi,x)\coloneqq\\
    &
    \begin{cases}
      1^{\lmin{0}-1} \circ 0^{k-\lmin{0}+1} &\text{, if }\pi=0\text{ and } x_{\ceil{n/2}}=0\\
      1^{\lmax{1}} \circ 0^{k-\lmax{1}} &\text{, if }\pi=0\text{ and } x_{\ceil{n/2}}=1\\
      0^{\lmax{0}} \circ 1^{k-\lmax{0}} &\text{, if }\pi=1\text{ and } x_{\ceil{n/2}}=0\\
      0^{\lmin{1}-1} \circ 1^{k-\lmin{1}+1} &\text{, if }\pi=1\text{ and } x_{\ceil{n/2}}=1\,.
    \end{cases}
  \end{align*}
\end{definition}

\begin{remark}
  Choosing index $\ell$ depending on $x_{\ceil{k/2}}$ is not necessary here.
  We obtain recoverability of unary thermometer codes when choosing
    $\ell$ arbitrary in $\intvl{\lmin{0}-1,\lmax{1}}$ for $\mUnary{n}(0,x)$,
    resp. $\intvl{\lmax{0},\lmin{1}-1}$ for $\mUnary{n}(1,x)$.
  The mapping $\mUnary{n}$, however, will be used in the next chapter to map
    binary words to words in the hybrid encoding.
\end{remark}

Before we show that the unary thermometer codes are recoverable, 
  we show that mapping the resolution of an extended codeword is correct.
In particular we show that, for interval $I$ and $x\in\res(x_I)$, $\mUnary{n}(0,x)$ 
  (resp.\ $\mUnary{n}(1,x)$) maps $x$ to a codeword in $\gUnary{n}(I)$
  (resp.\ $\gbUnary{n}(I)$).

\begin{lemma}\label{lem:mapping}
  Let $n,p\in\IN$, $i\in[n+1]$ and $I=\intvl{i,i+p}$, with $i+p\leq n$.
  For $x_0\in\res(\bigstarr \gUnary{n}(I))$ 
  and $x_1\in\res(\bigstarr \gbUnary{n}(I))$ 
    it holds that
  \begin{align*}
    \mUnary{n}(0,x_0) &\in \gUnary{n}(I)\,\text{, and}\\
    \mUnary{n}(1,x_1) &\in \gbUnary{n}(I)
  \end{align*}
\end{lemma}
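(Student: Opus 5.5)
The plan is a direct case analysis on the structure of resolutions, using \cref{obs:unarystarr}. Throughout I read the length parameter $k$ in \cref{def:map} as $n$ (the map produces words in $\IB^n$). I also assume $n\ge 1$ so that $x_{\ceil{n/2}}$ is defined; for $n=0$ the claim is trivial, since both sides are the empty word.

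\textbf{Shape of resolutions.} By \cref{obs:unarystarr}, $\bigstarr\gUnary{n}(I)=1^i\mfu^p0^{n-i-p}$. Hence every $x_0\in\res(\bigstarr\gUnary{n}(I))$ has $x_j=1$ for $j\le i$, $x_j=0$ for $j>i+p$, and arbitrary bits at positions $i+1,\ldots,i+p$. It then suffices to show that the length $\ell$ of the leading $1$-block of $\mUnary{n}(0,x_0)$ lies in $\intvl{i,i+p}$. This holds because $\mUnary{n}(0,x_0)=1^\ell0^{n-\ell}=\gUnary{n}(\ell)$.

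\textbf{First case: $x_{\ceil{n/2}}=0$.} Then $\ell=\lmin{0}-1$. Since $x_0$ contains a zero (at position $\ceil{n/2}$), the value $\lmin{0}$ is a genuine index, not the default $n+1$. Positions $1,\ldots,i$ are all $1$, so $\lmin{0}\ge i+1$. For the upper bound: if $i+p<n$, position $i+p+1$ is $0$, so $\lmin{0}\le i+p+1$. If $i+p=n$, then trivially $\lmin{0}\le n=i+p$. In both subcases $\ell\in\intvl{i,i+p}$.

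\textbf{Second case: $x_{\ceil{n/2}}=1$.} Then $\ell=\lmax{1}$, and a $1$ exists, so $\lmax{1}$ is a genuine index. All positions beyond $i+p$ are $0$, giving $\lmax{1}\le i+p$. For the lower bound: if $i\ge1$, position $i$ is $1$, so $\lmax{1}\ge i$; if $i=0$, the bound $\lmax{1}\ge 0$ is trivial. Again $\ell\in\intvl{i,i+p}$.

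\textbf{The $\gbUnary{n}$ statement.} I would reduce this to the first one by complementation. If $x_1\in\res(\bigstarr\gbUnary{n}(I))=\res(0^i\mfu^p1^{n-i-p})$, then $\overline{x_1}\in\res(1^i\mfu^p0^{n-i-p})$. Complementing swaps the roles of the indices: $\lmax{0}(x_1)=\lmax{1}(\overline{x_1})$ and $\lmin{1}(x_1)=\lmin{0}(\overline{x_1})$. It also flips the case bit at position $\ceil{n/2}$. Consequently $\mUnary{n}(1,x_1)=\overline{\mUnary{n}(0,\overline{x_1})}$, and the membership follows from the first part, since $\gbUnary{n}=\overline{\gUnary{n}}$.

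\textbf{Main obstacle.} The only delicate points are the boundary conventions $\lmin{b}=n+1$ and $\lmax{b}=0$, which apply when no such bit exists, together with the extreme cases $i=0$ or $i+p=n$. The case bit $x_{\ceil{n/2}}$ is exactly what rules out the degenerate default in each case, so I would handle those edges explicitly as above.
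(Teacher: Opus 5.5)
Your proof is correct and matches the paper's argument. Like the paper, you use \cref{obs:unarystarr} to get the shape $1^i\mfu^p0^{n-i-p}$ of every resolution, show that the index selected by the map lies in $\intvl{i,i+p}$, and handle the $\gbUnary{n}$ case by complementation exactly as in \cref{obs:neg_mapping}. The only difference is that the paper notes both $\lmin{0}-1$ and $\lmax{1}$ lie in $I$ for every resolution, since even the default values $n+1$ and $0$ then land in $I$, so your case split on $x_{\ceil{n/2}}$ is harmless but unnecessary.
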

Before we prove the statement, we first observe that the bit-wise negation of the mapping is equal to the mapping of the negated arguments.

\begin{observation}\label{obs:neg_mapping}
  Let $\pi\in\IB$ and $x_u\in\IB^k$, then
  \begin{align*}
    \overline{\tilde{u}(\pi,x_u)}=\tilde{u}(\overline{\pi},\overline{x_u})\,.
  \end{align*}
\end{observation}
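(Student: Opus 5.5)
The statement to prove is the last one displayed, Observation~\ref{obs:neg_mapping}: $\overline{\tilde{u}(\pi,x_u)}=\tilde{u}(\overline{\pi},\overline{x_u})$, where $\tilde{u}$ is the mapping $\mUnary{k}$ of Definition~\ref{def:map}. I will verify it by a direct case analysis over the four cases of Definition~\ref{def:map}. The two inputs to track are the indicator $\pi$ and the middle bit $x_{\ceil{k/2}}$. Negating the arguments flips both, so the case $(\pi,x_{\ceil{k/2}})=(a,b)$ for $(\pi,x_u)$ corresponds to the case $(\overline a,\overline b)$ for $(\overline\pi,\overline{x_u})$. This pairs case~1 with case~4 and case~2 with case~3.

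The key preliminary fact concerns how the indices transform under negation of $x_u$. Position $j$ of $\overline{x_u}$ equals $\overline b$ exactly when position $j$ of $x_u$ equals $b$. Writing $\lmin{b}(x)$ and $\lmax{b}(x)$ for the indices of a word $x$, this gives $\lmin{b}(\overline{x_u})=\lmin{\overline b}(x_u)$ and $\lmax{b}(\overline{x_u})=\lmax{\overline b}(x_u)$. The conventions for absent indices (value $k+1$ for $\lmin{}$, value $0$ for $\lmax{}$) are symmetric in $b$, so they are preserved as well.

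I then check the cases, reading the paper's $n$ as $k$ throughout.
\begin{itemize}
\item Case $\pi=0$, $x_{\ceil{k/2}}=0$. The left side is $\overline{1^{\lmin{0}(x_u)-1}0^{k-\lmin{0}(x_u)+1}}=0^{\lmin{0}(x_u)-1}1^{k-\lmin{0}(x_u)+1}$. For the right side we have $\overline\pi=1$ and $(\overline{x_u})_{\ceil{k/2}}=1$, so case~4 applies and yields $0^{\lmin{1}(\overline{x_u})-1}1^{k-\lmin{1}(\overline{x_u})+1}$. Since $\lmin{1}(\overline{x_u})=\lmin{0}(x_u)$, the two sides agree.
\item Case $\pi=0$, $x_{\ceil{k/2}}=1$. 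Negating case~2 gives $0^{\lmax{1}(x_u)}1^{k-\lmax{1}(x_u)}$. The right side falls under case~3 and equals $0^{\lmax{0}(\overline{x_u})}1^{k-\lmax{0}(\overline{x_u})}$, which is the same word by the index identity.
\item Cases with $\pi=1$. These follow from the two cases above by substituting $(\overline\pi,\overline{x_u})$ for $(\pi,x_u)$ and negating both sides, using that negation is an involution.
\end{itemize}

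No step is a genuine obstacle. The only thing requiring care is the index bookkeeping: confirming that the boundary conventions for nonexistent indices transform correctly, and that the middle-bit position refers to the same length $k$ (the paper mixes $n$ and $k$).
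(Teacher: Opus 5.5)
Your proof is correct and follows the paper's own argument: a case check over the four cases of Definition~\ref{def:map}, driven by the identity that $\ell^{\max}_b$ and $\ell^{\min}_b$ of $x_u$ equal $\ell^{\max}_{\overline{b}}$ and $\ell^{\min}_{\overline{b}}$ of $\overline{x_u}$. Your pairing of case~1 with case~4 and case~2 with case~3, together with the involution shortcut for $\pi=1$, just spells out the details the paper leaves to the reader.
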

\begin{IEEEproof}
  The observation can be verified by a case distinction on four cases of the
  definition. Every case follows then from the key insight that $\ell_b^{\max}$
  (resp. $\ell_b^{\min}$) of $x_u$ equals $\ell_{\overline{b}}^{\max}$
  (resp. $\ell_{\overline{b}}^{\min}$) of ${\overline{x_u}}$.
\end{IEEEproof}

We can now state the proof of \cref{lem:mapping}.

\begin{IEEEproof}
  We only show the case $\mUnary{n}(0,x_0) \in \gUnary{n}(I)$
    as the other case follows by \cref{obs:neg_mapping}.
  Assume $x_0\in\res(\bigstarr \gUnary{n}(I))$, by \cref{obs:unarystarr}
    we know that $\lmin{0}-1,\lmax{1}\in\intvl{i,i+p}$.
  Hence, $\mUnary{n}(0,x_0)=1^\ell0^{n-\ell}$ for some
    $\ell\in\intvl{i,i+p}=I$.
  The claim follows the definition of the unary thermometer code.
\end{IEEEproof}

If we define the extension to the decoding function by the mapping
  $\mUnary{n}$ we obtain recoverability of the unary thermometer codes
  immediately from \cref{lem:mapping}.

\begin{lemma}\label{lem:unaryrecov}
  Codes $\gUnary{n}$ and $\gbUnary{n}$ are $n$-recoverable.
\end{lemma}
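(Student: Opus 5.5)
The plan is to use the mapping $\mUnary{n}$ as the extension of the decoding function and then invoke \cref{lem:mapping} directly. For $\gUnary{n}$, define $\widetilde{\gamma}^{-1}\colon\IB^n\to[n+1]$ by
\[
\widetilde{\gamma}^{-1}(x)\coloneqq(\gUnary{n})^{-1}\bigl(\mUnary{n}(0,x)\bigr).
\]
For $\gbUnary{n}$, use $(\gbUnary{n})^{-1}(\mUnary{n}(1,x))$ instead. This is well defined because $\mUnary{n}(\pi,x)$ always produces a word of the form $1^\ell0^{n-\ell}$ (resp.\ $0^\ell1^{n-\ell}$) with $\ell\in[n+1]$, and every such word is a codeword. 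The indices $\lmin{0}-1$, $\lmax{1}$, $\lmax{0}$, and $\lmin{1}-1$ all lie in $\intvl{0,n}$. (The $k$ appearing in \cref{def:map} should be read as $n$ here.)

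First I check that $\widetilde{\gamma}^{-1}$ genuinely extends the decoding function. Take a codeword $x=1^j0^{n-j}$. Then $\lmin{0}-1=j$ and $\lmax{1}=j$, so whichever case of \cref{def:map} applies, the output is $1^j0^{n-j}$ and hence $\widetilde{\gamma}^{-1}(x)=j$. An alternative route is to apply \cref{lem:mapping} with $p=0$, where $\res$ of a stable codeword is just the codeword itself. The $\gbUnary{n}$ case is symmetric, either by the same computation or by \cref{obs:neg_mapping}.

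Second, fix an interval $I=\intvl{i,i+p}\subseteq[n+1]$ with $p\le n$; any interval inside the domain satisfies $i+p\le n$. Let $x\in\res(x_I)$. By \cref{lem:mapping}, $\mUnary{n}(0,x)\in\gUnary{n}(I)$, and therefore $\widetilde{\gamma}^{-1}(x)\in(\gUnary{n})^{-1}(\gUnary{n}(I))=I$. This gives $\widetilde{\gamma}^{-1}(\res(x_I))\subseteq I$, which is exactly \cref{def:recover} with $k=n$. The same argument works for $\gbUnary{n}$, using the second part of \cref{lem:mapping}.

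There is essentially no obstacle here. The only care needed is in the well-definedness and extension check: the boundary conventions $\lmin{0}=n+1$ and $\lmax{1}=0$ when no such index exists must make the output land in $[n+1]$, and the middle-bit case split in \cref{def:map} must not break the fact that codewords are fixed points.
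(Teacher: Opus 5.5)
Your proof is correct and matches the paper's: it defines the extension as $(\gUnary{n})^{-1}(\mUnary{n}(0,x))$ and concludes directly from \cref{lem:mapping}, leaving the $\gbUnary{n}$ case as analogous. Your extra checks, that this map is well defined and acts as the decoding function on codewords, are sound details the paper leaves implicit.
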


\begin{IEEEproof}
  We prove $n$-recoverability of $\gUnary{n}$.
  We omit the proof for $\gbUnary{n}$, as it follows the same arguments.

  Define the extension of the decoding function 
    $(\widetilde{\gamma}^u_n)^{-1}\colon\IB^n\rightarrow[n+1]$
    for $x\in\IB^n$ by
  \begin{equation*}
    (\widetilde{\gamma}^u_n)^{-1}(x)\coloneqq(\gamma^u_n)^{-1}(\mUnary{n}(0,x))\,,
  \end{equation*}
  The claim follows from the first case of \cref{lem:mapping}.
\end{IEEEproof}

\begin{remark}
  Every bit is flipped when switching from $\gUnary{n}(n)$ to $\gUnary{n}(0)$,
    respectively from $\gbUnary{n}(n)$ to $\gbUnary{n}(0)$.
  This becomes an issue when using the unary thermometer code in the construction
   of our hybrid encoding.
  We can overcome this issue when appending the codewords of $\gbUnary{n}$ to the
    codewords of $\gUnary{n}$.
  This combined encoding is also a \emph{snake-in-the-box} code.
\end{remark}

\subsection{Binary Reflected Gray Code (BRGC)}

As the name suggests, the code is built from the standard binary code with a
  reflection process. 
Intuitively, for an $n$ bit code, we count through all codewords of length $n-1$ 
  with the first bit fixed to $0$. 
On the next up-count, the first bit is flipped to $1$ while fixing the remaining
  $n-1$ bits. 
  We use the term \emph{up-count} to refer to a single increment step in the counting sequence, that is, moving from the codeword for $i$ to the codeword for $i+1$. 
Afterwards, we reflect the counting order.
With the first bit fixed to $1$, we count backward through all codewords of 
  length $n-1$.  
For example, this is evident in the first two columns of \cref{tab:brgc}.
The code counts through bits three and four before flipping bit two and 
  reflecting the order.
Formally, we define~$\gBRGC{n}$, the BRGC of length $n$, by recursion.

\begin{definition}[BRGC]\label{def:BRGC}
  For $n\in\IN$ we recursively define the encoding function $\gBRGC{n}\colon[2^n]\rightarrow\IB^n$
  as follows.
  A $1$-bit BRGC is given by $\gBRGC{1}(0)\coloneqq 0$ and $\gBRGC{1}(1)\coloneqq 1$.
  For $n>1$ and $i\in[2^n]$, the encoding function $\gBRGC{n}$ is defined by
  \begin{align*}
    \gBRGC{n}(i)\coloneqq
    \begin{cases}
      0 \circ \gBRGC{n-1}(i) & \text{, if } i\in[2^{n-1}]\\
      1 \circ \gBRGC{n-1}(2^n - i - 1) & \text{, if } i\in[2^n]\setminus[2^{n-1}]\,.
    \end{cases}
  \end{align*}
\end{definition}

\begin{table}
  \centering
  \caption{Binary Reflected Gray Code $\gBRGC{4}$.}
  \begin{tabular}{cccccccc}
    $i$ & $\gBRGC{4}(i)$ & $i$ & $\gBRGC{4}(i)$ & $i$ & $\gBRGC{4}(i)$ & $i$ & $\gBRGC{4}(i)$ \\ \hline
    $0$ & $0000$ & $4$ & $0110$ & $8$ & $1100$ & $12$ & $1010$ \\
    $1$ & $0001$ & $5$ & $0111$ & $9$ & $1101$ & $13$ & $1011$ \\
    $2$ & $0011$ & $6$ & $0101$ & $10$ & $1111$ & $14$ & $1001$ \\
    $3$ & $0010$ & $7$ & $0100$ & $11$ & $1110$ & $15$ & $1000$
  \end{tabular}
  \label{tab:brgc}
\end{table}

For example, the code $\gBRGC{4}$ is listed in \cref{tab:brgc}.
The BRGC has (optimal) redundancy~$1$; each number in $[2^n]$
  can be mapped to a binary word in $\IB^n$ and vice versa.
We observe that the code $\gBRGC{n}$ indeed is a Gray code 
  according to \cref{def:grayprop}.
Hence, $\gBRGC{n}$ is $1$-preserving.

\begin{observation}\label{obs:gray_code}
  For $n\in\IN$, given two numbers $i\in[2^n]$ and $j=i+1\bmod2^n$,
  the codewords $\gBRGC{n}(i)$ and $\gBRGC{n}(j)$ have Hamming distance $1$.
\end{observation}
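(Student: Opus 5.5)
We prove \cref{obs:gray_code} by induction on $n$, following the recursive structure of \cref{def:BRGC}. We handle the non-wrapping case $j=i+1$ with $i<2^n-1$ and the wrap-around case $i=2^n-1$, $j=0$ separately. For the wrap-around we first establish a small auxiliary fact by the same induction: $\gBRGC{n}(0)=0^n$ and $\gBRGC{n}(2^n-1)=1\circ 0^{n-1}$. It holds for $n=1$. For $n>1$ the recursion gives $\gBRGC{n}(0)=0\circ\gBRGC{n-1}(0)=0^n$ and $\gBRGC{n}(2^n-1)=1\circ\gBRGC{n-1}(0)=1\circ 0^{n-1}$. Consequently $\gBRGC{n}(2^n-1)$ and $\gBRGC{n}(0)$ differ exactly in the first bit, which settles the wrap-around case for every $n$.

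For the non-wrapping case the base $n=1$ is immediate, since $0$ and $1$ differ in one bit. For $n>1$ and $j=i+1\le 2^n-1$ we split into three cases.

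\begin{enumerate}
\item Both $i,i+1\in[2^{n-1}]$. Both codewords carry leading bit $0$, and their suffixes $\gBRGC{n-1}(i)$ and $\gBRGC{n-1}(i+1)$ have Hamming distance $1$ by the induction hypothesis.
\item Both $i,i+1\in[2^n]\setminus[2^{n-1}]$. Both carry leading bit $1$, with suffixes $\gBRGC{n-1}(2^n-i-1)$ and $\gBRGC{n-1}(2^n-i-2)$. These are consecutive indices in $[2^{n-1}]$, namely $m$ and $m+1$ with $m=2^n-i-2$, so the induction hypothesis again gives distance $1$.
\item The crossing point $i=2^{n-1}-1$. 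Here $\gBRGC{n}(i)=0\circ\gBRGC{n-1}(2^{n-1}-1)$ and $\gBRGC{n}(i+1)=1\circ\gBRGC{n-1}(2^n-2^{n-1}-1)=1\circ\gBRGC{n-1}(2^{n-1}-1)$. The suffixes are identical and only the leading bit differs.
\end{enumerate}

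The only step needing care is the index bookkeeping in the reflected half, i.e.\ checking that the reflected indices $2^n-i-1$ stay in $[2^{n-1}]$ and remain consecutive. Apart from that, the argument is a direct unfolding of the definition.
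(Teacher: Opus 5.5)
Your proof is correct and is essentially the paper's argument: induction on $n$ with the same case split at $i=2^{n-1}-1$ and $i=2^n-1$. The paper only asserts the two within-half cases from the recursion, whereas you explicitly check that the reflected indices stay in range and remain consecutive. Your auxiliary closed forms $\gBRGC{n}(0)=0^n$ and $\gBRGC{n}(2^n-1)=1\circ 0^{n-1}$ are unnecessary, though: the recursion directly gives $1\circ\gBRGC{n-1}(0)$ versus $0\circ\gBRGC{n-1}(0)$, which is how the paper handles the wrap-around case.
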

\begin{IEEEproof}
  We show the claim by an inductive argument over $n$, the length of the code.
  The base case $n=1$ is simple, all possible codewords (i.e.\ $0$ and $1$)
  have Hamming distance $1$.
  For the induction step ($n>1$) we make a case distinction on $i$, where
  in the case $i\in[2^n]\setminus\{2^{n-1}-1,2^n-1\}$ the claim follows from the
  recursive definition of $\gBRGC{n}$.
  In case $i=2^{n-1}-1$, we obtain that
  \begin{align*}
    \gBRGC{n}(i)&=0 \gBRGC{n-1}(2^{n-1}-1)\text{ and}\\
    \gBRGC{n}(j)&=1 \gBRGC{n-1}(2^{n-1}-1)\,.
  \end{align*}
  Hence, the codewords differ in the first bit, and the claim holds.
  In case $i=2^n-1$, we obtain that
  \begin{align*}
    \gBRGC{n}(i)&=1 \gBRGC{n-1}(0)\text{ and}\\
    \gBRGC{n}(j)&=0 \gBRGC{n-1}(0)\,.
  \end{align*}
  Hence, the codewords differ in the first bit, and the claim holds.
\end{IEEEproof}

As BRGC is a Gray code it follows that BRGC is $1$-preserving.

\begin{observation}\label{obs:brgc_preserve}
  The Binary Reflected Gray Code $\gBRGC{n}$ is exactly $1$-preserving.
\end{observation}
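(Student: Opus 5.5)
The statement has two halves. First, $\gBRGC{n}$ is $1$-preserving. Second, for $n\ge 2$ it is not $2$-preserving; for $n=1$ there are no intervals of size three, so only the first half applies.

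\emph{$1$-preserving.} By \cref{def:preserve} we must handle intervals $I=\intvl{i,i+p}$ with $p\le 1$.

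For $p=0$, $x_I=\gBRGC{n}(i)$ is stable, so $\res(x_I)=\{\gBRGC{n}(i)\}=\gBRGC{n}(I)$.

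For $p=1$, \cref{obs:gray_code} says $\gBRGC{n}(i)$ and $\gBRGC{n}(i+1)$ differ in exactly one position. Hence $x_I=\gBRGC{n}(i)\starr\gBRGC{n}(i+1)$ has exactly one $\mfu$. Its resolution therefore has exactly two elements, namely $\gBRGC{n}(i)$ and $\gBRGC{n}(i+1)$. So $\gBRGC{n}([2^n])\cap\res(x_I)=\gBRGC{n}(I)$.

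\emph{Not $2$-preserving.} This is the only step needing an argument, and I would reuse the counting idea behind \cref{col:optrate}.

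Take any interval $I=\intvl{i,i+2}\subseteq[2^n]$, which exists once $2^n\ge 3$, i.e.\ $n\ge 2$. Suppose, for contradiction, that $\gBRGC{n}$ were $2$-preserving. Then \cref{lem:bitdist} shows $x_I$ has at least two $\mfu$'s, so $|\res(x_I)|\ge 4$.

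The BRGC is a bijection onto $\IB^n$, so every word in $\res(x_I)$ is a codeword. Thus $\gBRGC{n}([2^n])\cap\res(x_I)$ has at least $4>3=|\gBRGC{n}(I)|$ elements, contradicting \cref{def:preserve}.

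One can also argue directly, without invoking \cref{lem:bitdist}. The three consecutive codewords change two distinct bits: if both steps flipped the same bit, $\gBRGC{n}(i)=\gBRGC{n}(i+2)$, contradicting injectivity. Hence $x_I$ has exactly two $\mfu$'s and four resolutions, all of which are codewords.

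Either way, this establishes that the BRGC is exactly $1$-preserving.
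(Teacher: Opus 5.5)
Your proof is correct. The $1$-preserving half matches the paper: one $\mfu$ by \cref{obs:gray_code}, so the resolution is exactly the two codewords.

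Your ``not $2$-preserving'' half takes a different route. The paper just observes $2^n>2^{n-2}(2+1)$ and cites \cref{col:recoverability}. You instead use the local argument behind \cref{col:optrate}: three consecutive codewords superpose to a word with two $\mfu$'s, and since $\gBRGC{n}$ is a bijection onto $\IB^n$, all four resolutions are codewords, one more than $|\gBRGC{n}(I)|=3$.

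The paper's route is shorter, but read literally it only shows that $\gBRGC{n}$ is not $2$-\emph{recoverable}. \cref{lem:preserve_necessary} gives recoverable $\Rightarrow$ preserving, not the converse. Moreover, the proof of \cref{lem:recoverability} genuinely needs recoverability to make the resolutions of disjoint intervals disjoint. Your argument is about preservation directly, so it is the one that actually establishes this step.

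Your restriction to $n\ge 2$ is also warranted. For $n=1$ the inequality $2^n>3\cdot 2^{n-2}$ still holds, yet the $1$-bit BRGC has no interval of size three. It is therefore vacuously $k$-preserving (indeed $k$-recoverable) for every $k$, so the ``exactly'' claim fails there, which the paper's proof does not notice.
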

\begin{IEEEproof}
  Consider any interval $I=\intvl{i,i+1}$ with $i\in[M]$.
  By \cref{obs:gray_code} we know that $\gBRGC{n}(i)$ and $\gBRGC{n}(i+1)$
  differ in exactly one bit. Hence, $x_I$ has exactly one $\mfu$ bit and
  thus $\res(x_I)=\{\gBRGC{n}(i),\gBRGC{n}(i+1)\}$.

  For $\gBRGC{n}$ we have that $|M|=2^n>2^{n-2}(2+1)$.
  Hence, by \cref{col:recoverability} we follow that $\gBRGC{n}$ cannot be $2$-preserving.
\end{IEEEproof}

%% file: old_code.tex
\section{A \texorpdfstring{$k$}{k}-Preserving \texorpdfstring{$\ceil{k/2}$}{ceil(k/2)}-Recoverable Code}\label{sec:hybcode}

In this section, we define a code that is $k$-preserving and
  $\ceil{k/2}$-recoverable.
We are using a hybrid encoding scheme that uses the BRGC code
  and both flavors of the unary thermometer code.

Before defining the hybrid codes we need to state the notion of parity.
We use parity in the definition of the hybrid code to choose between
the two unary encodings.
The parity of a word denotes whether the word has an even or an odd number
of $1$'s.

\begin{definition}[Parity]\label{def:parity}
  The parity of an $n$-bit word is denoted by $\parity\colon\IB^n\rightarrow\IB$.
  It is defined by the sum over all bits modulo $2$,
  \begin{align*}
    \parity(x)\coloneqq\sum^n_{i=1}x_i \bmod 2\,.
  \end{align*}
\end{definition}

\subsection{The Hybrid Code}

\begin{table*}
  \caption{(Part of the) Hybrid Code $\gHyb{4,4}$.}\label{tab:hybrid}
  \centering
  \begin{tabular}{cccccccccccc}
    $i$ & $\gHyb{4,4}(i)$ & $i$ & $\gHyb{4,4}(i)$ & $i$ & $\gHyb{4,4}(i)$ & $i$ & $\gHyb{4,4}(i)$ & $i$ & $\gHyb{4,4}(i)$ & $i$ & $\gHyb{4,4}(i)$ \\ \hline
    & \multirow{2}{*}{\vdots} & $15$ & $0010$ $1111$ & $20$ & \cellcolor{Gray}$0110$ $0000$ & $25$ & \cellcolor{Gray}$0111$ $1111$ & $30$ & $0101$ $0000$ & $35$ & $0100$ $1111$ \\
    &  & $16$ & $0010$ $0111$ & $21$ & \cellcolor{Gray}$0110$ $1000$ & $26$ & \cellcolor{Gray}$0111$ $0111$ & $31$ & $0101$ $1000$ & $36$ & $0100$ $0111$\\
    $12$ & $0011$ $1100$ & $17$ & $0010$ $0011$ & $22$ & \cellcolor{Gray}$0110$ $1100$ & $27$ & \cellcolor{Gray}$0111$ $0011$ & $32$ & $0101$ $1100$ & $37$ & $0100$ $0011$\\
    $13$ & $0011$ $1110$ & $18$ & \cellcolor{Gray}$0010$ $0001$ & $23$ & $0110$ $1110$ & $28$ & \cellcolor{Gray}$0111$ $0001$ & $33$ & $0101$ $1110$ & \multirow{2}{*}{\vdots} \\
    $14$ & $0011$ $1111$ & $19$ & \cellcolor{Gray}$0010$ $0000$ & $24$ & $0110$ $1111$ & $29$ & \cellcolor{Gray}$0111$ $0000$ & $34$ & $0101$ $1111$ &
  \end{tabular}
\end{table*}

The $(n+k)$-bit hybrid code is denoted by
  $\gHyb{n,k}\colon[2^n(k+1)]\rightarrow\IB^{n+k}$.
It consists of an $n$-bit BRGC part and a $k$-bit
  unary thermometer part.
If the BRGC part has even parity, we count through the $\gUnary{k}$ code for
  $k+1$ up-counts.
We then increase the BRGC part by one up-count.
Thus, the parity of the BRGC part flips.
After the increment we count through the $\gbUnary{k}$ part for
  $k+1$ up-counts.
Then we do another up-count of the BRGC part and the procedure repeats.
Each time we increase the BRGC part we switch between the two flavours of the
  unary encodings.
Hence, which flavor of unary code is determined by the parity of the BRGC part.
For example, a part of the hybrid code $\gHyb{4,4}$ is given in \cref{tab:hybrid}.
Formally, the code is defined as follows.

\begin{definition}[Code $\gHyb{n,k}$]\label{def:hybrid}
  Let $n,k\in\IN$ and $n\geq k$, for $i\in [2^{n}(k+1)]$ define $x_g\in\IB^{n}$ and $x_u\in\IB^{k}$ by
  \begin{align*}
    x_g &\coloneqq \gBRGC{n}(\floor{i/(k+1)})\,,\\
    x_u &\coloneqq
    \begin{cases}
      \gUnary{k}(i\bmod(k+1)) &\text{, if }\parity(x_g)=0\\
      \gbUnary{k}(i\bmod(k+1)) &\text{, if }\parity(x_g)=1\,.
    \end{cases}
  \end{align*}
  The hybrid code is defined by the concatenation of $x_g$ and $x_u$
  \begin{align*}
    \gHyb{n,k}(i)\coloneqq x_g \circ x_u\,,
  \end{align*}
  we denote the BRGC and the unary parts by subscripts $g$ and $u$,
  \begin{align*}
    \gHyb{n,k}(i)_g &\coloneqq \gHyb{n,k}(i)_{1,n}=x_g\,,\\
    \gHyb{n,k}(i)_u &\coloneqq \gHyb{n,k}(i)_{n+1,n+k}=x_u\,.
  \end{align*}
\end{definition}

Every encoding function also defines a decoding function, as codes are injective.
The decoding function $(\gHyb{n,k})^{-1}\colon\IB^{n+k}\rightarrow[2^{n}(k+1)]$
  can be computed by
\begin{align}
  &(\gHyb{n,k})^{-1}(x) = \label{eqn:dec}\\
  &(\gBRGC{n})^{-1}(x_g) \cdot (k+1) +
  \begin{cases}
    (\gUnary{k})^{-1}(x_u) &\text{, if }\parity(x_g)=0\\
    (\gbUnary{k})^{-1}(x_u) &\text{, if }\parity(x_g)=1
  \end{cases}\,. \notag
\end{align}
Here and throughout, ``$+$'' and ``$\cdot$'' denote standard integer addition and multiplication, whereas ``$\circ$'' denotes concatenation of bit strings.

We observe two properties of extended codewords in the hybrid encoding.
First, consider an interval that does not involve an up-count in the BRGC part, i.e.,
  either $i$ is a multiple of $(k+1)$ or no element in $I$ is a multiple
  of $(k+1)$.
Then, the BRGC part of the extended codeword has no $\mfu$ and the
  unary part\footnote{For convenience we refer to the unary thermometer part as the unary part for the remainder of this work.}
  has a stable prefix and suffix.

\begin{observation}\label{obs:samecolumn}
  For $i\in [2^{n}(k+1)]$ and $p\leq k$, let $I=\intvl{i,i+p}$.
  If there are $\alpha,\ell\in\IN$ such that $i=\ell(k+1)+\alpha$
  and $\alpha + p\leq k$ then
  \begin{align*}
    (x_I)_g &= \gHyb{n,k}(i)_g = \gBRGC{n}(\ell) \,\text{, and}\\
    (x_I)_u &= \bigstarr\gUnary{k}(\intvl{\alpha,\alpha+p}) = b^\alpha \mfu^p (\overline{b})^{k-\alpha-p}\,,
  \end{align*}
  where $b=\overline{\parity((x_I)_g)}$.
\end{observation}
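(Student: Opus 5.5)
The plan is to show first that every element of $I$ lies in the same ``column'' of the hybrid code, i.e., has the same BRGC index $\ell$. After that, the superposition $x_I$ splits into a superposition of identical BRGC parts and a superposition of unary codewords of a single flavor, and the latter is exactly the situation of \cref{obs:unarystarr}.

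\emph{Step 1 (common BRGC index).} For $j\in I$ write $j=\ell(k+1)+\alpha'$ with $\alpha'\in\intvl{\alpha,\alpha+p}$. Since $\alpha\geq 0$ and $\alpha+p\leq k$, we have $0\leq\alpha'\leq k$. Hence $\floor{j/(k+1)}=\ell$ and $j\bmod(k+1)=\alpha'$. By \cref{def:hybrid}, $\gHyb{n,k}(j)_g=\gBRGC{n}(\ell)$ for all $j\in I$. Since the superposition is defined bitwise (\cref{def:superposition}), it commutes with splitting words into the $g$ and $u$ parts. The superposition of identical stable words is that word, so $(x_I)_g=\gBRGC{n}(\ell)=\gHyb{n,k}(i)_g$, which is a stable word.

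\emph{Step 2 (unary part).} All $j\in I$ share the BRGC part $x_g=\gBRGC{n}(\ell)$, so $\parity(x_g)$ is the same for all of them. Hence all unary parts come from the same flavor. They are $\gUnary{k}(\alpha')$ when $\parity(x_g)=0$ and $\gbUnary{k}(\alpha')$ when $\parity(x_g)=1$, in both cases for $\alpha'\in\intvl{\alpha,\alpha+p}$. Applying \cref{obs:unarystarr} with $n:=k$ and the interval $\intvl{\alpha,\alpha+p}$ is legitimate because $\alpha+p\leq k$. It gives $1^\alpha\mfu^p0^{k-\alpha-p}$ in the first case and $0^\alpha\mfu^p1^{k-\alpha-p}$ in the second.

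\emph{Step 3 (matching the formula).} With $b=\overline{\parity((x_I)_g)}$, parity $0$ gives $b=1$ and parity $1$ gives $b=0$. Both cases are therefore captured by $b^\alpha\mfu^p(\overline b)^{k-\alpha-p}$. The expression $\bigstarr\gUnary{k}(\intvl{\alpha,\alpha+p})$ in the statement should be read as the superposition over the unary flavor selected by the parity; literally it equals $b^\alpha\mfu^p(\overline b)^{k-\alpha-p}$ only for even parity. I will state this reading explicitly in the proof.

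None of these steps is a real obstacle. The only care needed is in Step 1, to check that the bound $\alpha+p\leq k$ prevents the interval from crossing a multiple of $k+1$, and in Step 3, to handle the parity-$1$ (complemented) flavor correctly.
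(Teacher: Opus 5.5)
Your proof is correct and follows the paper's argument essentially step for step. Every element of $I$ has $\lfloor \cdot/(k+1)\rfloor=\ell$, so all elements share the BRGC part and its parity, the unary parts therefore come from a single flavor, and Observation~\ref{obs:unarystarr} finishes it. You also rightly note that the middle expression $\bigstarr\gUnary{k}(\intvl{\alpha,\alpha+p})$ is literally correct only for even parity, since the odd case needs $\gbUnary{k}$; the paper's proof passes over this silently.
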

\begin{IEEEproof}
  Assume there are $\alpha,\ell\in\IN$ such that $i=\ell(k+1)+\alpha$
  and $\alpha + p\leq k$. Thus,
  \begin{align*}
    \floor{i'/(k+1)}=\ell \text{ for each } i'\in\intvl{i,i+p}\,.
  \end{align*}
  This already proves the first claim, i.e., each element in the interval has the same BRGC part.
  Moreover, we get that
  $\parity(\gHyb{n,k}(i')_g)=\parity(\gHyb{n,k}(i)_g)$, i.e., the unary part uses
  the same flavor of unary encoding.
  The second part then follows from \cref{obs:unarystarr}.
\end{IEEEproof}

For example, consider $n=4$, $k=p=4$, and $i=25=5(4+1)$ such that
  $I=\{25,\ldots, 29\}$ and $\alpha=0$.
We mark this as a gray region in \cref{tab:hybrid}.
The superposition yields the extended codeword $x_I=0111\,\mfu\mfu\mfu\mfu$.

Second, for each interval that does one up-count in the BRGC part, the extended unary part has at least one stable bit.
\begin{observation}\label{obs:twocolumn}
  Let $I=\intvl{i,i+p}$ and $p\leq k$ such that
    there is an element $j\in\intvl{i+1,i+p}$ that is a multiple of $(k+1)$.
  Then,   
  \begin{align*}
    (x_I)_g &= \gHyb{n,k}(i)_g \starr \gHyb{n,k}(j)_g\,\text{, and}\\
    (x_I)_u &= \mfu^\beta b^{k-\alpha-\beta} \mfu^\alpha\,,
  \end{align*}
  where $b=\overline{\parity((\gHyb{n,k}(i))_g)}$,
    $\alpha=|\intvl{i,j-1}|-1$, and $\beta=|\intvl{j,i+p}|-1$.
	Especially, if $p\leq\ceil{k/2}$ we obtain that $\alpha\leq\ceil{k/2}$ and $\beta\leq\ceil{k/2}$,
    such that $(x_u)_{\ceil{k/2}}=b$.
  Furthermore, by \cref{def:grayprop}, $(x_I)_g$ has exactly one $\mfu$ bit.~\footnote{For a bit $x \in \IB$, we denote its inversion, i.e., $1 - x$, by $\bar{x}$.} 
\end{observation}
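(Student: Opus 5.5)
I would prove \cref{obs:twocolumn} by direct computation from \cref{def:hybrid}, splitting $I$ at the multiple $j$ of $(k+1)$. First I would note that, since $|I|=p+1\le k+1$, there is exactly one multiple of $(k+1)$ in $\intvl{i+1,i+p}$. Hence $I=I_1\cup I_2$ with $I_1=\intvl{i,j-1}$ and $I_2=\intvl{j,i+p}$. Writing $j=\ell(k+1)$, every element of $I_1$ has BRGC part $\gBRGC{n}(\ell-1)$ and every element of $I_2$ has BRGC part $\gBRGC{n}(\ell)$.

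For the BRGC part, the superposition over $I$ is therefore $\gHyb{n,k}(i)_g\starr\gHyb{n,k}(j)_g$, which is the first claim. By \cref{obs:gray_code} these two words have Hamming distance $1$, so this superposition has exactly one $\mfu$. The single flipped bit also shows that the parities of the two BRGC parts differ. So if $I_1$ uses the flavor selected by $b=\overline{\parity(\gHyb{n,k}(i)_g)}$ (i.e.\ $\gUnary{k}$ if $b=1$), then $I_2$ uses the other flavor.

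For the unary part, I would apply \cref{obs:unarystarr} to each side.
\begin{itemize}
\item $I_1$ consists of residues $\intvl{k-\alpha,k}$, where $\alpha=|I_1|-1$. Its unary superposition is $b^{k-\alpha}\mfu^{\alpha}$; for instance, with $b=1$ this is $\gUnary{k}(k-\alpha)\starr\gUnary{k}(k)=1^{k-\alpha}\mfu^{\alpha}$.
\item $I_2$ consists of residues $\intvl{0,\beta}$ in the other flavor. Its unary superposition is $\mfu^{\beta}b^{k-\beta}$; for $b=1$ this is $\gbUnary{k}(0)\starr\gbUnary{k}(\beta)=\mfu^\beta 1^{k-\beta}$.
\end{itemize}
Superposing the two gives $\mfu^\beta b^{k-\alpha-\beta}\mfu^\alpha$. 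This step needs $\alpha+\beta\le k$, which holds because $\alpha+\beta=p-1\le k-1$. The case $b=0$ follows by complementing bits, in the spirit of \cref{obs:neg_mapping}.

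For the "especially" part, $p\le\ceil{k/2}$ gives $\alpha+\beta=p-1\le\ceil{k/2}-1$. So $\beta<\ceil{k/2}$ and $\alpha\le\ceil{k/2}-1\le k-\ceil{k/2}$, which places position $\ceil{k/2}$ inside the stable middle block; hence $(x_u)_{\ceil{k/2}}=b$.

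The main obstacle is bookkeeping rather than ideas. The delicate point is getting the flavor and the orientation right: which of the two unary words carries $b$, and that the $\mfu$'s sit at the front for $I_2$ and at the back for $I_1$. I would settle this by treating $b=1$ explicitly and reducing $b=0$ to it by complementing the unary part.
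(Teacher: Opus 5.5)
Your proposal is correct and follows essentially the same route as the paper. You split $I$ at the unique multiple $j$, use the Gray property for the BRGC part and the resulting parity flip, then apply \cref{obs:unarystarr} to $\intvl{i,j-1}$ and $\intvl{j,i+p}$ and superpose. You also justify the ``especially'' clause via $\alpha+\beta=p-1$, which the paper's proof leaves implicit; this tighter bound is actually needed, since $\alpha,\beta\le\ceil{k/2}$ alone would not force position $\ceil{k/2}$ to be stable.
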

\begin{IEEEproof}
  At most one element $j$ can be a multiple of $(k+1)$.
  From the definition of the hybrid code, it follows that all elements 
    smaller than $j$ share the same BRGC part, and
    all elements larger or equal to $j$ share the same BRGC part. 
  Hence, the first equation of the claim follows.

  Next, we show that the unary parts of the codewords of $j-1$ and $j$
    are equal.
  By \cref{obs:gray_code} we know that $\parity(\gHyb{n,k}(j-1)_g)=
    \overline{\parity(\gHyb{n,k}(j)_g)}$, i.e.,
    the parity flips when going from $j-1$ to $j$.
  Hence, we switch between both flavors of the unary encoding.
  We observe that the unary parts encode numbers $j-1 \bmod (k+1) = k$ 
    and $j \bmod (k+1)=0$.
  Due to the definition of the unary codes, i.e., $\gUnary{k}(k)=\gbUnary{k}(0)$ 
    and $\gbUnary{k}(k)=\gUnary{k}(0)$, it follows that
    $\gHyb{n,k}(j-1)_u=\gHyb{n,k}(j)_u$.

  Finally, by \cref{obs:unarystarr}, we know that 
    $(x_\intvl{i,j-1})_u=b^{k-\alpha}\mfu^{\alpha}$ and
    $(x_\intvl{j,i+p})_u=\mfu^{\beta}b^{k-\beta}$. 
  Hence, the superposition of both results in $\mfu^\beta b^{k-\alpha-\beta} \mfu^\alpha$. 
  Thus, the second equation follows.
\end{IEEEproof}

For example, consider $n=4$, $k=p=4$, and $i=18$ such that $I=\{18,\ldots, 22\}$, $j=20$, $\alpha=1$, and $\beta=2$.
We mark this as a gray region in \cref{tab:hybrid}.
The superposition yields the extended codeword $x=0\mfu10\,\mfu\mfu0\mfu$.

\subsection{Properties of the Hybrid Code}

In this section, we show that the hybrid code $\gHyb{n,k}$
  is $k$-preserving and $\ceil{k/2}$-recoverable.
Hence, the hybrid code preserves precision for intervals of 
  size up to $k$.
However, the code can only recover a codeword from 
  an extended codeword of range up to $\ceil{k/2}$.

\begin{lemma}\label{lem:preserve}
  For $n,k\in\IN$, the code $\gHyb{n,k}$ is $k$-preserving.
\end{lemma}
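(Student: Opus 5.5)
We fix an interval $I=\intvl{i,i+p}$ with $p\leq k$ and a codeword $y=\gHyb{n,k}(m)\in\res(x_I)$. The goal is to show $m\in I$. Since $|I|=p+1\leq k+1$, the interval $I$ contains at most one multiple $j$ of $(k+1)$ with $j>i$. We split into the two cases already analyzed in \cref{obs:samecolumn} and \cref{obs:twocolumn}, and in each case argue first about the BRGC part and then about the unary part of $y$.

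\emph{Case 1: no BRGC up-count.} By \cref{obs:samecolumn}, $(x_I)_g=\gBRGC{n}(\ell)$ is stable, so $y_g=\gBRGC{n}(\ell)$. Injectivity of $\gBRGC{n}$ then gives $\floor{m/(k+1)}=\ell$. Consequently $\parity(y_g)=\parity((x_I)_g)$, so $y_u$ is a codeword of the same unary flavour as the elements of $I$. By \cref{obs:samecolumn}, $(x_I)_u=b^\alpha\mfu^p\overline{b}^{\,k-\alpha-p}$. Its resolutions that are codewords of that flavour are exactly the codewords for $\alpha,\ldots,\alpha+p$, which is the $k$-preservation of unary codes (\cref{col:unarypreserv}, \cref{obs:unarystarr}). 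Hence $m\bmod(k+1)\in\intvl{\alpha,\alpha+p}$, and therefore $m\in I$.

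\emph{Case 2: one BRGC up-count at $j$.} By \cref{obs:twocolumn}, $(x_I)_g=\gBRGC{n}(\ell)*\gBRGC{n}(\ell+1)$ with $\ell=j/(k+1)-1$, and this word has exactly one $\mfu$. So $y_g$ is either $\gBRGC{n}(\ell)$ or $\gBRGC{n}(\ell+1)$; this is the 1-preservation of the Gray code (\cref{obs:brgc_preserve}). The unary part is $(x_I)_u=\mfu^\beta b^{k-\alpha-\beta}\mfu^\alpha$.

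Subcase $y_g=\gBRGC{n}(\ell)$: the parity is that of $i$'s block, so $y_u=\gamma(r)$ for the flavour $\gamma$ used on the block of $i$. The elements $i,\ldots,j-1$ have residues $k-\alpha,\ldots,k$. Take the flavour with $b=\overline{\parity}$, say $\gUnary{k}$ when $b=1$; then $\gUnary{k}(r)=1^r0^{k-r}$. The stable block $b^{k-\alpha-\beta}$ occupies positions $\beta+1,\ldots,k-\alpha$ and must be all ones. This forces $r\geq k-\alpha$, so $m\in\intvl{i,j-1}\subseteq I$.

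Subcase $y_g=\gBRGC{n}(\ell+1)$: the flavour switches, and the symmetric argument forces $r\leq\beta$. Hence $m\in\intvl{j,i+p}$.

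The case $b=0$ follows by bit negation, as in \cref{obs:neg_mapping}. Note that the identities $\gUnary{k}(k)=\gbUnary{k}(0)$ and its converse are what make the stable middle block consistent with both flavours.

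\emph{Main obstacle and edge cases.} The delicate step is Case 2: checking that the stable middle block really pins the residue in each subcase. The boundary situations need explicit handling. When $k-\alpha-\beta=0$, i.e.\ $p=k$, the stable block is empty. I expect the constraint then comes from $m$'s residue range itself: in the first subcase, residues below $k-\alpha$ would need a $0$ at some position $\leq k-\alpha=\beta$, and such positions lie in the $\mfu$ region, so the argument breaks. I would therefore verify this subcase carefully, first checking whether $p\leq k$ with $\alpha+\beta=p-1\leq k-1$ always leaves at least one stable bit. Since $\alpha+\beta=p-1$, we get $k-\alpha-\beta\geq1$, so the stable block is nonempty and the argument goes through.
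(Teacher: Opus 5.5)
Your proof is correct and follows the paper's own argument: the same split via \cref{obs:samecolumn} and \cref{obs:twocolumn}, $k$-preservation of the unary codes in the first case, and in the second case the Gray property of the BRGC part plus the fact that the stable middle block of $(x_I)_u$ pins the residue to $\intvl{k-\alpha,k}$ or $\intvl{0,\beta}$, depending on which of the two BRGC words the resolution picks. Your explicit check that this block is nonempty (because $\alpha+\beta=p-1\leq k-1$) is a detail the paper leaves implicit, but you should delete your initial aside that the block is empty when $p=k$, since that claim is false.
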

\begin{IEEEproof}
  Let $I=\intvl{i,i+p}$, for $i\in[2^{n}(k+1)]$ and $p\leq k$ with $i+p\leq2^n(k+1)$.
  We consider two cases that correspond to the conditions of \cref{obs:samecolumn}
    and \cref{obs:twocolumn}.

  First, assume that either $i$ is a multiple of $k+1$ or no element in I
    is a multiple of $k+1$.
  By \cref{obs:samecolumn} the BRGC part has no $\mfu$.
  All $\mfu$'s appear in the unary part.
  The claim follows from the fact that both $\gUnary{k}$ and $\gbUnary{k}$
    are $k$-preserving;
    see \cref{col:unarypreserv}.

  Second, assume there is an $j\in I$ such that $j=\ell(k+1)$ for some
    $\ell\in\IN$.
  By \cref{obs:twocolumn} the BRGC part is equal to the superposition
    of encodings of $i$ and $j$;
  \begin{align*}
    (x_I)_g = \gHyb{n,k}(i)_g \starr \gHyb{n,k}(j)_g\,.
  \end{align*}
  From $i$ to $j$ the BRGC part does one up-count.
  Due to the Gray code property, the resolution of $(x_I)_g$ contains only
    the respective parts of $i$ and $j$;
  \begin{align*}
    \res(\bigstarr(\gHyb{n,k}(I)_g)) = \{\gHyb{n,k}(i)_g, \gHyb{n,k}(j)_g\}\,.
  \end{align*}
  \Cref{obs:twocolumn} also states that the unary part has the form
  \begin{align*}
    (x_I)_u = \mfu^\beta b^{k-\alpha-\beta} \mfu^\alpha\,,
  \end{align*}
  where $\alpha=|\intvl{i,j-1}|-1$ and $\beta=|\intvl{j,i+p}|-1$.
  The resolution of $(x_I)_u$ contains all words that replace each
    $\mfu$ by either $0$ or $1$.
  Not every resolution is a codeword.
  If the BRGC part resolves to $\gHyb{n,k}(i)_g$, then
    the codewords in the resolution of $(x_I)_u$ are codewords of
    $\intvl{k-\alpha,k}$.
  Otherwise, if the BRGC part resolves to $\gHyb{n,k}(j)_g$,
    then the codewords in the resolution of $(x_I)_u$ are
    codewords of $\intvl{0,\beta}$.
  We can conclude that no codeword outside $\gHyb{n,k}(I)$ is added by the resolution
    of $x_I$.
\end{IEEEproof}

We show that the hybrid code is $\ceil{k/2}$-recoverable.
We prove recoverability by showing that there is an extension to the
  decoding function that maps all resolutions of an extended codeword back to
  the original range.
We define the extension by mapping the unary part according to the parity of
  the BRGC part.
The mapping of unary codes is done by the function $\mUnary{k}$ defined in
  \cref{def:map}.

\begin{lemma}\label{lem:recover}
  $\gHyb{n,k}$ is $\ceil{k/2}$-recoverable.
\end{lemma}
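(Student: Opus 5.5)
We define the extension $\gtHyb\colon\IB^{n+k}\rightarrow[2^n(k+1)]$ by decoding the two parts separately. For $x=x_g\circ x_u$ we set
\begin{equation*}
  \gtHyb(x)\coloneqq(\gBRGC{n})^{-1}(x_g)\cdot(k+1)+\begin{cases}(\gUnary{k})^{-1}(\mUnary{k}(0,x_u)) &\text{, if }\parity(x_g)=0\\ (\gbUnary{k})^{-1}(\mUnary{k}(1,x_u)) &\text{, if }\parity(x_g)=1\,.\end{cases}
\end{equation*}
This mirrors \eqref{eqn:dec}. The BRGC part is always decodable since the BRGC is a bijection, and the unary part is first mapped to a codeword via \cref{def:map}. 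On codewords $\mUnary{k}$ is the identity, so $\gtHyb$ extends the decoding function. Now fix $I=\intvl{i,i+p}$ with $p\leq\ceil{k/2}$ and $x\in\res(x_I)$. We split into the two cases of \cref{obs:samecolumn} and \cref{obs:twocolumn}, exactly as in the proof of \cref{lem:preserve}.

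\emph{Case 1: no BRGC up-count inside $I$.} By \cref{obs:samecolumn}, $(x_I)_g=\gBRGC{n}(\ell)$ is stable, so $x_g=\gBRGC{n}(\ell)$ and its parity is the one used for all of $I$. Moreover, $(x_I)_u$ is the superposition of the corresponding unary flavor over $\intvl{\alpha,\alpha+p}$. \Cref{lem:mapping} then gives that $\mUnary{k}(\parity(x_g),x_u)$ is a codeword of $\intvl{\alpha,\alpha+p}$ in the correct flavor. Hence $\gtHyb(x)\in\ell(k+1)+\intvl{\alpha,\alpha+p}=I$.

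\emph{Case 2: exactly one multiple $j=\ell(k+1)$ of $k+1$ lies in $\intvl{i+1,i+p}$.} This is the main step. By \cref{obs:twocolumn}, $(x_I)_g$ has a single $\mfu$, so $x_g\in\{\gBRGC{n}(\ell-1),\gBRGC{n}(\ell)\}$. Also $(x_I)_u=\mfu^\beta b^{k-\alpha-\beta}\mfu^\alpha$ with $\alpha,\beta\leq\ceil{k/2}$, and the middle bit position $\ceil{k/2}$ is stable and equal to $b$. For the bound we use $\alpha+\beta=p-1<\ceil{k/2}$; the boundary case needs a short check.

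\emph{Subcase $x_g=\gBRGC{n}(\ell-1)$.} The relevant flavor is the one with parity of $\gHyb{n,k}(i)_g$, i.e.\ $b$-prefix type. Say $\parity=0$, so $b=1$ and $\mUnary{k}(0,\cdot)$ is used. Since $x_{\ceil{k/2}}=1$, the rule takes $\lmax{1}$. The positions after $k-\alpha$ are exactly the unknown tail, and position $\ceil{k/2}\leq k-\alpha$ is $1$. Therefore $\lmax{1}\in\intvl{k-\alpha,k}$, and the decoded value lies in $(\ell-1)(k+1)+\intvl{k-\alpha,k}=\intvl{i,j-1}\subseteq I$.

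\emph{Subcase $x_g=\gBRGC{n}(\ell)$.} The parity is flipped, so $\mUnary{k}(1,\cdot)$ is used with $x_{\ceil{k/2}}=1$, and the rule takes $\lmin{1}-1$. The first $1$ occurs at a position in $\intvl{1,\beta+1}$, because position $\beta+1$ is stable $1$ whenever $\beta<k-\alpha$. This gives a value in $\ell(k+1)+\intvl{0,\beta}=\intvl{j,i+p}\subseteq I$. The cases with parity $1$ follow by \cref{obs:neg_mapping}.

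The obstacle is exactly this case: showing that the choice in $\mUnary{k}$ driven by the middle bit $x_{\ceil{k/2}}$ selects the index from the stable block on the correct side. This requires verifying $\beta+1\leq\ceil{k/2}\leq k-\alpha$ using $\alpha+\beta\leq p-1\leq\ceil{k/2}-1$. The claimed bounds $\alpha,\beta\leq\ceil{k/2}$ in \cref{obs:twocolumn} are slightly loose, so I would rederive the sharper inequalities directly from $p\leq\ceil{k/2}$.
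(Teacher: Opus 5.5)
Your proposal is correct and follows the paper's proof essentially step for step. It uses the same extension $\gtHyb$, which applies $\mUnary{k}(\parity(x_g),x_u)$ before decoding, and the same split into the cases of \cref{obs:samecolumn} and \cref{obs:twocolumn}, with the stable middle bit $x_{\ceil{k/2}}=b$ forcing $\mUnary{k}$ to read its index from the correct side for each resolution of the BRGC part. Your sharper inequality $\beta+1\le\ceil{k/2}\le k-\alpha$, obtained from $\alpha+\beta=p-1\le\ceil{k/2}-1$, is what actually justifies the middle-bit claim; the bound $\alpha,\beta\le\ceil{k/2}$ stated in \cref{obs:twocolumn} is not enough for odd $k$. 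Likewise, your use of $\lmin{1}-1$ as in \cref{def:map} cleans up the index slips in the paper's second case.
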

\begin{IEEEproof}
  Let $I=\intvl{i,i+p}$, for $i\in[2^{n}(k+1)]$ and $p\leq k$ with $i+p\leq2^n(k+1)$.
  We show that there is an extension to the decoding function that maps every
    resolution of $x_I$ back to interval $I$.

  The extension $\gtHyb\colon\IB^{n+k}\rightarrow[M]$ is defined by
  \begin{align*}
    \gtHyb(x)\coloneqq (\gHyb{n,k})^{-1}(x_g \circ
      \mUnary{k}(\parity(x_g),x_u))\,.
  \end{align*}

  We consider two cases that correspond to the conditions of
    \cref{obs:samecolumn} and \cref{obs:twocolumn}.

  First, assume there are $\alpha,\ell\in\IN$ such that $i=\ell(k+1)+\alpha$
    and $\alpha + p\leq k$.
  By \cref{obs:samecolumn} $(x_I)_g=\gBRGC{n}(\ell)$ has no $\mfu$.
  By the decoding function (c.f.~\eqref{eqn:dec}) the claim follows if the
    unary part is mapped to a number in $\langle\alpha,\alpha+p\rangle$.
  The mapping of the unary part depends on the parity of the BRGC part,
    let $\pi$ denote the parity of $(x_I)_g$.
  If we can show that the unary part is mapped to a codeword in
    $\gUnary{k}(\intvl{\alpha,\alpha+p})$, then the claim
    follows for the first case.
  Again, by \cref{obs:samecolumn},
    $(x_I)_u = \bigstarr\gUnary{k}(\intvl{\alpha,\alpha+p})$.
  Thus, the claim follows from \cref{obs:unarystarr}.

  Second, we assume that there is an $j\in\intvl{i+1,i+p}$ such that $j=\ell(k+1)$.
  By \cref{obs:twocolumn} the resolution of $(x_I)_g$ contains
    $\gHyb{n,k}(i)_g$ and $\gHyb{n,k}(i+1)_g$.
  W.l.o.g\ assume that $\gHyb{n,k}(i)_g$ has even parity and,
    by \cref{def:grayprop}, $\gHyb{n,k}(i+1)_g$ has odd parity.
  The proof for odd parity of $\gHyb{n,k}(i)_g$ follows the
    same line of arguments.
  From \cref{obs:twocolumn} and $p\leq\ceil{k/2}$ we obtain that
    $(y_u)_{\ceil{k/2}}=1$.
  By \cref{def:map} we deduce
    (i) $\mUnary{k}(0,y_u)=1^{\lmax{1}}0^{k-\lmax{1}}$,
    with $\lmax{1}\in\{k-(j-i),\ldots,k\}$ and
    (ii) $\mUnary{k}(1,y_u)=0^{\lmin{0}}1^{k-\lmin{0}}$,
    with $\lmin{0}\in\{0,\ldots,p-(j-i)\}$.

  It follows that for all resolutions of $x_I$, where $(x_I)_g$ is equal to
    $\gHyb{n,k}(i)_g$, the unary part is mapped to
    $\gUnary{k}(\ell)$, for $\ell\in\intvl{i \bmod (k+1), j-1 \bmod (k+1)}$.
  Similarly, for all resolutions of $x_I$, where $(x_I)_g$ is equal to
    $\gHyb{n,k}(j)_g$, the unary part is mapped to
    $\gbUnary{k}(\ell)$, for $\ell\in\intvl{j \bmod (k+1), i+p \bmod (k+1)}$.
  Thus, every resolution of $x_I$ results in a number in $I$ by using the extended
    decoding function.
\end{IEEEproof}

%% file: addition.tex
\section{\texorpdfstring{$k$}{k}-recoverable Addition}\label{sec:addition}

This section presents our circuit implementing addition under imprecision
  that is bounded by $k$.
Towards this goal, first, we formally define $k$-bit metastability-containment 
  and $k$-recoverable addition, i.e.,
  a faithful representation of interval addition due to uncertain inputs.

\subsection{Metastability-Containing Circuits}\label{sec:hazfree}

A convenient formalization of this concept is as follows. For a circuit $C$
with $n$ inputs and $m$ outputs, denote by $C(x)\in \IT^m$ the output it
computes on input $x\in \IT^n$. We say that $C$ \emph{implements} the Boolean
function $f\colon \IB^n\to \IB^m$, iff $C(x)=f(x)$ for all $x\in \IB^n$. The
desired behavior of the circuit is given by the \emph{metastable closure} of
$f$.

\begin{definition}[Metastable Closure~\cite{friedrichs18containing}]\label{def:hfree}
For function $f\colon \IB^n\to \IB^m$, denote by $f_{\mfu}\colon \IT^n\to \IT^m$
its \emph{metastable closure}, which is defined by $f_{\mfu}(x)\coloneqq
\bigstarr_{y\in \res(x)}f(y)$.
\end{definition}

The metastable closure is the ``most precise'' extension of $f$ computable by
combinational logic. It is easy to show that
$(f_{\mfu}(x))_i=\mfu$ entails that $C_i(x)=\mfu$ for any circuit $C$
implementing $f$: restricted to Boolean inputs $C$ and $f_{\mfu}$ are
identical; changing input bits to $\mfu$ can change the output bits to $\mfu$ only,
and if $f_{\mfu}(x)$ is $\mfu$ at position $i$ then $C(x)$ is $\mfu$ at position $i$.

\begin{definition}[k-Bit Metastability-Containment]
  Let circuit $C$ implement a Boolean function $f\colon\IB^n\to\IB^m$.
  We say $C$ is metastability-containing (mc) if it computes the metastable
    closure of $f$, i.e., for every $x\in\IT^n$ it holds $C(x)=f_{\mfu}(x)$.
  If $C$ is mc for every $x$ that has at most 
    $k\in\IN$ many $\mfu$s, then $C$ is $k$-bit mc.
\end{definition}

Note that if $C$ is not mc, is equivalent to: $C(x)$ outputs more
$\mfu$'s than necessary for some $x$.
The smallest non-trivial example of a circuit not implementing the metastable closure 
is given by a naive
implementation of a multiplexer circuit. A multiplexer has the Boolean
specification $\MUX(a,b,s)=a$ if $s=0$ and $\MUX(a,b,s)=b$ if $s=1$. It can be
implemented by the circuit corresponding to the Boolean formula
$\OR(\AND(a,\NOT(s)),\AND(b,s))$, resulting in a unwanted $\mfu$ at input
$(1,1,\mfu)$, cf.~\cite{ikenmeyer18complexity}.

\begin{remark}
  Note that we both use subscripts $u$ and $\mfu$, where $x_u$ denotes a certain
    part of the word $x$ and $f_\mfu$ denotes the metastable closure of a function $f$.
\end{remark}

\subsection{\texorpdfstring{$k$}{k}-Recoverable Addition}
The $k$-recoverable addition computes the encoding of the sum
  of two codewords, where inputs and outputs use the same encoding
  and the imprecision in both inputs does not exceed $k$.
We denote the addition regarding code $\gamma$ by $+_{\gamma}$.
A circuit implementing $k$-recoverable addition has to implement
  the metastable closure $(+_\gamma)_\mfu$.
We allow an arbitrary output $s\in\IT^n$ if there is an overflow.

\begin{definition}\label{lem:addition}
  Given a $k$-recoverable encoding $\gamma\colon[M]\rightarrow\IB^n$.
  Let $x,y\in\IT^n$ be extended codewords of $I_x=\intvl{i,j}$ and $I_y=\intvl{i',j'}$.
  Assume $x$ and $y$ have imprecision $p_x$ and $p_y$, such that $p_x + p_y \leq k$.
  Denote the addition of two codes by operator $+_\gamma\colon\IB^n\times\IB^n\rightarrow\IB^{n+1}$.
  We define the $k$-recoverable $\gamma$-addition of $x (+_\gamma)_\mfu y=s$ and overflow flag $\ovf$ as follows;
  If there is no overflow, i.e. $j+j'<M$, then 
  \begin{enumerate}
    \item $s$ is an extended codeword in $\gamma([M])$,
    \item $s$ has range $\intvl{i+i',j+j'}$,
    \item $s$ has imprecision $p_s = p_x + p_y$, and
    \item $\ovf=0$.
  \end{enumerate}
  If there is an overflow, i.e. $j+j'\geq M$, then $\ovf=1$.
\end{definition}

To achieve the $k$-recoverable addition of two hybrid codes in a circuit,
  we first discuss how to efficiently add hybrid codes in the Boolean world.
We add two hybrid codewords by translating the BRGC and the unary part to
  binary codewords, which can be added by any off-the-shelf binary adder.
After addition, we translate the binary representation of the sum
  back to a hybrid codeword.
Finally, we use the construction from~\cite{ikenmeyer18complexity} to obtain an mc circuit.
Note that using this construction requires us to handle non-codewords in
  the circuit input.

\subsection{Addition of Stable Codewords}

\begin{figure}
  \centering
  \includegraphics[width=.8\linewidth]{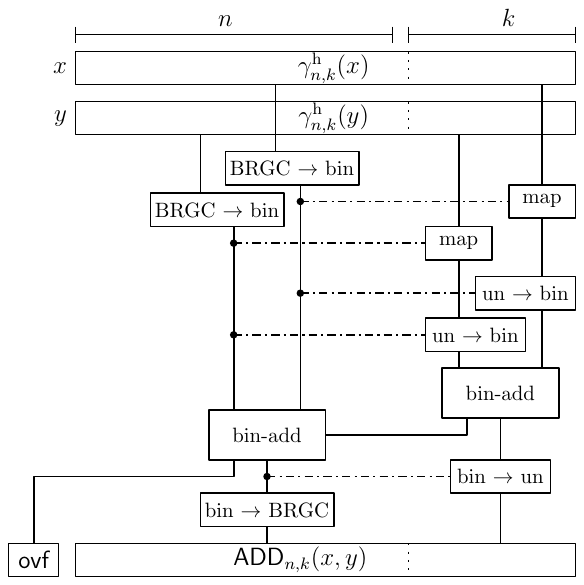}
  \caption{Addition of stable codewords and non-codewords of the hybrid code.
           The $\operatorname{map}$ circuit is only necessary for the addition of non-codewords.
           Dash-dotted lines denote that only the least significant bit is copied.}
  \label{fig:addition}
\end{figure}

We add stable codewords by separately adding the BRGC
  parts and the unary parts.
Non-codewords need to be mapped to codewords before adding them.
It is common to translate BRGC codewords to binary codewords 
  and use well-known binary adders.
We use the same technique for unary codewords.
An mc addition algorithm for hybrid codewords without translation
  to binary codewords would be desirable, because it can 
  reduce the size and depth of the circuit. 
Such an algorithm is non-trivial since there is a bidirectional 
  dependency between both parts that defies simple mc techniques.

\Cref{fig:addition} depicts our addition scheme for stable codewords and non-codewords, which we denote it by $\BLM$.
Recall that $x_g$, $y_g$ denote the BRGC part and $x_u$, $y_u$
  denote the unary thermometer part.

To translate from BRGC to binary codes and back we first observe the following
properties.

\begin{observation}\label{obs:translate}
  Let $a\in[2^n]$ and $i\in[n]$, then the following properties hold for encodings
  $\gBRGC{n}$ and $\gBin{n}$;
  \begin{align*}
    (\gBin{n}(a))_i &= \parity((\gBRGC{n}(a))_{1,i})\text{, and}\\
    (\gBRGC{n}(a))_i &= (\gBin{n}(a))_{i-1} \XOR\ (\gBin{n}(a))_{i}
  \end{align*}
\end{observation}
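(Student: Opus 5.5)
The plan is to prove the second identity first, by induction on $n$ using the recursive definitions, and then derive the first identity from it by a telescoping argument. Throughout, bit $1$ is the most significant bit: both $\gBRGC{n}$ and $\gBin{n}$ are built so that position $1$ is the leftmost position, since $\gBRGC{n}$ prepends and $\gBin{n}$ appends the least significant bit. Positions range over $1,\ldots,n$. We adopt the convention $(\gBin{n}(a))_0\coloneqq 0$, which is needed for the case $i=1$ of the second identity. Write $b_i\coloneqq(\gBin{n}(a))_i$ and $g_i\coloneqq(\gBRGC{n}(a))_i$.

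Two auxiliary facts about the binary code, each proved by a routine induction from its definition, are needed:
\begin{itemize}
\item[(a)] $\gBin{n}(a)=0\circ\gBin{n-1}(a)$ for $a<2^{n-1}$, and $\gBin{n}(a)=1\circ\gBin{n-1}(a-2^{n-1})$ otherwise;
\item[(b)] for $c\in[2^{m}]$, $\gBin{m}(2^m-1-c)=\overline{\gBin{m}(c)}$ (bitwise complement).
\end{itemize}

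For the second identity, the base case $n=1$ is immediate: $g_1=a=0\,\XOR\,b_1$. For $n>1$ I split on the recursive case of \cref{def:BRGC}.

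If $a<2^{n-1}$, then $g_1=0$ and $b_1=0$ by (a), so $g_1=b_0\,\XOR\,b_1$. For $i\ge2$, the remaining bits are $\gBRGC{n-1}(a)$ and $\gBin{n-1}(a)$, shifted by one position. The induction hypothesis applies, with $b_1=0$ playing the role of the convention bit $0$.

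If $a\ge2^{n-1}$, set $c\coloneqq a-2^{n-1}$. Then $\gBRGC{n}(a)=1\circ\gBRGC{n-1}(2^{n-1}-1-c)$, while by (a) $\gBin{n}(a)=1\circ\gBin{n-1}(c)$. By (b), the binary word for $2^{n-1}-1-c$ is the complement of the tail $b_2\ldots b_n$. Bit $1$ gives $g_1=1=0\,\XOR\,b_1$. For $i\ge3$, the induction hypothesis gives $g_i$ as the $\XOR$ of two consecutive complemented bits, which equals $b_{i-1}\,\XOR\,b_i$. For $i=2$, the induction hypothesis gives $g_2=0\,\XOR\,\overline{b_2}=1\,\XOR\,b_2=b_1\,\XOR\,b_2$.

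The first identity then follows by telescoping: $\parity((\gBRGC{n}(a))_{1,i})=\bigoplus_{j=1}^{i}(b_{j-1}\,\XOR\,b_j)=b_0\,\XOR\,b_i=b_i$.

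The main obstacle I expect is bookkeeping rather than ideas. The paper writes $i\in[n]$, but the statement is meaningful for positions $1,\ldots,n$ together with the convention $b_0=0$, so the indexing must be fixed up front. The only real content is the reflected case, where the key point is that complementing the binary word via (b) leaves $\XOR$s of adjacent bits unchanged.
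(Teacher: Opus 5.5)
The paper gives no proof of this observation: it is stated as the standard Gray/binary conversion and then used in \cref{lem:brgctobin}, \cref{lem:bintobrgc} and \cref{obs:paritybit}. There is therefore nothing to compare against. Your argument is correct and supplies the missing justification.

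\begin{itemize}
\item Auxiliary fact (a) is exactly what reconciles the most-significant-bit-first recursion of $\gBRGC{n}$ with the least-significant-bit-peeling recursion of $\gBin{n}$.
\item The reflected branch is handled correctly via the complement identity (b). This includes the boundary position $i=2$, where the induction hypothesis's convention bit $0$ no longer matches $b_1=1$, and you treat it separately.
\item The telescoping sum $\bigoplus_{j\le i}(b_{j-1}\oplus b_j)=b_0\oplus b_i=b_i$ then gives the prefix-parity identity.
\end{itemize}

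Your reindexing is also the right reading of the statement. Under the paper's convention $[n]=\{0,\ldots,n-1\}$, the literal range $i\in[n]$ would involve the undefined bit $(\gBin{n}(a))_0$. It would also omit $i=n$, which is precisely the position used in \cref{obs:paritybit}. The intended reading is positions $1,\ldots,n$ with $b_0\coloneqq 0$, equivalently $(\gBRGC{n}(a))_1=(\gBin{n}(a))_1$.

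One small wording slip: the recursion of $\gBRGC{n}$ prepends the most significant bit, not the least significant one. This does not affect the argument.
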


By using \cref{obs:translate} we formally show that there is a 
circuit that translates BRGC to binary encoding.

\begin{lemma}[BRGC $\rightarrow$ bin]\label{lem:brgctobin}
  There is a circuit of size $\BO(n)$ and depth $\BO(\log n)$ that, given input $x\in\IB^n$,
    computes the translation to binary.
  That is, it computes the output
    $o = \gBin{n}((\gBRGC{n})^{-1}(x))$.
\end{lemma}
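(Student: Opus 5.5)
By the first identity of \cref{obs:translate}, output bit $o_i$ equals $\parity(x_{1,i})$, the XOR of the prefix $x_1,\ldots,x_i$. The circuit therefore only has to compute all prefix XORs of $x$. This is a parallel prefix computation over the associative operator $\XOR$, and the plan is to build it with a standard prefix-sum network.

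The steps are as follows.

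First, verify the identity of \cref{obs:translate} so that the reduction is sound. Since the observation is stated in the paper, I would take it as given. If a check is wanted, I would induct on $n$ using \cref{def:BRGC}. In the lower half, the leading bit is $0$ and the remaining bits are $\gBRGC{n-1}(a)$, while $\gBin{n}(a)$ is $0$ followed by $\gBin{n-1}(a)$. In the upper half, reflection complements the lower bits of the binary value $2^n-1-a$. The leading $1$ flips the parity of every prefix, which matches that complement.

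Second, realize the prefix-XOR network. I would use a Ladner--Fischer or Brent--Kung prefix circuit with $\XOR$ as the operator. It has $\BO(n)$ operator nodes and depth $\BO(\log n)$. Concretely, Brent--Kung does an up-sweep that pairs adjacent elements and recurses on $n/2$ elements, then a down-sweep that fills in the odd positions. The size satisfies $S(n)=S(n/2)+\BO(n)=\BO(n)$ and the depth $D(n)=D(n/2)+\BO(1)=\BO(\log n)$. Each binary $\XOR$ is implemented with a constant number of $\AND$, $\OR$ and $\NOT$ gates, so only constant factors change.

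Third, connect the pieces. Output $o_i$ is the $i$-th prefix value, so $o=\gBin{n}((\gBRGC{n})^{-1}(x))$ for every $x\in\IB^n$. Since $\gBRGC{n}$ is a bijection onto $\IB^n$, every input is a codeword and no special handling is needed.

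The main obstacle is none of substance. The only care point is making the $\BO(n)$ size claim honest, since a naive Kogge--Stone network would give size $\BO(n\log n)$. I would therefore explicitly invoke the linear-size work-efficient prefix construction.
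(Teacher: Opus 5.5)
Your proposal is correct and follows the paper's proof. Both use the first identity of \cref{obs:translate} to reduce the translation to computing all prefix parities of $x$, and both realize this with a linear-size, logarithmic-depth parallel prefix $\XOR$ circuit; the paper cites Ladner--Fischer, and your work-efficient Brent--Kung choice, with your optional inductive check of the identity, serves equally well.
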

\begin{IEEEproof}
  By \cref{obs:translate} translation to binary can be performed 
    by computing the parity for each prefix of input $x$.
  Computing an associative operation for each prefix of the input can be done 
    efficiently by a parallel prefix circuit~\cite{ladner1980parallel}.
  The parity can be computed by $\XOR$ over all bits.
  As $\XOR$ is associative, the translation can be performed by a parallel prefix $\XOR$,
    which has size $\BO(n)$ and depth $\BO(\log n)$.
\end{IEEEproof}

Conversion of the unary part to binary code depends on the
  BRGC part.
The flavor of the unary part is chosen according to the parity of the
  BRGC part.
We can spare a separate computation of the parity by observing that
  the parity is computed during the BRGC to binary translation.
The least significant bit of the binary string indicates the parity
  of the BRGC part.

\begin{observation}\label{obs:paritybit}
  Let $x\in[2^n]$, the parity of the BRGC code of $x$ is indicated by the last bit of the
    binary code of $x$;
    $\parity(\gBRGC{n}(x)) = \gBin{n}(x)[n]\,.$
\end{observation}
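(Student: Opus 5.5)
The plan is to derive the claim from \cref{obs:translate} and to back it up with a self-contained induction on $x$ using only the Gray property. This avoids relying on the exact index conventions of \cref{obs:translate}.

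\textbf{Route via \cref{obs:translate}.} The first identity there states that the $i$-th bit of the binary encoding is the parity of the length-$i$ prefix of the BRGC encoding. Instantiating it at the last position $i=n$ makes the prefix $(\gBRGC{n}(x))_{1,n}$ the whole word $\gBRGC{n}(x)$. This gives $\gBin{n}(x)[n]=\parity(\gBRGC{n}(x))$ immediately. The only care needed is that the index range written as $i\in[n]$ is meant to include the last bit position $n$ (positions are $1$-indexed in \cref{obs:translate}).

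\textbf{Self-contained route.} To be independent of that indexing subtlety, I would also give a direct argument.
\begin{enumerate}
\item By the recursive definition of $\gBin{n}$, the last bit of $\gBin{n}(x)$ is $x \bmod 2$. So it suffices to show $\parity(\gBRGC{n}(x)) = x \bmod 2$ for all $x\in[2^n]$.
\item Proceed by induction on $x$. For $x=0$, unfolding \cref{def:BRGC} gives $\gBRGC{n}(0)=0^n$, whose parity is $0$.
\item For the step from $x$ to $x+1$ with $x+1<2^n$, \cref{obs:gray_code} says $\gBRGC{n}(x)$ and $\gBRGC{n}(x+1)$ differ in exactly one bit. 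Flipping a single bit changes the number of $1$'s by exactly one, so the parity flips, matching $(x+1)\bmod 2 = 1-(x\bmod 2)$.
\end{enumerate}
This completes the induction.

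There is no real obstacle here. The only points requiring attention are the indexing convention in \cref{obs:translate} and confirming the base case $\gBRGC{n}(0)=0^n$, which is a one-line induction on $n$ from \cref{def:BRGC} (the first branch applies since $0\in[2^{n-1}]$).
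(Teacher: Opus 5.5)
Your proposal is correct. The paper gives no separate proof of this observation. Its only justification is the preceding remark that the parity is already produced by the BRGC-to-binary translation, i.e., the first identity of \cref{obs:translate} taken at the last position. That is exactly your first route, and you are right that the range $i\in[n]$ there must be read as the $1$-indexed positions $1,\dots,n$.

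Your second route is genuinely different. It uses only three ingredients:
\begin{itemize}
\item the recursive definition of $\gBin{n}$, so the last bit is $x \bmod 2$;
\item the base case $\gBRGC{n}(0)=0^n$;
\item \cref{obs:gray_code}, so each up-count flips exactly one bit and hence flips the parity.
\end{itemize}
The advantage is that this argument rests only on facts the paper actually proves, whereas \cref{obs:translate} is itself stated without proof. The paper's route, in turn, shows why the parallel prefix $\XOR$ circuit of \cref{lem:brgctobin} delivers the parity at no extra cost as its last output, which is the point the paper needs for the adder.
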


By using the parity we can translate the unary encoding to a binary encoding
  according to the correct flavor of unary encoding.
Binary encoding has less redundancy, a unary word of $k$ bits can be 
  represented by a binary word of $\ceil{\log(k+1)}$
  bits.

\begin{lemma}[un $\rightarrow$ bin]\label{lem:untobin}
  There is a circuit of size $\BO(k)$ and depth $\BO(\log k)$ that, for
    $\ell=\ceil{\log(k+1)}$, $x\in\gUnary{k}([k+1])$, and $\pi\in\IB$,
    computes the translation to binary.
  That is, it computes output $o$, with
  \begin{align*}
    o =
    \begin{cases}
      \gBin{\ell}((\gUnary{k})^{-1}(x)) & \text{, if } \pi=0 \\
      \gBin{\ell}((\gbUnary{k})^{-1}(x)) & \text{, if } \pi=1\,.
    \end{cases}
  \end{align*}
\end{lemma}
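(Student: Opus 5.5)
The plan is to build the circuit in two stages: normalize the flavor using $\pi$, then convert the unary count to binary with a balanced adder tree.

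\emph{Flavor normalization.} For every position $t\in\{1,\ldots,k\}$ compute $x'_t \coloneqq x_t \,\XOR\, \pi$. This layer has $k$ gates and constant depth. If $\pi=0$, then $x'=x=\gUnary{k}(v)$ with $v=(\gUnary{k})^{-1}(x)$. If $\pi=1$, then $x=\gbUnary{k}(v)=\overline{\gUnary{k}(v)}$, so $x'=\gUnary{k}(v)$ with $v=(\gbUnary{k})^{-1}(x)$. In both cases $x'$ is the thermometer word $1^v0^{k-v}$, and $v$ is exactly the value whose $\ell$-bit binary encoding we must output.

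\emph{Counting.} It remains to output $\gBin{\ell}(v)$ where $v=\sum_t x'_t$ is the number of ones in $x'$. I would compute this population count with a balanced binary tree of binary adders.
\begin{itemize}
\item The leaves are the single bits $x'_t$.
\item At level $d$ we add pairs of $d$-bit numbers using ripple-carry adders of size and depth $\BO(d)$.
\end{itemize}
Level $d$ contains $k/2^d$ adders, so the total size is $\sum_d (k/2^d)\,\BO(d)=\BO(k)$. The depth is $\sum_{d\le \log k}\BO(d)=\BO(\log^2 k)$.

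\emph{Main obstacle: the depth.} The tree above only gives depth $\BO(\log^2 k)$, not $\BO(\log k)$. There are two ways I would try to fix this, taking the second if the first becomes cumbersome.
\begin{itemize}
\item \textbf{Carry-save reduction.} Replace the adders with a Wallace-style carry-save tree of full adders, which has $\BO(k)$ size and $\BO(\log k)$ depth. Finish with a single fast parallel-prefix adder on $\ell$ bits, of size $\BO(\ell)$ and depth $\BO(\log \ell)$.
\item \textbf{Exploit the thermometer form.} Since $x'$ is a thermometer code, no general counting is needed. Compute $e_t \coloneqq x'_t \,\AND\, \overline{x'_{t+1}}$, with conventions $x'_0=1$ and $x'_{k+1}=0$. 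Exactly one $e_t$ equals $1$, namely $e_v$. Bit $b$ of the output is then $\bigvee_{t:\,(\gBin{\ell}(t))_b=1} e_t$, where each $\OR$ is a balanced tree of depth $\BO(\log k)$. Naively this costs $\BO(k\log k)$ size.
\end{itemize}
To bring the one-hot encoder down to size $\BO(k)$, note that the one-hot-to-binary conversion can be done recursively. Split the positions into halves. The most significant output bit is the $\OR$ of the upper half. The remaining bits are obtained by recursing on the pairwise $\OR$ of corresponding positions in the two halves; this step is valid because the input is one-hot. The recursion satisfies $S(k)=S(k/2)+\BO(k)=\BO(k)$ for size and $D(k)=D(k/2)+\BO(\log k)$ for depth. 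That depth recurrence alone gives $\BO(\log^2 k)$ again.

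So I would compute all the required $\OR$s level by level with a shared tree. The $\OR$ over positions whose index has bit $b$ set can be assembled from the block-$\OR$s of a balanced $\OR$-tree on $e$. These block-$\OR$s have size $\BO(k)$ in total and depth $\BO(\log k)$. Each output bit then needs one further $\OR$-combination over $k/2^{b+1}$ blocks. Summed over $b$, this costs $\BO(k)$ size and $\BO(\log k)$ depth, which meets the claimed bounds.
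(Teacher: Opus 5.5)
Your proof is correct. After the shared first step, a layer of $\XOR$ gates with $\pi$ to normalize the flavor (exactly as in the paper), it takes a different route.

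The paper converts the thermometer word by a recursive binary search. The middle bit $x[m]$ is the most significant output bit, and it drives a $\MUX$ that selects between the recursively translated left and right halves. This gives $S(k)=2S(k/2)+\BO(\log k)=\BO(k)$ and $D(k)=D(k/2)+\BO(1)$.

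You instead pass to the one-hot vector $e$ and use a one-hot-to-binary encoder built from a dyadic $\OR$-tree. Bit $b$ of $v$ is the $\OR$ of the odd-indexed aligned blocks of size $2^b$. This costs $\sum_b 2^{\ell-b-1}=\BO(k)$ extra gates and gives depth $\ell+\BO(1)$, so the bounds hold.

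The paper's recursion is shorter and uses the thermometer structure directly. Yours makes the size and depth bookkeeping fully explicit. Your carry-save alternative is also valid and more general, since it computes the Hamming weight of arbitrary inputs, but it relies on the standard $\BO(k)$-size, $\BO(\log k)$-depth counter as a black box.

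One correction simplifies your write-up. The recurrence $D(k)=D(k/2)+\BO(\log k)$ for your folding encoder is too pessimistic. The $\OR$-tree for the top bit runs in parallel with the recursion, and each level of the recursion adds only one layer of pairwise $\OR$s. Thus $D(k)=\max(\log k,\,1+D(k/2))=\BO(\log k)$, and that simpler encoder already suffices. Moreover, for a thermometer input each block-$\OR$ collapses to a single gate: $\bigvee_{t=a}^{c}e_t=x'_a\wedge\overline{x'_{c+1}}$.
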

\begin{IEEEproof}
  By \cref{def:unary} one flavor of the unary encoding is the inverse of the respective other, 
    $\gUnary{k}(i)=\overline{\gbUnary{k}(i)}$.
  Hence, we can compute the decoding function $(\gbUnary{k})^{-1}$ by 
    using $(\gUnary{k})^{-1}$ as follows,
    $(\gbUnary{k})^{-1}(x)=(\gUnary{k})^{-1}(\overline{x})$. 
  Thus, it is equivalent to computing output
  \begin{equation*}
    o=\gBin{\ell}((\gUnary{k})^{-1}(\pi\ \XOR\ x))\,,
  \end{equation*}
  where we compute the $\XOR$ of $\pi$ and every bit of $x$.

  The translation $\gBin{\ell}((\gUnary{k})^{-1}(\cdot))$
    can be performed by a circuit that is recursively defined.
  W.l.o.g.\ we assume that $k+1$ is a power of $2$.\footnote{If $k+1$ is no power of $2$, we can append a sufficient number of bits to reach the next power of $2$, without affecting asymptotic complexity.}
  Let $m=\ceil{k/2}$ be the index of the bit in the middle of $x$.
  According to $x[m]$ the circuit recursively proceeds with all bits
    left or right of $m$. 
  Accordingly, let $m_l=\floor{k/2}$ and $m_r=\ceil{k/2}+1$ be the 
    indices left and right of the middle index.
  We define the circuit $\delta_k$ that, for $k$-bit input $x$, computes
    an $\ell$-bit output.
  For $k=1$ the circuit copies the input $\delta_1 = x$.
  For $k>1$ the circuit is defined by
  \begin{align*}
    \delta_k[1]&\coloneqq x[m]\,, \\
    \delta_k[2,\ell]&\coloneqq
    \MUX(\delta_{\ell-1}(x[1,m_l]),\delta_{\ell-1}(x[m_r,k]),x[m])\,.
  \end{align*}
  The circuit $\delta_k$ implements the unary to binary 
    translation.
  If $x[m]=1$, by \cref{def:unary} we know that $x$ encodes a value greater 
    or equal to $\ceil{k/2}$.
  Hence, the most significant bit of the binary output is set to $1$.
  If $x[m]=0$, $x$ encodes a value smaller than $\ceil{k/2}$ and
    the most significant bit is set to $0$.
  If $x[m]=1$ all bits to the left of index $m$ are $1$, and the recursive call requires bits $m_r$ to $k$.
  Similarly, if $x[m]=0$ all bits to the right of $m$ are $0$, and the recursive call requires bits $0$ to $m_l$. 
  Correctness of the unary to binary translation circuit follows by an inductive argument.

  The circuit computing the translation can be implemented by a layer
    of $k$ many $\XOR$-gates and a tree of multiplexers ($\MUX$).
  As the $\MUX$ has constant size and depth the claim follows.
\end{IEEEproof}

Using binary adders of size $n$ and $\log(k+1)$ we can add the two
  parts separately.
The carry produced by the unary parts has to be forwarded to the addition
  of the BRGC parts.
The carry produced by the BRGC parts indicates an overflow of the circuit,
  we can forward it to output $\ovf$.

After the addition, it remains to translate both parts back to their
  respective encoding.
Translation of the BRGC parts is easy because binary codes can be translated
  to BRGC with a single layer of $\XOR$ gates.

\begin{lemma}[bin $\rightarrow$ BRGC]\label{lem:bintobrgc}
  There is a circuit of size $\BO(n)$ and depth $\BO(1)$ that, given input $x\in\IB^n$,
    computes the translation to BRGC.
  That is, it computes the output
    $o = \gBRGC{n}((\gBin{n})^{-1}(x))\,.$
\end{lemma}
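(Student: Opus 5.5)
The plan is to read the circuit off the second identity of \cref{obs:translate}. For every $a\in[2^n]$ and every index $i$, that identity gives $(\gBRGC{n}(a))_i = (\gBin{n}(a))_{i-1}\ \XOR\ (\gBin{n}(a))_i$. For the most significant position $i=1$ we use the convention $(\gBin{n}(a))_0\coloneqq 0$, so that bit is simply copied: $(\gBRGC{n}(a))_1=(\gBin{n}(a))_1$. Because $\gBin{n}$ is a bijection from $[2^n]$ onto $\IB^n$, every input $x\in\IB^n$ equals $\gBin{n}(a)$ for the unique value $a=(\gBin{n})^{-1}(x)$. The desired output $o=\gBRGC{n}(a)$ is therefore given bitwise by $o_1=x_1$ and $o_i = x_{i-1}\ \XOR\ x_i$ for $i\in\{2,\ldots,n\}$.

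The circuit is a single layer of $n-1$ $\XOR$ gates, plus a wire for $o_1$. Each $\XOR$ built from $\AND$, $\OR$ and $\NOT$ has constant size and depth. This gives total size $\BO(n)$ and depth $\BO(1)$, and correctness follows directly from \cref{obs:translate}.

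The only real obstacle is that \cref{obs:translate} is stated without proof, so if it has to be justified I would prove it by induction on $n$ using \cref{def:BRGC}. For $a<2^{n-1}$, both codes put a leading $0$ in front of the $(n-1)$-bit code of $a$. The claim then follows from the induction hypothesis; the new first $\XOR$ equals $0\ \XOR\ b_2 = b_2$, which matches the inner code's copy of its own first bit.

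For $a\ge 2^{n-1}$, the BRGC is $1\circ \gBRGC{n-1}(2^n-1-a)$. Write $a'=a-2^{n-1}$, so that $2^n-1-a = 2^{n-1}-1-a'$, and $\gBin{n-1}(2^{n-1}-1-a')$ is the bitwise complement of $\gBin{n-1}(a')$. The $\XOR$ of adjacent bits is unchanged under complementing both bits, so the inner positions $i\ge 3$ match. Position $2$ gives $1\ \XOR\ b'_1 = \overline{b'_1}$, which equals the complemented leading bit, as required. Position $1$ is $1$ in both codes.
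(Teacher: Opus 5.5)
Your proposal is correct and takes the same route as the paper: its proof likewise reads a single layer of $\XOR$ gates off the second identity of \cref{obs:translate}, which gives size $\BO(n)$ and depth $\BO(1)$. Your explicit convention $(\gBin{n}(a))_0=0$ for the leading bit, and your inductive verification of \cref{obs:translate} from \cref{def:BRGC}, are sound; the paper states that observation without proof and leaves the boundary index implicit.
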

\begin{IEEEproof}
  By \cref{obs:translate} the translation back
    can be performed by a single layer of $\XOR$ gates.
  The circuit has size $\BO(n)$ and depth $\BO(1)$.
\end{IEEEproof}

The translation to the right unary flavor depends again on the parity
  of the BRGC part.
As stated in \cref{obs:paritybit}, the parity is already indicated by the
  least significant bit of the binary codeword.

\begin{lemma}[bin $\rightarrow$ un]\label{lem:bintoun}
  There is a circuit of size $\BO(k)$ and depth $\BO(\log k)$ that, 
    for $\ell=\ceil{\log(k+1)}$, $x\in\IB^{\ell}$, and $\pi\in\IB$
    computes the binary to unary translation,
  \begin{align*}
    o =
    \begin{cases}
      \gUnary{k}((\gBin{\ell})^{-1}(x)) & \text{, if } \pi=0 \\
      \gbUnary{k}((\gBin{\ell})^{-1}(x)) & \text{, if } \pi=1\,.
    \end{cases}
  \end{align*}
\end{lemma}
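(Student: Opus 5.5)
The plan mirrors the proof of \cref{lem:untobin}: first reduce to a single flavor via the identity $\gbUnary{k}(v)=\overline{\gUnary{k}(v)}$ from \cref{def:unary}. It then suffices to build a circuit computing $u=\gUnary{k}((\gBin{\ell})^{-1}(x))$. The output is then obtained as $o=\pi\ \XOR\ u$, bitwise. This final layer consists of $k$ gates and has constant depth.

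For the core translation, the value $v=(\gBin{\ell})^{-1}(x)$ is assumed to lie in $[k+1]$, since only values arising from the addition are fed in. Output bit $t\in\{1,\ldots,k\}$ of $\gUnary{k}(v)$ equals $1$ iff $v\geq t$. I would construct this recursively, dual to the $\delta_k$ circuit. As in \cref{lem:untobin}, assume w.l.o.g.\ that $k+1=2^\ell$; otherwise pad and discard the extra outputs. Define $\epsilon_\ell$ taking $\ell$ input bits to $2^\ell-1$ output bits as follows. For $\ell=1$, $\epsilon_1(x)=x$. For $\ell>1$, let $r=\epsilon_{\ell-1}(x[2,\ell])$, which has $2^{\ell-1}-1$ bits, and set
\[
\epsilon_\ell(x)\coloneqq \bigl(r\ \OR\ x[1]\bigr)\circ x[1]\circ\bigl(r\ \AND\ x[1]\bigr),
\]
where $x[1]$ is combined bitwise with $r$.

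Correctness follows by induction on $\ell$, with two cases on the most significant bit. If $x[1]=0$, then $v<2^{\ell-1}$. In that case the upper half and the middle bit are $0$, and the lower half is $\gUnary{}$ of the lower $\ell-1$ bits. If $x[1]=1$, the lower half and the middle bit are $1$, and the upper half encodes the remainder $v-2^{\ell-1}$.

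The main obstacle is the size bound. Unrolled naively, the recursion is a single chain with one gate layer per level, giving depth $\BO(\ell)=\BO(\log k)$. Its size satisfies $S(\ell)=S(\ell-1)+\BO(2^{\ell})$, which is $\BO(2^\ell)=\BO(k)$. So the recursion already meets both bounds, and only the geometric sum needs to be verified. An alternative is to compute each output bit $t$ as a comparator $v\geq t$. That approach costs $\BO(k\log k)$, so I would stick with the recursion. Combining the recursion with the $\XOR$ layer yields size $\BO(k)$ and depth $\BO(\log k)$.
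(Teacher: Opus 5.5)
Your proposal is correct and essentially matches the paper's proof. The paper also reduces to $\gUnary{k}$ via a bitwise $\XOR$ with $\pi$, and recurses on the most significant bit with $\MUX(y\circ 0^{2^{\ell-1}},1^{2^{\ell-1}}\circ y,x_1)$; this is exactly your $\epsilon_\ell$, with the multiplexer against constant halves written as \OR/\AND\ gates. Your explicit accounting $S(\ell)=S(\ell-1)+\BO(2^\ell)$ justifies the $\BO(k)$ size bound more carefully than the paper's brief remark that one $\MUX$ is added per recursive call.
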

\begin{IEEEproof}
  The circuit is similar to the circuit in \cref{lem:untobin}.
  We can compute the second flavor of unary encodings by
    using $\XOR$ as follows:
  \begin{equation*}
    o=\pi\ \XOR\ \gUnary{k}((\gBin{\ell})^{-1}(x))\,.
  \end{equation*}
  W.l.o.g.\ we assume that $k+1$ is a power of $2$.\footnotemark[9]
  We define a recursive circuit $\delta_\ell$
    that computes the unary encoding.
  For $\ell=1$ the circuit is defined by $\delta_\ell(x)=x$.
  For $\ell>1$, let $\delta_{\ell-1}(x[2,\ell])=y$ be the result of the recursive subcircuit.
  The circuit adds $2^{\ell-1}$ many $0$s or $1$s to $y$ as follows:
  \begin{align*}
    \delta_\ell\coloneqq
    \MUX(y \circ 0^{2^{\ell-1}} , 1^{2^{\ell-1}} \circ y,x_1)\,,
  \end{align*}
  The circuit computes the binary to unary translation.
  If the $i$th bit in the binary word is set to $1$, then 
    the circuit adds $2^{i-1}$ many $1$s to the left of the output.
  Hence, the total number of $1$s in the output is $\sum_{i=1}^\ell 2^{i-1}\cdot x_i$.
  This matches the value encoded in binary.

  The circuit adds one $\MUX$ for each recursive call and one layer of $\XOR$ gates
    at the end.
  Hence, the circuit has asymptotically linear size and logarithmic depth in $k$.
\end{IEEEproof}

Next, we show that addition by translating each part to binary,
  adding parts separately, and translating back to the 
  respective encoding computes the addition of two hybrid codewords.
The $\BLM$ addition scheme implements the addition of hybrid codes.

\begin{lemma}\label{lem:stable_add}
  Given $x,y\in\gHyb{n,k}([M])$ the addition scheme $\BLM$ is correct;
  that is $\BLM(x,y) = x +_{\gHyb{}} y\,.$
\end{lemma}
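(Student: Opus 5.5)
Since $x,y\in\gHyb{n,k}([M])$ are stable codewords, no map step is needed. The natural route is to track the integer values $a=(\gHyb{n,k})^{-1}(x)$ and $b=(\gHyb{n,k})^{-1}(y)$ through the circuit $\BLM$ and show the output encodes $a+b$, with $\ovf=1$ exactly when $a+b\ge M$. By the decoding formula \eqref{eqn:dec} we write $a=a_g(k+1)+a_u$ and $b=b_g(k+1)+b_u$. Here $a_g=(\gBRGC{n})^{-1}(x_g)$, $b_g=(\gBRGC{n})^{-1}(y_g)$, and $a_u,b_u\in[k+1]$ are the unary values, decoded with the flavor chosen by $\parity(x_g)$, resp.\ $\parity(y_g)$. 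The proof is then a chain of the translation lemmas, followed by one arithmetic identity.

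First, by \cref{lem:brgctobin} the BRGC parts become $\gBin{n}(a_g)$ and $\gBin{n}(b_g)$. By \cref{obs:paritybit} the last bit of each equals $\parity(x_g)$, resp.\ $\parity(y_g)$. This is exactly the flavor selector $\pi$ required in \cref{lem:untobin}, so the unary translators output $\gBin{\ell}(a_u)$ and $\gBin{\ell}(b_u)$.

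Next comes the unary addition. Here the adder cannot simply be a plain $\ell$-bit binary adder with its natural carry, because the radix is $k+1$, not $2^\ell$ (unless $k+1=2^\ell$, the case assumed w.l.o.g.\ in the footnote). I would set up the unary stage as computing $s_u=(a_u+b_u)\bmod(k+1)$ together with the carry $c=[a_u+b_u\ge k+1]$. When $k+1=2^\ell$ this is exactly the $\ell$-bit adder's output and carry-out. Otherwise one adds a compare-and-subtract step, which costs $\BO(\log k)$ and does not affect the asymptotics. The BRGC-stage binary adder then computes $a_g+b_g+c$. Its carry-out is $1$ iff $a_g+b_g+c\ge 2^n$, and we argue this holds iff $a+b\ge 2^n(k+1)=M$. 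Indeed, $a+b=(a_g+b_g+c)(k+1)+s_u$ with $0\le s_u\le k$, so $a+b\ge M$ iff $a_g+b_g+c\ge 2^n$. This settles the $\ovf$ flag.

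In the no-overflow case, let $s_g=a_g+b_g+c<2^n$. By \cref{lem:bintobrgc} the first output part is $\gBRGC{n}(s_g)$. The least significant bit of $\gBin{n}(s_g)$ is forwarded as $\pi$, and by \cref{obs:paritybit} it equals $\parity(\gBRGC{n}(s_g))$. Hence \cref{lem:bintoun} outputs $\gUnary{k}(s_u)$ or $\gbUnary{k}(s_u)$ according to that parity. By \cref{def:hybrid} with $i=s_g(k+1)+s_u=a+b$, the output $\gBRGC{n}(s_g)\circ(\text{that unary word})$ is exactly $\gHyb{n,k}(a+b)=x+_{\gHyb{}}y$.

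The main obstacle is the carry between the two parts. One has to make precise that the unary-part adder produces the residue modulo $k+1$ and the corresponding carry, rather than an ordinary binary carry. One also has to ensure that the parity feeding the final unary translation comes from the sum's binary representation and not from the inputs'. Everything else is direct invocation of the lemmas above.
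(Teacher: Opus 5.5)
Your proof is correct and takes essentially the same route as the paper's: you split each value via the decoding formula and push the parts through \cref{lem:brgctobin}, \cref{lem:untobin}, the two adders, \cref{lem:bintobrgc} and \cref{lem:bintoun}, with \cref{obs:paritybit} supplying the flavor bits and the unary-stage carry feeding the BRGC-stage adder. Your version is tighter than the paper's own argument, which claims a plain binary adder yields $(i+j)\bmod(k+1)$ with carry condition ``$>k+1$'' (this tacitly requires $k+1=2^\ell$, and the condition should be $\geq$), never checks the overflow flag, and leaves implicit that the output flavor must come from the parity of the summed BRGC part.
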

\begin{IEEEproof}
  We prove the statement for the BRGC part and the unary part
    separately.
  Let $i = (\gHyb{n,k})^{-1}(x)$ and $j = (\gHyb{n,k})^{-1}(y)$,
    then we show that $\BLM(x,y)=\gHyb{n,k}(i+j)$.
  First, we show that $\BLM(x,y)_u = (x +_{\gHyb{}} y)_u$.
  We obtain four cases depending on the parity of $x_g$ and $y_g$.
  For each case, we have to choose the translation according to the
    flavor of the unary encoding.
  According to \cref{lem:brgctobin} and \cref{obs:paritybit}
    we obtain the correct parity bits of $x_g$ and $x_y$.
  Thus, after translation (see \cref{lem:untobin}) we obtain the
    binary representation of $i\bmod(k+1)$ and $j\bmod(k+1)$.
  Correctness of the addition follows from the correctness binary
    adder and we obtain the binary representation of $(i+j)\bmod(k+1)$.
  From \cref{obs:paritybit} and \cref{lem:bintoun} we can follow
    that the translation to unary chooses the right flavor.
  Hence, $\BLM(x,y)_u = \gHyb{n,k}(i+j)_u$, which proves the claim
    for the unary part.

  Second, we show that $\BLM(x,y)_g = (x +_{\gHyb{}} y)_g$.
  By \cref{lem:brgctobin} translation of the BRGC part to binary
    yields the binary representations of $\floor{i/(k+1)}$ and
    $\floor{j/(k+1)}$.
  From the addition of the unary parts we receive a carry that
    indicates whether $i\bmod(k+1) + j\bmod(k+1) > k+1$.
  Hence, the binary addition of the two inputs and the carry
    results in the binary representation of $\floor{(i+j)/(k+1)}$.
  Finally, by \cref{lem:bintobrgc}, we obtain that
  $\BLM(x,y)_g = \gBRGC{n}(\floor{(i+j)/(k+1)}) = \gHyb{n,k}(i+j)_n$.
\end{IEEEproof}

Given correct implementations of each of the translation circuits, \cref{lem:stable_add}
  entails that there is a circuit for the addition of two hybrid codes. 

\begin{corollary}
  There is a circuit of size $\BO(n+k)$ and depth $\BO(\log n + \log k)$ that implements
    the addition of hybrid codes.
\end{corollary}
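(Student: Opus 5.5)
The plan is to assemble the $\BLM$ scheme of \cref{fig:addition} from the translation circuits already established, then add up their sizes and depths. By \cref{lem:stable_add}, this scheme computes $x +_{\gHyb{}} y$ on all stable codewords. So once each block has a concrete circuit with the claimed complexity, the composite circuit implements hybrid addition, and only the complexity accounting remains.

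The circuit has the following stages, in order.
\begin{enumerate}
\item Apply the BRGC$\to$bin circuit of \cref{lem:brgctobin} to $x_g$ and to $y_g$. Each copy has size $\BO(n)$ and depth $\BO(\log n)$. By \cref{obs:paritybit}, the least significant output bits give the parities $\pi_x,\pi_y$ at no extra cost.
\item Feed $(x_u,\pi_x)$ and $(y_u,\pi_y)$ into two un$\to$bin circuits of \cref{lem:untobin}. Each has size $\BO(k)$ and depth $\BO(\log k)$ and produces $\ell=\ceil{\log(k+1)}$-bit binary numbers.
\item Add the two $\ell$-bit numbers. The sum lies in $[0,2k]$, so a step is needed that I have not yet pinned down. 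It must reduce the sum modulo $k+1$ and produce the carry into the BRGC part, namely whether the sum is at least $k+1$. This needs one comparison with the constant $k+1$ and a conditional subtraction. Both can be done with a parallel-prefix adder/comparator of size $\BO(\log k)$ and depth $\BO(\log\log k)$~\cite{ladner1980parallel}.
\item Add the two $n$-bit binary BRGC values together with this carry using a parallel-prefix adder of size $\BO(n)$ and depth $\BO(\log n)$. Its carry-out is $\ovf$.
\item Translate the $n$-bit sum back with \cref{lem:bintobrgc}, at size $\BO(n)$ and depth $\BO(1)$.
\item Take the new parity from the least significant bit of the binary sum (\cref{obs:paritybit}) and use \cref{lem:bintoun} to produce the unary part, at size $\BO(k)$ and depth $\BO(\log k)$.
\end{enumerate}

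Sizes add up to $\BO(n)+\BO(k)+\BO(\log k)=\BO(n+k)$. For depth, the critical path runs through the unary conversion, the small adder, the carry into the BRGC adder, and finally the parity-dependent bin$\to$un step. This gives $\BO(\log k)+\BO(\log\log k)+\BO(\log n)+\BO(1)+\BO(\log k)=\BO(\log n+\log k)$.

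One point needs care: the unary conversion depends on the parity bits, which come from the depth-$\BO(\log n)$ prefix-XOR. Since these depths compose sequentially, this only adds $\BO(\log n)$ and stays within the bound.

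The main obstacle is the modulo-$(k+1)$ step. \Cref{lem:stable_add} tacitly assumes that the carry and the reduced residue are produced correctly. If $k+1$ is not a power of two, a plain $\ell$-bit adder does not do this. I would resolve it with the comparator plus conditional subtraction described in stage 3, and check that its cost, logarithmic in $\ell$, is absorbed in the bounds above.
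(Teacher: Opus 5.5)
Your proposal is correct and follows the paper's own route: assemble $\BLM$ from the four translation lemmas plus two binary adders, take correctness from \cref{lem:stable_add}, and add up the sizes and depths. The modulo-$(k+1)$ step you flag is a real point the paper glosses over. It treats the output and carry-out of the $\ceil{\log(k+1)}$-bit adder as $(i+j)\bmod(k+1)$ and the carry into the BRGC part, which is only right when $k+1$ is a power of two (the w.l.o.g.\ assumption in the translation lemmas). Your comparison with $k+1$ plus conditional subtraction fixes this for general $k$ at only $\BO(\log k)$ size and $\BO(\log\log k)$ depth.
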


\begin{remark}
  This scheme is far from being mc.
  Examples in \cref{sec:intro} and \cref{sec:codes} show that binary code is not even
    $1$-preserving.
  Because the $\BLM$ addition scheme heavily relies on binary codes,
    it suffers from the drawbacks of binary codes.
\end{remark}

\subsection{Addition of Extended Codewords}\label{sec:addextcodes}

We can add extended codewords by applying generic constructions for mc circuits on the above addition scheme.

When using such a construction, we need to take into account that the circuit has to
  handle inputs that are no codewords, as the resolution of an extended codeword might result in a non-codeword.
Thus, we use a circuit that implements $\gtHyb$ (from the proof of
  \cref{lem:recover}) to obtain codewords that we can add using the $\BLM$ scheme.
The extension $\gtHyb$ applies $\mUnary{k}$ to
  the unary part, given the parity of the BRGC parts.
Hence, we need a circuit that applies $\mUnary{k}$ to $x_u$ and $y_u$ before
  the binary translation.

The circuit can be built by computing all four cases of \cref{def:map}
  and selecting the correct output by a multiplexer using inputs $x_{\ceil{k/2}}$
  and $\pi$.
\begin{lemma}\label{lem:mapcircuit}
  Given inputs $x\in\IB^k$ and $\pi\in\IB$, there is a
    circuit of size $\BO(k)$ and depth $\BO(\log k)$ that
    computes $\mUnary{k}(\pi,x)$.
\end{lemma}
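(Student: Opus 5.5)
We build the circuit exactly as indicated before the statement: compute the four candidate outputs of \cref{def:map} in parallel and select among them with multiplexers controlled by $\pi$ and $x_{\ceil{k/2}}$. Each candidate has the form $1^{\ell}0^{k-\ell}$ or $0^{\ell}1^{k-\ell}$, where $\ell$ is one of $\lmin{0}-1$, $\lmax{1}$, $\lmax{0}$, $\lmin{1}-1$. So the task reduces to producing each of these thermometer words directly, in unary, without ever computing the index $\ell$ in binary.

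The key observation is that every such thermometer word is a prefix or suffix $\OR$ (or $\AND$) of $x$ or of $\overline{x}$. Bit $t$ of $1^{\lmax{1}}0^{k-\lmax{1}}$ equals $1$ iff some position $s\geq t$ has $x_s=1$. Hence this word is the suffix-$\OR$ vector $(\bigvee_{s\geq t}x_s)_{t}$. Bit $t$ of $1^{\lmin{0}-1}0^{k-\lmin{0}+1}$ equals $1$ iff $x_s=1$ for all $s\leq t$, so this word is the prefix-$\AND$ vector $(\bigwedge_{s\leq t}x_s)_t$. By \cref{obs:neg_mapping}, the two cases with $\pi=1$ are the bitwise negations of the $\pi=0$ cases applied to $\overline{x}$. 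Concretely, $0^{\lmax{0}}1^{k-\lmax{0}}$ is the negated suffix-$\OR$ of $\overline{x}$, and $0^{\lmin{1}-1}1^{k-\lmin{1}+1}$ is the negated prefix-$\AND$ of $\overline{x}$.

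All four vectors come from parallel prefix computations of associative operations ($\AND$, $\OR$) over $k$ bits. By \cite{ladner1980parallel}, each of these circuits has size $\BO(k)$ and depth $\BO(\log k)$, with an extra layer of $\NOT$ gates where needed.

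Finally, a layer of $k$ constant-size $4{:}1$ multiplexers, one per output bit, selects the correct vector according to $(\pi, x_{\ceil{k/2}})$. This adds size $\BO(k)$ and depth $\BO(1)$, so the totals are size $\BO(k)$ and depth $\BO(\log k)$.

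The one delicate step is checking the boundary conventions in \cref{def:map}, namely $\lmin{b}=k+1$ and $\lmax{b}=0$ when no occurrence of $b$ exists. These must match the prefix and suffix formulas exactly, with no off-by-one error. If $x$ contains no $0$, the prefix-$\AND$ vector is $1^k$, which agrees with $1^{\lmin{0}-1}$ for $\lmin{0}=k+1$. If $x$ contains no $1$, the suffix-$\OR$ vector is $0^k$, which agrees with $\lmax{1}=0$. The negated cases follow the same way. We verify these identities bitwise, position by position, and that verification completes the proof.
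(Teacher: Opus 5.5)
Your proposal is correct and follows essentially the same route as the paper. Both build four parallel prefix circuits and select among them with a $k$-bit $4{:}1$ multiplexer controlled by $\pi$ and $x_{\ceil{k/2}}$. Your four candidates are prefix-AND of $x$, suffix-OR of $x$, negated suffix-OR of $\overline{x}$ (i.e.\ suffix-AND of $x$) and negated prefix-AND of $\overline{x}$ (i.e.\ prefix-OR of $x$), which is exactly the paper's table of PPC operations and reading directions. Your bitwise characterization of each thermometer word and the boundary-convention check are details the paper leaves implicit, and both are right.
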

\begin{IEEEproof}
  The mapping $\mUnary{k}(\pi,x)$ is defined in \cref{def:map}. 
  Each of the four cases can be computed
    by a parallel prefix circuit (PPC)~\cite{ladner1980parallel}.
  However, the cases require different strategies using either 
    $\AND$ or $\OR$ as PPC operation and reading the input 
    left-to-right or right-to-left.
  We list each strategy in the following table.

  \begin{center}
    \begin{tabular}{cc|cc}
      $\pi$ & $(x_u)_{\ceil{k/2}}$ & operation & direction \\ \hline
      0 & 0 & PPC-$\AND$ & left-to-right\\ 
      0 & 1 & PPC-$\OR$ & right-to-left\\
      1 & 0 & PPC-$\AND$ & right-to-left\\
      1 & 1 & PPC-$\OR$ & left-to-right\\
    \end{tabular}
  \end{center}
  
  We can compute all four cases in parallel.
  The result is then chosen by a multiplexer with select
    bits $\pi$ and $(x_u)_{\ceil{k/2}}$.
  The full circuit consists of four PPC circuits in parallel
    and a 
    $k$-bit (4:1)-$\MUX$.
  Hence, the circuit has asymptotically linear size and 
    logarithmic depth in $k$.
\end{IEEEproof}

We can use known constructions for mc circuits to transform the circuit presented in this section into an adder that performs recoverable addition.
\begin{corollary}[of~\cite{ikenmeyer18complexity}]\label{coro:ikenmeyer}
  For $n,k\in\IN$ there is a circuit of size $(n+k)^{\BO(k)}$ and depth $\BO(k\log n + \log k)$,
    that implements $\ceil{k/2}$-recoverable addition.
\end{corollary}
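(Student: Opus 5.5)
The plan is to compose the Boolean circuit built in this section with the generic construction of~\cite{ikenmeyer18complexity}. That construction turns any circuit $C$ implementing $f\colon\IB^N\to\IB^m$ into a circuit of size $N^{\BO(k')}|C|$ that computes $f_\mfu(x)$ on every input with at most $k'$ metastable bits. We use it as a black box, so the work lies in choosing $f$ and $k'$ and then checking that the resulting metastable closure meets the conditions of Definition~\ref{lem:addition}.

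\textbf{Choice of $f$.} On inputs $x,y\in\IB^{n+k}$, $f$ first applies $\gtHyb$ to each operand. This means running the circuit of Lemma~\ref{lem:mapcircuit} on the unary part. The parity $\pi$ it needs is taken from the least significant bit of the BRGC-to-binary translation (Lemma~\ref{lem:brgctobin}, Observation~\ref{obs:paritybit}). The two resulting codewords are then fed to $\BLM$. By Lemmas~\ref{lem:brgctobin}, \ref{lem:untobin}, \ref{lem:bintobrgc}, \ref{lem:bintoun} and~\ref{lem:mapcircuit}, $f$ has a circuit of size $\BO(n+k)$ and depth $\BO(\log n+\log k)$.

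\textbf{Bounding the number of metastable bits.} Take inputs of imprecision $p_x+p_y\le\ceil{k/2}$. By Observations~\ref{obs:samecolumn} and~\ref{obs:twocolumn}, an extended codeword of imprecision $p$ has at most $p+1$ metastable bits. The extra bit is the single BRGC $\mfu$ that appears when the interval crosses a block boundary. The total number of metastable input bits is therefore $k'\le\ceil{k/2}+2=\BO(k)$. The size bound $(n+k)^{\BO(k)}$ then follows. For depth, the construction evaluates copies of $C$ on partially resolved inputs and combines them with $\OR$/$\AND$ trees over $(n+k)^{\BO(k)}$ terms. This adds depth $\BO(k\log(n+k))$, which together with the depth of $C$ gives the stated $\BO(k\log n+\log k)$, using $n\ge k$.

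\textbf{Correctness of the closure.} This is the main obstacle. For each resolution $(x',y')\in\res(x)\times\res(y)$, Lemma~\ref{lem:recover} gives $\gtHyb(x')\in I_x$ and $\gtHyb(y')\in I_y$. By Lemma~\ref{lem:stable_add}, $f(x',y')=\gHyb{n,k}(a+b)$ for some $a\in I_x$, $b\in I_y$. Conversely, every $c\in\intvl{i+i',j+j'}$ is attained, because $\res(x)$ contains every codeword of $I_x$ and these are fixed by $\gtHyb$. Hence the set of outputs $\{f(x',y')\}$ is exactly $\gHyb{n,k}(\intvl{i+i',j+j'})$. Its superposition is therefore the extended codeword of that interval, which gives range $\intvl{i+i',j+j'}$ and imprecision $p_x+p_y$. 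When $j+j'<M$, the overflow output is $0$ on every resolution, so $\ovf$ is stable $0$. When $j+j'\ge M$, Definition~\ref{lem:addition} leaves $s$ arbitrary, and $\ovf=1$ at least on the resolutions that overflow. The delicate points are the exact ``$=$'' between the output set and the interval, and the overflow case when only part of the resolutions overflow. Here I would argue that the definition only constrains the non-overflow case.
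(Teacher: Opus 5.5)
Your proposal is correct and follows the paper's proof. Like the paper, you apply the Ikenmeyer et al.\ construction to $\BLM$ preceded by the $\mUnary{k}$-based map $\gtHyb$, and use $\ceil{k/2}$-recoverability (together with the fact that $\gtHyb$ fixes codewords) to show the metastable closure is exactly the extended codeword of $\intvl{i+i',j+j'}$, spelling out what the paper's one-line argument leaves implicit.

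Two small remarks. First, an extended codeword of imprecision $p$ has exactly $p$ metastable bits, since at a block boundary the unary part has only $p-1$. Your $p+1$ bound is therefore loose but harmless. Second, in the partial-overflow case the closure necessarily outputs $\ovf=\mfu$. That is a defect of the definition's literal requirement $\ovf=1$, not of your argument.
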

\begin{proof}
    We apply the construction of  Ikenmeyer et al.\ on the combination of $\BLM$ and $\mUnary{k}$. By \cite{ikenmeyer18complexity}, this implements the metastable closure of the $\gHyb{n,k}$-addition, granted that no more than $k$ bits are unstable. Since the $\gHyb{n,k}$ code is $k$-preserving and $\ceil{k/2}$-recoverable, this is a correct implementation of $\ceil{k/2}$-recoverable addition.
\end{proof}

An example circuit execution for $n=5$ and $k=3$ is given in \cref{tab:simu_mc}.
The example shows the addition of codes with imprecision up to $2$.
We implemented the circuit in VHDL and ran a simulation using $\mfx$ to simulate
  the worst-case behavior of a metastable signal.

Similarly, we obtain a more efficient circuit by relying on masking registers.
\begin{corollary}[of~\cite{friedrichs18containing}]
  For $n,k\in\IN$ there is a circuit of size $\BO(k(n+k))$ and depth $\BO(k\log n + \log k)$ that implements $\ceil{k/2}$-recoverable addition.
\end{corollary}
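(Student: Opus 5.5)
The plan is to reuse the Boolean circuit from the previous corollary: the composition of the $\mUnary{k}$-mapping circuit of \cref{lem:mapcircuit} with the $\BLM$ scheme of \cref{lem:stable_add}. This composition has size $\BO(n+k)$ and depth $\BO(\log n+\log k)$. We then replace the generic construction of~\cite{ikenmeyer18complexity} by the masking-register construction of~\cite{friedrichs18containing}. Recall that the latter takes an arbitrary Boolean circuit $C$ and produces a circuit implementing the metastable closure of the same function, provided the input has at most $k$ metastable bits. Its size overhead is a factor $\BO(k)$, roughly one copy of $C$ per potentially unstable input position, combined by masking registers and a selection/superposition tree.

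The steps, in order, are as follows. First, fix $C$ as the composition $\BLM\circ(\mathrm{id}_g,\mUnary{k})$ applied to both addends. By \cref{lem:recover} and \cref{lem:stable_add}, on every stable input $C$ outputs $\gHyb{n,k}$ of the sum of the decoded values under $\gtHyb$. Second, invoke~\cite{friedrichs18containing} to obtain a circuit $C'$ with $C'(x)=C_{\mfu}(x)$ for all $x$ with at most $k$ metastable bits. Here we use that $p_x+p_y\le k$ and that extended codewords of the hybrid code with imprecision $p$ contain at most $p$ (or $p+1$) $\mfu$'s, so the number of $\mfu$'s stays in range; this count is read off \cref{obs:samecolumn} and \cref{obs:twocolumn}, after adjusting $k$ by a constant if needed. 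Third, bound the complexity: the size is $\BO(k)\cdot\BO(n+k)=\BO(k(n+k))$. For depth, the construction adds a combining stage of depth $\BO(\log k)$ on top of $C$, and the masking stages are pipelined through the $\BO(\log n)$-depth prefix circuits. This gives depth within $\BO(k\log n+\log k)$, which is the weaker bound stated, so no tight depth analysis is needed.

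Fourth, argue correctness as in the previous corollary. Each resolution of the inputs is mapped by $\gtHyb$ into the respective ranges $I_x,I_y$, so every resolution of the output is a codeword of a value in $\intvl{i+i',j+j'}$. Conversely, every such value is attained, since $I_x+I_y=\intvl{i+i',j+j'}$. The output is therefore exactly the superposition of $\gHyb{n,k}$ over that interval, which is the required extended codeword with imprecision $p_x+p_y$. The overflow bit is handled the same way.

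The main obstacle is the precise interface of~\cite{friedrichs18containing}: the size and depth overhead it states for bounded numbers of metastable inputs, and whether the bound on $\mfu$'s matches $k$. If the construction needs the count of $\mfu$'s rather than the imprecision, I would first check from \cref{obs:twocolumn} that at most $p+1$ metastable bits occur per addend, and absorb the constant into the $\BO$.
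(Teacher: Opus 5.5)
Your proposal is correct and takes the paper's route. The paper likewise applies the masking-register construction of~\cite{friedrichs18containing} as a black box to the composition of $\mUnary{k}$ with $\BLM$, and argues correctness exactly as in \cref{coro:ikenmeyer}, from $k$-preservation and $\ceil{k/2}$-recoverability of $\gHyb{n,k}$.

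Two small tightenings. First, your hedge on the number of $\mfu$'s is unnecessary: by \cref{obs:samecolumn} and \cref{obs:twocolumn}, where $\alpha+\beta=p-1$ plus the single $\mfu$ in the BRGC part, an extended codeword of imprecision $p$ has exactly $p$ metastable bits. Second, the recoverability step in your fourth paragraph needs $p_x,p_y\le\ceil{k/2}$ individually. This is exactly what the hypothesis $p_x+p_y\le\ceil{k/2}$ of $\ceil{k/2}$-recoverable addition provides, so it should be stated rather than the weaker $p_x+p_y\le k$.
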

\begin{IEEEproof}
Analogous to the proof of~\Cref{coro:ikenmeyer}, but using the construction from \cite{friedrichs18containing}.
\end{IEEEproof}

%% file: related.tex
\section{Further Related Work}\label{sec:related}

\subsection*{Imprecise Codes}

Gray codes are widely used as they offer $1$-recoverability,
  e.g., to recover from a small measurement error.
In retrospect, Friedrichs~et~al.~\cite{friedrichs18containing} point out that Gray
  codes offer $1$-recoverability.
However, to our knowledge, we are the first to define the $k$-recoverability
  property.
In~\cite{fuegger17aware}, it is shown that Gray code time measurements
  can be obtained by practical circuits.
Astonishingly, the measurement circuit produces $k$-recoverable
  codes identical to those presented in this work, as an intermediate result before conversion
  to $1$-recoverable Gray code.
These codes emerge naturally in this context for a completely different
  reason, suggesting that they are of general significance.

\subsection*{Metastability Containing Circuits}

Beyond the classical work on metastability-containing (a.k.a.\ hazard-free) circuits, there has
been clear recent interest in this area. Ikenmeyer et al.~\cite{ikenmeyer18complexity} study the
complexity of hazard-free circuits, showing, among other things, that hazard-free circuit complexity naturally generalizes monotone circuit complexity and that there can be large gaps between Boolean and hazard-free complexity. A central design tool in their work is a general transformation that takes a Boolean circuit and an upper bound $k$ on the number of potentially metastable input bits and produces a circuit that is mc for all such inputs. The size overhead of this construction is $n^{O(k)}$, e.g., for a constant $k$ the blow-up is polynomial in the input size. Speculation over a fixed and bounded number of unstable bits is also at the core of the metastability-tolerant design style of Tarawneh et al.~\cite{tarawnehFL17}, who integrate such speculative computing into synchronous state machines; their work can be viewed as a systems-level instance of the same basic idea that inspired
the transformation by Ikenmeyer et al.

There is also a line of recent work on the size and structure of hazard-free circuits and
formulas. Jukna~\cite{jukna21} analyzes how the requirement of being hazard-free can increase circuit size and gives explicit functions that become significantly more expensive under this constraint. Ikenmeyer, Komarath, and Saurabh~\cite{ikenmeyerKS23} develop a hazard-free Karchmer-Wigderson framework and obtain refined bounds for hazard-free formulas, showing,
for example, that even simple functions such as the multiplexer may require substantially
larger formulas in the hazard-free setting. Most recently, London, Arazi, and Shpilka~\cite{DBLP:conf/icalp/AraziS25} study hazard-free formulas for general Boolean functions, including random functions, and identify when known lower-bound techniques are tight. Together, these works
indicate that hazard-free and metastability-aware computation is an active and evolving
research direction.

%% file: conclusion.tex
\section{Conclusion}\label{sec:conclusion}
In the presence of possible bit errors, such as metastability, binary addition becomes meaningless. We proposed a new encoding scheme and a corresponding
addition procedure that maintain precision in the face of inputs that exhibit bounded total imprecision. Our hybrid code $\gHyb{n,k}$ uses $k$ redundant
bits and is $k$-preserving, but only $\lceil k/2\rceil$-recoverable: up to $k$ imprecise or metastable input bits can be tolerated without increasing the
output imprecision, yet at most $\lceil k/2\rceil$ of them can be recovered deterministically. This asymmetry between tolerated imprecision and guaranteed
recoverability reflects an inherent
gap that we leave as an interesting direction for future work.

We gave matching lower bounds showing that, for a fixed bound $k$ on the total imprecision, our code has optimal asymptotic rate. Moreover, using known black-box constructions for metastability-containing circuits, we obtain metastability-containing and purely combinational adders that implement interval addition for these codes.  For a constant $k$, this yields only polynomial overhead in circuit size compared to standard binary adders while not (asymptotically) increasing its depth.

From an application viewpoint, our framework shows that digital addition with bounded uncertainty need not fall back to unary encodings with latency linear  
in $n$. In particular, in settings where unstable inputs arise from metastability and a limited amount of output imprecision is acceptable (such as in certain control loops), our codes and adders provide a way to compute the
correct interval sum as long as the total input imprecision stays within the prescribed bound $k$. We conjecture that, when combined with synchronizer-based
designs and/or masking storage elements, such codes can be used to improve the latency–MTBF trade-off compared to purely synchronizer-based approaches.
Making this argument quantitative, however, requires optimized  circuit implementations and post-layout analysis, which we leave for future work.